\documentclass[12pt, a4paper]{article}
\pdfoutput=1
\usepackage{jheppub}
\hyphenation{dy-na-mi-cal sys-te-ma-tic gra-vi-ty li-te-ra-tu-re
me-tric in-fi-ni-ty ma-xi-ma e-qui-va-lent e-qui-va-lence ma-ni-fold
appro-xi-ma-tion po-si-tive cos-mo-lo-gi-cal co-or-di-nate mi-ni-mum 
re-nor-ma-li-za-tion mo-dels res-pect to-po-lo-gy ge-ne-ra-lized 
be-ha-vi-or stri-king con-si-de-ra-bly des-crip-tion ma-xi-mal 
re-nor-ma-li-za-tion com-pac-ti-fi-ca-tion i-so-me-try re-pa-ra-me-te-ri-za-tion}

\usepackage{subcaption}
\usepackage[T1]{fontenc}
\usepackage{float}
\usepackage{latexsym}
\usepackage{amsmath,amstext,amssymb,amsfonts,
amscd,bm,array,multirow,amsbsy,mathrsfs,amsthm,calc}
\usepackage{t1enc}
\usepackage{indentfirst}
\usepackage{pb-diagram}
\usepackage{graphicx}
\usepackage{enumerate}
\usepackage{color}
\usepackage[all]{xy}
\usepackage{hyperref}
\hypersetup{colorlinks=false,pdfborderstyle={/S/U/W 0}}
\usepackage[usenames,dvipsnames]{xcolor}
\usepackage{url}
\usepackage{upgreek}


\theoremstyle{plain}
\newtheorem{thm}{Theorem}[section]             

\newtheorem{prop}[thm]{Proposition}

\newtheorem{lemma}[thm]{Lemma}
\newtheorem{cor}[thm]{Corollary}

\theoremstyle{definition}
\newtheorem{definition}[thm]{Definition}

\theoremstyle{remark}
\newtheorem{remark}[thm]{Remark}


\newcommand{\be}{\begin{equation*}}
\newcommand{\ee}{\end{equation*}}
\newcommand{\ben}{\begin{equation}}
\newcommand{\een}{\end{equation}}
\newcommand{\beqa}{\begin{eqnarray*}}
\newcommand{\eeqa}{\end{eqnarray*}}
\newcommand{\beqan}{\begin{eqnarray}}
\newcommand{\eeqan}{\end{eqnarray}}
\newcommand{\nn}{\nonumber}


\def\N{\mathbb{N}}
\def\Z{\mathbb{Z}}
\def\C{\mathbb{C}}
\def\R{\mathbb{R}}
\def\Q{\mathbb{Q}}
\def\H{\mathbb{H}}

\def\Crit{\mathrm{Crit}}
\def\Noncrit{\mathrm{Noncrit}}

\def\pd{\partial}

\def\reg{\mathrm{reg}}
\def\sing{\mathrm{sing}}
\def\crit{\mathrm{crit}}
\def\noncrit{\mathrm{noncrit}}

\def\Int{\mathrm{Int}}

\def\ren{{\mathrm{ren}}}
\def\Par{\mathrm{Par}}

\def\rT{\mathrm{T}}


\def\dd{\mathrm{d}}


\newcommand{\id}{\mathrm{id}}
\newcommand{\Tr}{\mathrm{Tr}}

\newcommand{\sign}{\mathrm{sign}}

\def\cA{\mathcal{A}}
\def\cB{\mathcal{B}}
\def\cC{\mathcal{C}}

\def\cG{\mathcal{G}}
\def\cH{\mathcal{H}}
\def\cL{\mathcal{L}}
\def\cM{\mathcal{M}}
\def\cN{\mathcal{N}}

\def\cS{\mathcal{S}}
\def\cX{\mathcal{X}}
\def\cU{\mathcal{U}}
\def\cV{\mathcal{V}}

\def\IR{\mathrm{IR}}

\def\cA{\mathcal{A}}

\def\Sol{\mathrm{Sol}}

\def\Proj{\mathrm{Proj}}

\def\rP{\mathrm{P}}

\def\cO{\mathcal{O}}

\def\param{{\mathrm{par}}}

\def\fS{\mathfrak{S}}

\def\h{{\bf h}}

\def\kin{\mathrm{kin}}
\def\pot{\mathrm{pot}}
\def\rj{\mathrm{j}}
\def\wdeg {\mathrm{wdeg}}

\def\graph{\mathrm{graph}}
\def\C{\mathfrak{C}}


\newcommand{\eqdef}{\stackrel{{\rm def.}}{=}}

\def\O{\mathrm{O}}

\def\rP{\mathrm{P}}

\def\bvert{\big{\vert}}

\def\Pot{\mathrm{Pot}}

\def\fH{\mathfrak{H}}
\def\v{\mathrm{v}}

\def\eff{\mathrm{eff}}
\def\r{\mathfrak{r}}

\def\t{\mathrm{t}}
\def\h{\mathfrak{h}}



\def\vol{\mathrm{vol}}

\def\cT{\mathcal{T}}

\def\H{\mathrm{H}}
\def\re{\mathrm{e}}

\title{On the slow roll expansion of one-field cosmological models}

\author{Calin Iuliu Lazaroiu}

\affiliation{Horia Hulubei National
  Institute of Physics and Nuclear Engineering,\\
  Reactorului 30, Bucharest-Magurele, 077125, Romania\\ 
  }

\emailAdd{lcalin@theory.nipne.ro}

\abstract{We study the infrared scale expansion of single field
  cosmological models using the Hamilton-Jacobi formalism, showing
  that its specialization at unit scale parameter recovers the slow
  roll expansion. In particular, we show that the latter coincides
  with a Laurent expansion of the Hamilton-Jacobi function in powers
  of the Planck mass, whose terms are controlled by certain
  recursively-defined polynomials. This allows us to give an explicit
  recursion procedure for constructing all higher order terms of the
  slow roll expansion. We also discuss the corresponding effective
  potential and the action of the universal similarity group.}

\begin{document}

\maketitle

\pagebreak

\section*{Introduction}

An important problem in scalar field cosmology is to develop useful
approximation methods for various regimes. Ideally, such methods should
have a sound conceptual basis which is grounded in the dynamical
features of the model. They should also be {\em effective} in the
sense that they lead to systematic higher order expansion schemes
which are computable in an efficient manner.

In this paper, we consider the scale expansion for single field scalar
cosmology, following the proposal made in \cite{ren,grad} for the much
more general case of multifield models. We show that the scale
expansion recovers the slow roll expansion of single field models
while presenting it in a new conceptual light, which allows one to
give an explicit recursion procedure for computing its higher order
terms.

An important special feature of one-field models (as compared to their
multifield counterparts) is the existence of a straightforward
description of cosmological flow orbits through the so-called {\em
  Hamilton-Jacobi formalism} of Salopek and Bond \cite{SB}. This
encodes the geometry of regular flow orbits through regular {\em
  Hamilton-Jacobi functions}, which are solutions of a certain first
order non-autonomous and non-linear ODE. In this approach, the
cosmological curves of the model can be recovered (up to a constant
translation of cosmological time) by integrating the {\em speed
  equation} of a given flow orbit. The formalism was used for various
purposes in references \cite{LPB, Alvarez,Handley} (see
\cite[Sec. 18.6]{LL} for a brief introduction) and we give a
mathematically precise treatment in the present work.

The scale transformations and universal similarities of \cite{ren}
have a natural action on Hamilton-Jacobi functions, which leads to the
consideration of {\em scale expansions} of the latter. These
expansions can be constructed by looking for solutions of an
appropriate scale deformation of the Hamilton-Jacobi equation which
admit an appropriate expansion in the scale parameter. In this paper,
we focus on the {\em infrared expansion}, which is relevant in the
slow motion regime. We show that the latter is essentially equivalent
with the {\em small Planck mass expansion}, which produces an {\em
  infrared Hamilton-Jacobi function} $\fH_\IR$, distinguished by the
property that it admits a Laurent expansion in powers of the Planck
mass. This follows from the structure of the cosmological and
Hamilton-Jacobi equations and matches the Wilsonian expectation that
infrared dynamics is controlled by the Planck mass, which sets the
natural energy scale of the problem.

The coefficients of the Laurent expansion of $\fH_\IR$ are governed by
a recursion relation, which encodes the condition that $\fH_\IR$
formally satisfies the Hamilton-Jacobi equation. Studying this
recursion, we show that all terms can be expressed in terms of certain
universal multivariate polynomials $Q_n$ with rational coefficients
which are themselves determined by an explicit recursion. These {\em
  Hamilton-Jacobi polynomials} encode all relevant information
required for describing the expansion of $\fH_\IR$, whose coefficients
are recovered by evaluating certain universal rational functions $v_n$
constructed from $Q_n$ on the scalar potential of the model and its
successive derivatives. These results reduce the problem of
constructing $\fH_\IR$ to the recursive computation of Hamilton-Jacobi
polynomials, which can be achieved effectively on a computer. As an
application, we list the polynomials $Q_n$ and rational functions
$v_n$ up to tenth order inclusively.

We also study the homogeneity properties of $Q_n$ and $v_n$ with
respect to two relevant gradings on the algebra $\cA=\R[\N]$ of
polynomials in a countable number of variables, showing that these
properties allow one to rewrite the expansion of $\fH_\IR$ as an
infinite sum of monomials in the {\em potential slow roll parameters}
of the model as defined in \cite{LPB}. This shows that the small
Planck mass expansion is equivalent with the slow roll expansion of
\cite{LPB}, which therefore also describes the infrared expansion of
one field models in the sense of \cite{ren} up to an appropriate
rescaling of the Planck mass by the scale parameter. Performing the
corresponding substitutions to fifth order in the square of the Planck
mass, we show explicitly that our expansion recovers the fifth order
slow roll expansion as computed in \cite{LPB}. Unlike the approach of
loc. cit., which is rather involved and somewhat inefficient due to
the lack of a master recursion formula, our procedure uses the
recursive construction of Hamilton-Jacobi polynomials and hence allows
for the efficient computation of higher order terms in the slow roll
expansion. This opens the way for precision studies of one-field
cosmological models, thus removing one of the technical limitations in
the subject.

The paper is organized as follows. In Section \ref{sec:models}, we
give a mathematical review of one-field cosmological models,
introducing the notation and terminology used in latter sections. In
Section \ref{sec:sim}, we discuss the one-field realization of the
universal similarities and classical RG transformations which were
introduced in a much more general setting in reference
\cite{ren}. Section \ref{sec:HJ} gives a careful treatment of the
Hamilton-Jacobi formalism of \cite{SB} and describes the action of
universal similarities and RG transformations on Hamilton-Jacobi
functions.  Section \ref{sec:deformed} introduces the deformed
Hamilton-Jacobi equation and studies the formal infrared scale
expansion of Hamilton Jacobi functions. We express the terms of this
expansion through Hamilton-Jacobi polyomials and study their
bihomogeneity properties, showing that the infrared expansion at unit
scale parameter is equivalent with the slow roll expansion of
\cite{LPB}, which turns out to be equivalent with a Laurent expansion
in powers of the Planck mass. Section \ref{sec:conclusions} presents
our conclusions and some directions for further research. Appendix
\ref{app:jets} recalls some basic facts about jets of univariate
real-valued functions while Appendix \ref{app:list} lists the
Hamilton-Jacobi polynomials $Q_n$ and associated rational functions
$v_n$ up to order $10$.

\section{Single field cosmological models}
\label{sec:models}

By a {\em single field cosmological model} we mean a classical
cosmological model with one real scalar field (called {\em inflaton})
and target space $\R$, which is derived from the following action on a
spacetime with topology $\R^4$:
\ben
\label{S}
\cS[g,\varphi]=\int_{\R^4} \vol(g')\cL[g,\varphi]~~.
\een
Here $\vol(g)$ is the volume form of the Lorentzian spacetime metric $g$ defined on
$\R^4$ (which we take to be of ``mostly plus'' signature) and:
\ben
\label{cL}
\cL[g,\varphi]=\frac{M^2}{2} \mathrm{R}(g)-\frac{1}{2}g^{\mu \nu} \pd_\mu \varphi \pd_\nu \varphi -\Phi\circ \varphi~~,
\een
where the positive constant $M$ is the reduced Planck mass, $\mathrm{R}(g)$ is
the Ricci scalar of $g$, the {\em coordinatized scalar field}
$\varphi:\R^4\rightarrow \R$ is a smooth map and the smooth function
$\Phi:\R\rightarrow \R_{>0}$ is the {\em coordinatized scalar
  potential}. For simplicity, we assume throughout this paper that
$\Phi$ is strictly positive everywhere. The action \eqref{S} is
well-defined when $g$ and $\varphi$ obey appropriate conditions at
infinity.

The Lagrangian density of the scalar field can be written in the
standard form of a sigma model with potential:
\ben
\label{cLs}
\cL_s[{\hat \varphi}]=-\frac{1}{2}\Tr_g{\hat \varphi}^\ast(\cG) -{\hat \Phi}\circ {\hat \varphi}~~,
\een
where $\cG$ is a Riemannian metric on an abstract one-dimensional manifold
$\cM$ which is diffeomorphic with the real line and plays the role of
target space of the {\em uncoordinatized scalar field}, which is a smooth function ${\hat
\varphi}:\R^4\rightarrow \cM$. The {\em uncoordinatized scalar potential}
${\hat \Phi}:\cM\rightarrow \R_{>0}$ is a smooth function.
The coordinatized formulation given above is recovered as follows. Let
$x:\cM\stackrel{\sim}{\rightarrow} \R$ be a global coordinate
(uniquely determined by $\cG$ and by a choice of orientation for
$\cM$) for which $\cG$ takes the standard Euclidean form:
\ben
\label{cG}
\dd s_\cG^2=\dd x^2~~.
\een
Then the coordinatized scalar field $\varphi$ and coordinatized scalar
potential $\Phi$ used in \eqref{cL} are recovered as:
\ben
\label{coordinatized}
\varphi=x\circ {\hat \varphi}~~,~~\Phi\eqdef {\hat \Phi}\circ x^{-1}~~.
\een
In particular, the formulation in terms of $\varphi$ and $\Phi$ has a
hidden dependence on the choice of the Riemannian metric $\cG$ on the
target space $\cM$ (which, together with the orientation of $\cM$,
determines the choice of the Euclidean coordinate $x$). This fact will
play a role in the discussion of universal similarities given in
Subsection \ref{sec:sim} below, namely those
similarities differ from the general treatment given in \cite{ren}
when expressed in the coordinatized formulation.

\subsection{Parameterization, phase space and the rescaled Hubble function}

\noindent The model is parameterized by the pair $(M_0,\Phi)$,
where\footnote{We prefer to use $M_0$ instead of $M$ since this
simplifies various formulas later on.}:
\ben
\label{M0def}
M_0\eqdef M\sqrt{\frac{2}{3}}
\een
is the {\em rescaled Planck mass} and the scalar potential $\Phi\in
\cC^\infty(\R)$ is an everywhere positive function. We denote by:
\be
\Crit\Phi\eqdef \{x\in \R~\vert~(\dd \Phi)(x)=0\}~~\mathrm{and}~~\Noncrit\Phi\eqdef \R\setminus \Crit\Phi
\ee
the {\em critical and noncritical sets} of $\Phi$. 

By definition, the {\em phase space} $\cT$ of the model is the total space
of the tangent bundle of the target space $\R$:
\be
\cT\eqdef T\R=\R^2~,
\ee
whose bundle projection we denote by $\pi_1:\cT\rightarrow \R$. We
denote by $\pi_2:T\R\rightarrow \R$ the vertical projection of the
trivial vector bundle $T\R$.  The {\em position} and {\em speed}
coordinates $x,v$ are the base and fiber coordinates, which give the
Cartesian coordinate system $(x,v)$ of $\R^2$.  Thus $\pi_1$ and
$\pi_2$ are the ordinary Cartesian projections:
\be
\pi_1(x,v)\eqdef x~~,~~\pi_2(x,v)\eqdef v~~\forall (x,v)\in \cT=\R^2~~.
\ee
Moreover, we denote by:
\be
Z\eqdef \{(x,0)\vert x\in \R\}=\R\times \{0\}\subset \cT
\ee
the image of the zero section of $T\R$. The total space of the slit
tangent bundle ${\dot T}\R$ is the complement of the set $Z$ inside $\cT$:
\be
\dot{T}\R =\cT\setminus Z=\R\times \R^\times~~.
\ee
This open set has two connected components (namely the upper and lower
open half-planes in $\R^2$), which we denote by:
\be
\cT^+\eqdef \{(x,v)\in \R^2\vert v>0\}~~\mathrm{and}~~\cT^-\eqdef \{(x,v)\in \R^2\vert v<0\}~~.
\ee

\begin{definition}
The {\em canonical phase space lift} of a smooth curve
$\varphi:I\rightarrow \R$ is the smooth curve $c(\varphi):I\rightarrow
\R^2$ defined through:
\ben
\label{cvarphi}
c(\varphi)(t)\eqdef (\varphi(t),\dot{\varphi}(t))~~\forall t\in I~~.
\een 
The {\em phase space manifold} of $\varphi$ is the smooth
one-dimensional submanifold $\cO_\varphi$ of the phase space
$\cT=\R^2$ defined through:
\ben
\label{cOvarphi}
\cO_\varphi=c(\varphi)(I)=\{(\varphi(t),\dot{\varphi}(t))~\vert~t\in I\}~~.
\een
\end{definition}

\begin{definition}
The {\em rescaled Hubble function} of the model $(M_0,\Phi)$ is the map
$\cH:\cT\rightarrow \R_{>0}$ defined through:
\ben
\cH(x,v)\eqdef \frac{1}{M_0}\sqrt{v^2+2\Phi(x)}~~\forall~~(x,v)\in \R^2~~.
\een
\end{definition}

\subsection{The cosmological equation}

\noindent The classical single field cosmological model parameterized
by $(M_0,\Phi)$ is obtained by assuming that $g$ is a
Friedmann-Lemaitre-Robinson-Walker (FLRW) metric with flat spatial
section:
\ben
\label{FLRW}
\dd s^2_g=-\dd t^2+a(t)^2\sum_{i=1}^3 \dd x_i^2
\een
(where $a\in$ is a smooth positive function) and that $\varphi$
depends only on the {\em cosmological time} $t\eqdef x^0\in
\R$. Define the {\em Hubble parameter} through:
\ben
\label{H}
H(t)\eqdef \frac{\dot{a}(t)}{a(t)}~~,
\een
where the dot denotes derivation with respect to $t$. 

\begin{prop}
When $H>0$ (which we assume throughout this paper), the variational
equations of the action \eqref{S} reduce to the system formed by the
{\em cosmological equation}:
\ben
\label{eomsingle}
\ddot{\varphi}(t)+\cH(\varphi(t),\dot{\varphi}(t))\dot{\varphi}(t)+ \Phi'(\varphi(t))=0~~
\een
together with the condition:
\ben
\label{Hvarphi}
H(t)=\frac{1}{3}\cH_\varphi(t)\eqdef \frac{1}{3}\cH(\dot{\varphi}(t),\dot{\varphi}(t))~~.
\een
\end{prop}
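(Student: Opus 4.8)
The plan is to derive the two covariant Euler--Lagrange equations of \eqref{S} and then specialize them to the ansatz \eqref{FLRW}. Varying the action with respect to the metric gives the Einstein equations $M^2 G_{\mu\nu}=T_{\mu\nu}$, where $G_{\mu\nu}$ is the Einstein tensor of $g$ and $T_{\mu\nu}=\partial_\mu\varphi\,\partial_\nu\varphi-g_{\mu\nu}\big(\tfrac12 g^{\rho\sigma}\partial_\rho\varphi\,\partial_\sigma\varphi+\Phi\circ\varphi\big)$ is the inflaton stress--energy tensor, while varying with respect to $\varphi$ gives the covariant scalar equation $\Box_g\varphi=\Phi'\circ\varphi$, with $\Box_g$ the d'Alembertian of $g$. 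Here I may drop all boundary terms, since by assumption $g$ and $\varphi$ satisfy appropriate conditions at infinity.

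First I would dispatch the scalar equation. For \eqref{FLRW} one has $\sqrt{|g|}=a^3$, and since $\varphi=\varphi(t)$ a direct computation gives $\Box_g\varphi=-\ddot\varphi-3H\dot\varphi$ with $H$ as in \eqref{H}; hence the scalar equation reads $\ddot\varphi+3H\dot\varphi+\Phi'(\varphi)=0$. Next I would extract the constraint from the time--time Einstein equation. For flat FLRW one has $G_{00}=3H^2$ and $T_{00}=\tfrac12\dot\varphi^2+\Phi(\varphi)$, so this component reads $3M^2H^2=\tfrac12\dot\varphi^2+\Phi(\varphi)$. Using $M_0^2=\tfrac23 M^2$ from \eqref{M0def}, this rearranges to $H^2=\tfrac{1}{9M_0^2}\big(\dot\varphi^2+2\Phi(\varphi)\big)=\tfrac19\,\cH(\varphi,\dot\varphi)^2$, and since $H>0$ and $\cH>0$ this is precisely \eqref{Hvarphi}. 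Substituting $3H=\cH(\varphi,\dot\varphi)$ into the scalar equation then yields the cosmological equation \eqref{eomsingle}.

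It remains to check that nothing is lost, i.e.\ that the spatial Einstein equations are redundant. I would argue this from the contracted Bianchi identity $\nabla^\mu G_{\mu\nu}=0$, which forces $\nabla^\mu T_{\mu\nu}=0$; for the inflaton the latter is equivalent to the scalar equation, so the spatial components follow from the time--time component together with $\Box_g\varphi=\Phi'$. Thus the full variational system is equivalent to the pair \eqref{eomsingle}, \eqref{Hvarphi}.

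The step I expect to be the genuine obstacle is the correct recovery of the constraint \eqref{Hvarphi}. It would be tempting to substitute the gauge-fixed metric \eqref{FLRW} directly into \eqref{S} and vary the resulting one-dimensional (minisuperspace) action in $a$ and $\varphi$; but this loses \eqref{Hvarphi} entirely, because that equation is exactly the variation of the action with respect to the lapse, which has been frozen to $1$ in \eqref{FLRW}. To handle this cleanly within the reduced framework I would instead keep a general lapse, writing $\dd s^2_g=-N(t)^2\dd t^2+a(t)^2\sum_{i=1}^3\dd x_i^2$, integrate by parts to remove $\ddot a$, and vary the reduced action in $N$, $a$ and $\varphi$ independently before imposing the gauge $N\equiv 1$ (permissible up to a reparameterization of $t$). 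The $\varphi$-variation reproduces the scalar equation, the $N$-variation reproduces \eqref{Hvarphi}, and the $a$-variation gives the (redundant) acceleration equation, confirming the stated reduction.
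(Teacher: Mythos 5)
Your proof is correct, and it is precisely the ``direct computation'' that the paper invokes without supplying any details: the Friedmann constraint $3M^2H^2=\tfrac12\dot\varphi^2+\Phi(\varphi)$ together with $M_0^2=\tfrac23 M^2$ and $H>0$ gives \eqref{Hvarphi}, and substituting $3H=\cH_\varphi$ into the Klein--Gordon equation gives \eqref{eomsingle}. The two points you flag are exactly where a careless reduction would fail, and you handle both correctly: the redundancy of the spatial Einstein equations does follow from the contracted Bianchi identity once one divides by $H\neq 0$ (guaranteed by the hypothesis $H>0$), and the Hamiltonian constraint \eqref{Hvarphi} is indeed invisible if one freezes the lapse to $N\equiv 1$ before varying, so it must be obtained either from the covariant $G_{00}$ equation (as in your main argument) or by retaining $N$ in the minisuperspace action. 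Nothing further is needed.
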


\begin{proof}
Follows by direct computation.
\end{proof}  

\subsection{Cosmological curves and cosmological orbits}

\noindent An interval $I\subset \R$ is called non-degenerate if it is
nonempty and not reduced to a point.  Let $\Int$ be the set of all
non-degenerate intervals in $\R$ and:
\be
\C(\R)\eqdef \sqcup_{I\in \Int}\cC^\infty(I,\R)
\ee
be the set of all smooth curves in $\R$.

\begin{definition}
The solutions $\varphi:I\rightarrow \R$ of the cosmological equation
\eqref{eomsingle} (where $I$ is a non-degenerate interval) are called
      {\em cosmological curves}, while their images $\varphi(I)\subset
      \R$ (which are intervals on the real axis) are called {\em
        cosmological orbits}.  The cosmological curve $\varphi$ is
      called {\em pointed} if $0\in I$.
\end{definition}

\noindent We say that a cosmological orbit is {\em trivial} if it
is reduced to a point. This happens iff the orbit is the image of a 
constant cosmological curve. Given a cosmological curve
$\varphi$, relations \eqref{H} and \eqref{Hvarphi} determine the scale
parameter $a$ of the FLRW metric \eqref{FLRW} up to a positive
multiplicative constant:
\ben
\label{a}
a(t)=C e^{\frac{1}{3}\int_{t_0}^t \dd t \cH_\varphi(t)}~~\forall t\in I~~,
\een
where $C>0$ and $t_0\in I$ are arbitrary.

We denote by $\Sol(M_0,\Phi)\subset \C(\R)$ the set of cosmological
curves of the model $(M_0,\Phi)$ and by $\Sol_m(M_0,\Phi)$ the subset
of pointed maximal cosmological curves. Given $T\in \R$, the {\em
  $T$-translate} of a smooth curve $\varphi\in \C(\R)$ is the curve
$\varphi^T:I^T\rightarrow \R$ defined through:
\be
I^T\eqdef I-T=\{t-T~\vert~t\in I\}~~,~~\varphi^T(t)\eqdef \varphi(t+T)~~\forall t\in I^T.
\ee
This gives an action of the abelian group $(\R,+)$ on the set
$\C(\R)$. Since the cosmological equation \eqref{eomsingle} is
autonomous, the $T$-translate of any cosmological curve is again a
cosmological curve for all $T\in \R$ and hence the set
$\Sol(M_0,\Phi)$ is invariant under this action. Notice that any
cosmological curve admits translates which are pointed.

A cosmological curve $\varphi:I\rightarrow \R$ need not be an immersion.
Accordingly, we define the {\em singular and regular parameter sets}
of $\varphi$ through:
\beqan
&& I_\sing\eqdef \{t\in I~\vert~\dot{\varphi}(t)=0\}\subset I\\
&& I_\reg\eqdef I\setminus I_\sing= \{t\in I~\vert~\dot{\varphi}(t)\neq 0\}\subset I~~.
\eeqan
When $\varphi$ is non-constant, it is easy to see that $I_\sing$
is an at most countable set and that the restriction of $\varphi$ to
each connected component of $I_\reg$ is a homeomorphism onto its image
and hence an embedded curve in $\R$.

The sets of {\em critical and noncritical times} of a cosmological
curve $\varphi:I\rightarrow \R$ are defined through:
\beqan
&& I_\crit\eqdef \{t\in I~\vert~\varphi(t)\in \Crit\Phi \}\nn\\
&& I_\noncrit \eqdef I\setminus I_\crit \eqdef \{t\in I~\vert~\varphi(t)\not\in \Crit\Phi\}~~.
\eeqan
The cosmological curve $\varphi$ is called {\em noncritical} if
$I_\crit=\emptyset$, which means its orbit is contained in the
non-critical set of $\Phi$. It is easy to see that a cosmological
curve $\varphi$ is constant iff its image coincides with a critical
point of $\Phi$, which in turn happens iff there exists some $t\in
I_\sing$ such that $\varphi(t)\in \Crit \Phi$. Hence for any
{\em non-constant} cosmological curve we have:
\be
I_\sing\cap I_\crit=\emptyset~~.
\ee

\subsection{The cosmological dynamical system, cosmological flow curves and cosmological flow orbits}

\noindent The cosmological equation \eqref{eomsingle} is equivalent
with the following first order system of ODEs:
\beqan
\label{dyneq}
&& \dot{x}=v\nn\\
&& \dot{v}=-\cH(x,v) v-\Phi'(x)~~,
\eeqan
which describes the integral curves of the vector field $S\in
\cX(\cT)$ defined through:
\be
S(x,v)\eqdef v\pd_x- \left[\cH(x,v) v+\Phi'(x)\right] \pd_v~~\forall (x,v)\in \cT=\R^2~~,
\ee
which is the single field model incarnation of the cosmological
semispray of \cite{ren}. The flow of this vector field on the phase
space $\cT=\R^2$ is called the {\em cosmological flow} of the model
$(M_0,\Phi)$. The planar dynamical system $(\R^2,S)$ is called the
{\em cosmological dynamical system} (see \cite{Palis,Katok} for an
introduction to geometric dynamical systems theory).

\begin{definition}
An integral curve $\gamma:I\rightarrow \R^2$ of the vector field $S$
is called a {\em cosmological flow curve} while its image $\gamma(I)$
(considered as a submanifold of $\cT=\R^2$) is called a {\em
cosmological flow orbit}.
\end{definition}

\noindent Notice that $S$ vanishes precisely at the
{\em trivial lifts} ${\tilde x}\eqdef (x,0)\in Z$ of the critical
points $x\in \Crit \Phi$ of the scalar potential. Accordingly, a
cosmological flow curve is constant iff its orbit coincides with a
point of the trivial lift:
\be
\widetilde{\Crit \Phi}\eqdef \{(x,0)\vert x\in \Crit \Phi\}\subset Z
\ee
of the critical set of $\Phi$; in this case, we say that the
cosmological flow orbit is {\em trivial}. It is easy to see that any
non-constant cosmological flow curve $\gamma$ is an immersion. Since
the cosmological energy is strictly decreasing along such a curve (see
Proposition \ref{prop:dissip} below), it follows that $\gamma(I)$ has
no self-intersections and hence is an embedded submanifold of
$\R^2$. The canonical phase space lift $c(\varphi)$ of any
cosmological curve $\varphi:I\rightarrow \R$ is a cosmological flow
curve $c(\varphi):I\rightarrow \R^2$.  Conversely, any cosmological
flow curve $\gamma:I\rightarrow \R^2$ determines a cosmological curve
$\varphi\eqdef \pi\circ \gamma$ by projection to $\R$:
\be
\varphi(t)=\pi(\gamma(t))~~\forall t\in I
\ee
and this cosmological curve satisfies $c(\varphi)=\gamma$. In
particular, we have $\varphi(I)=\pi(\cO)$, where $\cO\eqdef \gamma(I)$
is the cosmological flow orbit determined by $\gamma$.

Also notice that a cosmological flow orbit $\cO$ determines cosmological
curves $\varphi:I\rightarrow \R$ such that $\cO=c(\varphi)(I)$ through
the first order ODE:
\be
F(\varphi(t),\dot{\varphi}(t))=0~~,
\ee
where $F(x,v)=0$ is the implicit equation of $\cO$. Since this ODE is
autonomous, it determines $\varphi$ only up to translation of $t$ by an 
arbitrary constant.

\subsection{Regular cosmological flow curves and regular cosmological flow orbits}
\label{subsec:reg}

\begin{definition}
A smooth curve $\varphi:I\rightarrow \R$ is called {\em regular} if it is
an immersion, i.e. if its derivative does not vanish on $I$.
\end{definition}

\noindent Let $\varphi:I\rightarrow \R$ be a regular smooth curve
whose image (which is an interval in $\R$) we denote by $J\eqdef
\varphi(I)$. Then $\dot{\varphi}$ has constant sign on $I$, which we
denote by $\zeta_\varphi$ and call the {\em signature} of
$\varphi$. Since the map $\varphi:I\rightarrow J$ is a diffeomorphism,
we can use $x\in J$ as a parameter for $\varphi$. Set:
\be
\t_\varphi\eqdef \varphi^{-1}:J\rightarrow I
\ee
(so that $t=\t_\varphi(x)$) and let $v_\varphi\eqdef
\dot{\varphi}:I\rightarrow \R^\times$ be the speed of $\varphi$.

\begin{definition}
\label{def:v}
The {\em speed function} of the regular curve $\varphi:I\rightarrow \R$ is the map:
\be
\v_\varphi\eqdef v_\varphi\circ \t_\varphi=v_\varphi\circ \varphi^{-1}:J\eqdef \varphi(I)\rightarrow \R^\times~~.
\ee
\end{definition}

\noindent Notice that:
\be
\v_\varphi=\frac{1}{\t'_\varphi}~~\mathrm{and}~~\sign(\v_\varphi)=\sign(\t'_\varphi)=\zeta_\varphi~~,
\ee
where the prime denotes derivation with respect to $x$. The regular
curve $\varphi$ can be recovered up to a time translation from its
speed function $\v:=\v_\varphi$ by solving the {\em speed equation}:
\ben
\label{speedeq}
\dot{\varphi}(t)=\v(\varphi(t))~~,
\een
which determines $\varphi$ up to a shift of $t$ by an arbitrary
constant. The speed equation amounts to:
\be
\t'_\varphi(x)=\frac{1}{\v(x)}~~(x\in J)
\ee
and has general solution: 
\be
\t_\varphi(x)=C+\int_{x_0}^x \frac {\dd y}{\v(y)}~~\forall x\in J~~,
\ee
where $C\in \R$ and $x_0\in J$ are arbitrary.

\begin{definition} A one-dimensional connected smooth submanifold
$\cO$ of $\cT=\R^2$ is called {\em regular} if $\cO\cap
  Z=\emptyset$. A smooth phase space curve $\gamma:I\rightarrow \cT$
  is called {\em regular} if $\gamma(I)$ is a regular submanifold of
  $\cT$, i.e. if $\pi_2(\gamma(t))\neq 0$ for all $t\in I$.
\end{definition}

\noindent The discussion above implies that a connected submanifold
$\cO\subset \cT$ is the image of the phase space lift $c(\varphi)$ of
a regular curve $\varphi:I\rightarrow \R$ iff it is the graph of a
smooth function $\v_\cO:\pi(\cO)\rightarrow \R^\times$, which we call
the {\em speed function of $\cO$}. The following statement is obvious:

\begin{prop}
\label{prop:cO}
Let $\varphi:I\rightarrow \R$ be a smooth curve and let:
\be
\gamma\eqdef c(\varphi)=(\varphi,\dot{\varphi}):I\rightarrow \R^2
\ee
be its phase space lift. Let $\cO\eqdef \gamma(I)$ be the phase space
manifold of $\varphi$.  Then the following statements are equivalent:
\begin{enumerate}[(a)]
\item $\varphi$ is a regular curve.
\item $\gamma$ is a regular phase space curve.
\item $\cO$ is a regular one-dimensional submanifold of $\R^2$.
\end{enumerate} In this case, we have $\cO\subset \cT^\zeta$, where
$\zeta\in \{-1,1\}$ is the signature of $\varphi$ and the speed
function of $\varphi$ coincides with the speed function of $\cO$. In
particular, $\cO$ has equation:
\ben
\label{vv}
v=\v(x)~~(x\in \pi(\cO)),
\een
where $\v$ is this speed function.
\end{prop}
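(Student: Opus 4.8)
The plan is to observe that all three conditions reduce to the single requirement that $\dot{\varphi}$ be nowhere vanishing on $I$, and then to read off the geometric description of $\cO$ from the graph representation established in the discussion preceding the proposition.

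First I would record the identity $\pi_2(\gamma(t))=\dot{\varphi}(t)$ for every $t\in I$, which holds by the very definition $\gamma=c(\varphi)=(\varphi,\dot{\varphi})$. By definition, condition (b) asserts $\pi_2(\gamma(t))\neq 0$ for all $t$, whereas condition (a) asserts $\dot{\varphi}(t)\neq 0$ for all $t$; the identity makes these literally the same statement, so the equivalence (a) $\Leftrightarrow$ (b) is immediate.

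Next I would establish (a) $\Rightarrow$ (c). Assuming $\varphi$ regular, the setup recalled before the proposition gives that $\varphi:I\rightarrow J$ is a diffeomorphism onto the interval $J\eqdef\varphi(I)$ with inverse $\t_\varphi$, and that $\dot{\varphi}=v_\varphi$ has constant sign $\zeta\eqdef\zeta_\varphi$. Writing a point of $\cO$ as $\gamma(t)=(\varphi(t),\dot{\varphi}(t))$ and substituting $x=\varphi(t)$, so that $\dot{\varphi}(t)=v_\varphi(\t_\varphi(x))=\v_\varphi(x)$, I would conclude $\cO=\{(x,\v_\varphi(x))\mid x\in J\}=\graph(\v_\varphi)$. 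Since $J$ is an interval and $\v_\varphi:J\rightarrow\R^\times$ is smooth, this graph is a connected one-dimensional embedded smooth submanifold of $\R^2$; and since $\v_\varphi$ is nowhere zero, $\cO\cap Z=\emptyset$, so $\cO$ is regular, giving (c). For the converse (c) $\Rightarrow$ (a), I would simply note that $\cO\cap Z=\emptyset$ means no point $(\varphi(t),\dot{\varphi}(t))$ has vanishing second coordinate, which is precisely $\dot{\varphi}(t)\neq 0$ for all $t$, i.e. (a). This closes the circle of equivalences.

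Finally, for the supplementary assertions, the constant sign of $\dot{\varphi}=\v_\varphi\circ\varphi$ forces every second coordinate of a point of $\cO$ to have sign $\zeta$, whence $\cO\subset\cT^\zeta$; and the graph description $\cO=\graph(\v_\varphi)$ over $\pi(\cO)=J$ yields at once both the equation $v=\v_\varphi(x)$ and the identification of $\v_\varphi$ with the speed function of $\cO$, namely the unique function whose graph is $\cO$. I expect no genuine obstacle: the only point requiring a little care is that condition (c) presupposes $\cO$ to be a smooth submanifold in the first place, and this is exactly what the graph representation supplies, so the whole verification is a matter of unwinding the definitions.
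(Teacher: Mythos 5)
Your proof is correct; the paper itself offers no argument for this proposition, introducing it with ``The following statement is obvious'' and leaving the verification to the reader. Your unwinding of the definitions --- the identity $\pi_2(\gamma(t))=\dot{\varphi}(t)$ for (a)$\Leftrightarrow$(b), the graph representation $\cO=\graph(\v_\varphi)$ via the diffeomorphism $\varphi:I\rightarrow J$ for (a)$\Rightarrow$(c), and the observation that a vanishing derivative would place a point of $\cO$ on $Z$ for (c)$\Rightarrow$(a) --- is exactly the intended justification and handles the one subtle point (that the graph description is what certifies $\cO$ as a smooth submanifold) correctly.
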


\begin{prop}
\label{prop:floworbiteq}
Let $\varphi:I\rightarrow \R$ be a regular curve and set $J\eqdef
\varphi(I)$. Then the following statements are equivalent:
\begin{enumerate}[(a)]
\item $\varphi$ is a cosmological curve, i.e. its phase space manifold
$\cO$ is a cosmological flow orbit.
\item The speed function $\v$ of $\varphi$ (which coincides with that
of $\cO$) satisfies the {\em regular flow orbit equation}:
\ben
\label{floworbiteq}
\v'(x)=-\cH(x,\v(x))-\frac{\Phi'(x)}{\v(x)}~~.
\een
\end{enumerate}
\end{prop}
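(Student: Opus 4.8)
The plan is to convert the second-order cosmological equation \eqref{eomsingle} into the first-order equation \eqref{floworbiteq} for the speed function, using the speed equation together with the chain rule. By the correspondence between cosmological curves and cosmological flow orbits discussed above (the lift $c(\varphi)$ is a cosmological flow curve if and only if $\varphi$ is a cosmological curve), statement (a) is equivalent to the assertion that $\varphi$ satisfies \eqref{eomsingle}. Hence it suffices to prove that the latter is equivalent to (b).

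First, I would record the two consequences of regularity noted before Definition \ref{def:v}: the map $\varphi:I\rightarrow J$ is a diffeomorphism, and the speed function obeys the speed equation $\dot{\varphi}(t)=\v(\varphi(t))$ for all $t\in I$, with $\v$ nowhere vanishing since $\varphi$ is an immersion. By Proposition \ref{prop:cO}, the speed function of $\varphi$ coincides with that of $\cO$ and $J=\pi(\cO)$.

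Next, I would differentiate the speed equation $\dot{\varphi}=\v\circ \varphi$ with respect to $t$ and apply the chain rule:
\be
\ddot{\varphi}(t)=\v'(\varphi(t))\,\dot{\varphi}(t)=\v'(\varphi(t))\,\v(\varphi(t))~~.
\ee
Substituting this expression for $\ddot{\varphi}$, together with $\dot{\varphi}=\v\circ \varphi$, into \eqref{eomsingle} yields, for every $t\in I$,
\be
\v'(\varphi(t))\,\v(\varphi(t))+\cH(\varphi(t),\v(\varphi(t)))\,\v(\varphi(t))+\Phi'(\varphi(t))=0~~.
\ee

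Finally, since $\varphi:I\rightarrow J$ is a diffeomorphism, the point $x=\varphi(t)$ ranges over all of $J$ as $t$ ranges over $I$; hence the displayed identity holds for all $t\in I$ if and only if
\be
\v'(x)\,\v(x)+\cH(x,\v(x))\,\v(x)+\Phi'(x)=0\quad \forall x\in J~~.
\ee
Dividing by $\v(x)\neq 0$ produces exactly \eqref{floworbiteq}, giving (a)$\Rightarrow$(b); reading the same chain of equalities backwards gives (b)$\Rightarrow$(a). The only step requiring care is the passage between the identity parametrized by $t\in I$ and the identity parametrized by $x\in J$: this is a genuine equivalence precisely because $\varphi$ is a surjective diffeomorphism onto $J$ and $\v$ is nowhere zero, both of which follow from the regularity hypothesis. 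Everything else is a direct application of the chain rule.
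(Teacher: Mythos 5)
Your proof is correct and is essentially the paper's argument written out in full: the paper simply observes that \eqref{floworbiteq} is the ratio of the second equation of the first-order system \eqref{dyneq} to the first, and your chain-rule computation $\ddot{\varphi}=\v'(\varphi)\,\v(\varphi)$ followed by division by $\v\neq 0$ is precisely that ratio made explicit, with the passage between the $t$- and $x$-parametrizations justified via the diffeomorphism $\varphi:I\rightarrow J$.
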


\begin{proof}
Follows by noticing that \eqref{floworbiteq} is the ratio of the
second equation in \eqref{dyneq} to the first equation. 
\end{proof}

\noindent By Proposition \ref{prop:cO}, a cosmological flow curve
$\gamma:I\rightarrow \cT=\R^2$ and its cosmological flow orbit
$\cO=\gamma(I)\subset \R^2$ are regular iff the cosmological curve
$\varphi=\pi\circ \gamma:I\rightarrow \R$ is regular. In this case, we
define the signature of $\gamma$ and $\cO$ to equal the signature
$\zeta$ of $\varphi$. By Proposition \ref{prop:cO}, we have
$\cO=\gamma(I)\subset \cT^\zeta$. The map which takes a regular
one-dimensional submanifold $\cO$ of $\R^2$ to its speed function
$\v:\pi(\cO)\rightarrow \R^\times$ restricts to a bijection between regular
cosmological flow orbits and solutions of the regular flow orbit
equation, which presents the former as the graphs of the latter.

\begin{remark}
The function $\fH:\pi(\cO)\rightarrow \R_{>0}$, $\fH(x)\eqdef \cH(x,\v(x))$
which appears in the right hand side of \eqref{floworbiteq} is called
the {\em Hamilton-Jacobi function} of the cosmological flow orbit
$\cO$ (see Subsection \ref{sec:HJfunctions}).  We will see in Section
\ref{sec:HJ} that the regular flow orbit equation \eqref{floworbiteq}
is equivalent with the {\em Hamilton-Jacobi equation} by the change of
function $\v(x)\rightarrow \fH(x)$, which leads to
the {\em Hamilton-Jacobi parameterization} of regular cosmological
flow orbits. This corresponds to using {\em Hamilton-Jacobi
  coordinates} $(x,h)$ on $\cT^\pm$ instead of the Cartesian
coordinates $(x,v)$ used above.
\end{remark}

\subsection{Basic cosmological observables and the dissipation equation}

\begin{definition}
A {\em basic local cosmological observable} is a real-valued function
$F:\cT=\R^2\rightarrow \R$. The {\em dynamical reduction} of $F$ along
a cosmological curve $\varphi:I\rightarrow \R$ is the function
$F_\varphi:I\rightarrow \R$ defined through:
\be
F_\varphi(t)\eqdef F(\varphi(t),\dot{\varphi}(t))~~\forall t\in I~~.
\ee
\end{definition}

\noindent Notice that the rescaled Hubble function $\cH$ is a basic
local cosmological observable. Its dynamical reduction $\cH_\varphi$
along a cosmological curve $\varphi$ is also called the {\em rescaled
Hubble parameter} of that curve.

\begin{definition}
The {\em cosmological energy function} is the basic local cosmological
observable $E:\R^2\rightarrow \R_{>0}$ defined through:
\be
E(x,v)\eqdef \frac{1}{2}v^2+\Phi(x)~~\forall (x,v)\in \R^2~~.
\ee
The {\em cosmological kinetic energy function} $E_\kin:\R^2\rightarrow
\R_{\geq 0}$ and the {\em cosmological potential energy function}
$E_\pot:\R^2\rightarrow \R_{>0}$ are the basic local cosmological
observables defined through:
\ben
E_\kin(x,v)\eqdef \frac{1}{2}v^2~~,~~E_\pot(x,v)\eqdef \Phi(x)~~\forall (x,v)\in \R^2~~.
\een
\end{definition}

\noindent With these definitions, we have:
\ben
E=E_\kin+E_\pot~~\mathrm{and}~~\cH=\frac{1}{M_0}\sqrt{2E}~~.
\een

\begin{prop}
\label{prop:dissip}
The dynamical reduction of $E$ along any cosmological curve
$\varphi:I\rightarrow \R$ satisfies the {\em cosmological dissipation
equation}:
\ben
\label{DissipEq}
\frac{\dd E_\varphi(t)}{\dd t}=-\frac{\sqrt{2E_\varphi(t)}}{M_0}\dot{\varphi}(t)^2\Longleftrightarrow \frac{\dd \cH_\varphi}{\dd t}=-\frac{\dot{\varphi}^2}{M_0^2}~~.
\een
In particular, $E_\varphi$ and $\cH_\varphi$ are strictly decreasing functions of $t$
if $\varphi$ is a non-constant cosmological curve.
\end{prop}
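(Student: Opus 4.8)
The plan is to reduce the statement to a direct application of the chain rule together with the cosmological equation \eqref{eomsingle}. First I would write the dynamical reduction explicitly as $E_\varphi(t)=\frac{1}{2}\dot{\varphi}(t)^2+\Phi(\varphi(t))$ and differentiate with respect to $t$, obtaining $\frac{\dd E_\varphi}{\dd t}=\dot{\varphi}\,\ddot{\varphi}+\Phi'(\varphi)\dot{\varphi}$. Substituting the value of $\ddot{\varphi}$ supplied by the cosmological equation \eqref{eomsingle}, namely $\ddot{\varphi}=-\cH(\varphi,\dot{\varphi})\dot{\varphi}-\Phi'(\varphi)$, the two $\Phi'(\varphi)\dot{\varphi}$ contributions cancel and one is left with $\frac{\dd E_\varphi}{\dd t}=-\cH(\varphi,\dot{\varphi})\dot{\varphi}^2$. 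The identity $\cH=\frac{1}{M_0}\sqrt{2E}$ recorded just above then turns this into the first displayed form of the dissipation equation.

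For the equivalence of the two forms, I would use that $\cH_\varphi=\frac{1}{M_0}\sqrt{2E_\varphi}$, equivalently $E_\varphi=\frac{M_0^2}{2}\cH_\varphi^2$. Differentiating the latter gives $\frac{\dd E_\varphi}{\dd t}=M_0^2\,\cH_\varphi\frac{\dd \cH_\varphi}{\dd t}$, while the right-hand side of the first form equals $-\cH_\varphi\dot{\varphi}^2$ since $\sqrt{2E_\varphi}=M_0\cH_\varphi$. Because $\Phi>0$ forces $E_\varphi>0$ and hence $\cH_\varphi>0$ everywhere, I may divide the resulting relation by $M_0^2\cH_\varphi$ to obtain the second displayed form $\frac{\dd \cH_\varphi}{\dd t}=-\frac{\dot{\varphi}^2}{M_0^2}$; reversing the steps shows the two forms are indeed equivalent.

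Finally, for the strict monotonicity I would note that the right-hand side $-\frac{\sqrt{2E_\varphi}}{M_0}\dot{\varphi}^2$ is manifestly $\leq 0$ and vanishes exactly when $\dot{\varphi}(t)=0$, i.e. precisely on the singular parameter set $I_\sing$. Thus $E_\varphi$ (and likewise $\cH_\varphi$) is non-increasing; to upgrade this to strict decrease I would invoke the fact recorded earlier that, for a non-constant cosmological curve, $I_\sing$ is at most countable and therefore has empty interior. Indeed, if $E_\varphi$ were constant on some nondegenerate subinterval $[t_1,t_2]$, its derivative would vanish on the open interval $(t_1,t_2)\subset I_\sing$, contradicting that $I_\sing$ has no interior, and a non-increasing function with no interval of constancy is strictly decreasing. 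The computation itself presents no real obstacle — the only point requiring care is precisely this last step, where weak monotonicity must be promoted to strict monotonicity using the structure of $I_\sing$ rather than the sign of the derivative alone.
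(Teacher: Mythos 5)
Your proof is correct and follows the same route the paper intends: the paper's own proof is just the remark that the result ``follows immediately by using the cosmological equation,'' and your computation is exactly the chain-rule-plus-substitution argument that remark compresses. Your careful upgrade from non-increasing to strictly decreasing via the countability (hence empty interior) of $I_\sing$ is a point the paper leaves implicit, and you handle it correctly.
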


\begin{proof} Follows immediately by using the cosmological equation
\eqref{eomsingle}.
\end{proof}

\begin{cor}
The {\em Hubble inequality}:
\ben
\label{cHineq}
\cH_\varphi(t_1)-\cH_\varphi(t_2)\geq \frac{(\varphi(t_1)-\varphi(t_2))^2}{M_0^2(t_2-t_1)}
\een
holds for any cosmological curve $\varphi:I\rightarrow \R$ and any
$t_1,t_2\in I$ such that $t_1\leq t_2$.
\end{cor}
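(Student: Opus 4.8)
The plan is to integrate the second form of the dissipation equation \eqref{DissipEq} over the interval $[t_1,t_2]$ and then recognize the resulting bound as an instance of the Cauchy-Schwarz inequality. Since Proposition \ref{prop:dissip} gives $\frac{\dd \cH_\varphi}{\dd t}=-\frac{\dot{\varphi}^2}{M_0^2}$, integrating from $t_1$ to $t_2$ (both in $I$) yields
\ben
\cH_\varphi(t_1)-\cH_\varphi(t_2)=\frac{1}{M_0^2}\int_{t_1}^{t_2}\dot{\varphi}(t)^2\,\dd t~~.
\een
Thus the claimed inequality becomes equivalent to the purely kinematic statement
\ben
\int_{t_1}^{t_2}\dot{\varphi}(t)^2\,\dd t\geq \frac{(\varphi(t_2)-\varphi(t_1))^2}{t_2-t_1}~~,
\een
which no longer involves the potential $\Phi$ or the rescaled Hubble function $\cH$.

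To establish this, I would apply the Cauchy-Schwarz inequality on $L^2([t_1,t_2])$ to the pair of functions $\dot{\varphi}$ and the constant function $1$, giving
\ben
\left(\int_{t_1}^{t_2}\dot{\varphi}(t)\,\dd t\right)^2\leq \left(\int_{t_1}^{t_2}\dot{\varphi}(t)^2\,\dd t\right)\left(\int_{t_1}^{t_2}\dd t\right)~~.
\een
The second factor on the right equals $t_2-t_1$, while by the fundamental theorem of calculus the left-hand side equals $(\varphi(t_2)-\varphi(t_1))^2$. Dividing by $t_2-t_1>0$ (the case $t_1=t_2$ being trivial, as both sides then vanish) produces the kinematic inequality, and substituting back recovers \eqref{cHineq} after noting that $(\varphi(t_1)-\varphi(t_2))^2=(\varphi(t_2)-\varphi(t_1))^2$.

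I do not expect a genuine obstacle here: the only point deserving care is the legitimacy of the integration step, which is guaranteed because $\varphi$ is smooth, so that $\dot{\varphi}$ is continuous, $\cH_\varphi$ is $\cC^1$, and the fundamental theorem of calculus applies on $[t_1,t_2]$. The hypothesis $t_1\leq t_2$ is precisely what keeps the right-hand side of \eqref{cHineq} well-defined and orients the Cauchy-Schwarz step correctly; the physical content, namely that $\cH_\varphi$ decreases by at least the squared average slope of $\varphi$, is then immediate from the dissipation equation.
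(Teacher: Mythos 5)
Your proposal is correct and follows the same overall route as the paper: both integrate the second form of the dissipation equation \eqref{DissipEq} over $[t_1,t_2]$ to reduce \eqref{cHineq} to the kinematic inequality $\int_{t_1}^{t_2}\dot{\varphi}(t)^2\,\dd t\geq \frac{(\varphi(t_2)-\varphi(t_1))^2}{t_2-t_1}$. The only difference lies in how that inequality is justified. The paper argues variationally: among curves with fixed endpoints $\varphi(t_1)$ and $\varphi(t_2)$, the functional $\int \dot{\varphi}^2$ is minimized by the affine curve $\varphi_0$ with $\ddot{\varphi}_0=0$, whose value is exactly the right-hand side. You instead apply the Cauchy--Schwarz inequality to $\dot{\varphi}$ and the constant function $1$ on $L^2([t_1,t_2])$. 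The two arguments prove the same elementary lemma; yours is arguably the cleaner and more self-contained one, since the paper's version tacitly assumes that the minimizer of the Dirichlet-type functional exists and is the straight line, whereas Cauchy--Schwarz gives the bound directly with no variational machinery. Your handling of the degenerate case $t_1=t_2$ is also a small point of care that the paper glosses over (strictly speaking, \eqref{cHineq} as stated requires $t_1<t_2$ for the right-hand side to be defined).
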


\begin{proof}
Integrating \eqref{DissipEq} from $t_1$ to $t_2$ gives:
\be
\cH_\varphi(t_1)-\cH_\varphi(t_2)=\frac{1}{M_0^2}\int_{t_1}^{t_2}\dd t \dot{\varphi}(t)^2
\ee
When the endpoints $\varphi(t_1)$ and $\varphi(t_2)$ are fixed, the
integral in the right hand side is minimized for the curve $\varphi_0$ which satisfies
the equation $\ddot{\varphi}_0=0$, which has solution
$\varphi_0(t)=\varphi(t_1)+\frac{\varphi(t_2)-\varphi(t_1)}{t_2-t_1}
(t-t_1)$. The minimum value is: 
\be
\int_{t_1}^{t_2}\dd t \dot{\varphi}_0(t)^2=\frac{(\varphi(t_2)-\varphi(t_1))^2}{t_2-t_1}~~,
\ee
which gives \eqref{cHineq}.
\end{proof}

\begin{prop}
\label{prop:regdissip}
Suppose that $\varphi:I\rightarrow \R$ is a regular curve. Then
the dissipation equation \eqref{DissipEq} for $\varphi$ is equivalent
with the cosmological equation \eqref{eomsingle} and hence also with the
regular flow orbit equation \eqref{floworbiteq}.
\end{prop}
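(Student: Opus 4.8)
The plan is to compute the time derivative of $E_\varphi$ directly and to recognize that the dissipation equation differs from the cosmological equation only by an overall factor of $\dot{\varphi}$, which is nonvanishing precisely because $\varphi$ is regular. First I would differentiate $E_\varphi(t)=\frac{1}{2}\dot{\varphi}(t)^2+\Phi(\varphi(t))$ with respect to $t$, obtaining $\frac{\dd E_\varphi}{\dd t}=\dot{\varphi}\left(\ddot{\varphi}+\Phi'(\varphi)\right)$ by the chain rule. Using the identity $2E_\varphi=\dot{\varphi}^2+2\Phi(\varphi)$ together with the definition of $\cH$, I would rewrite the dynamical reduction as $\cH_\varphi=\frac{1}{M_0}\sqrt{2E_\varphi}$, so that the dissipation equation \eqref{DissipEq} reads $\dot{\varphi}\left(\ddot{\varphi}+\Phi'(\varphi)\right)=-\cH_\varphi\,\dot{\varphi}^2$.

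Next I would collect all terms on one side to obtain the factored form
\be
\dot{\varphi}\left[\ddot{\varphi}+\cH_\varphi\,\dot{\varphi}+\Phi'(\varphi)\right]=0~~,
\ee
in which the bracketed expression is exactly the left hand side of the cosmological equation \eqref{eomsingle}. This factorization is the heart of the argument. The forward implication requires no hypothesis on $\varphi$: if the cosmological equation holds, the bracket vanishes identically and hence so does the product, giving the dissipation equation. For the converse I would invoke regularity: since $\varphi$ is an immersion, $\dot{\varphi}(t)\neq 0$ for all $t\in I$, so the vanishing of the product forces the bracket to vanish, and dividing by $\dot{\varphi}$ recovers \eqref{eomsingle}.

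Finally, to obtain the stated equivalence with the regular flow orbit equation \eqref{floworbiteq}, I would simply cite Proposition \ref{prop:floworbiteq}, which already establishes that for a regular curve the cosmological equation is equivalent to \eqref{floworbiteq}. I do not anticipate any genuine obstacle here, since the entire content is the extraction of the common factor $\dot{\varphi}$. The only point deserving care is that regularity is exactly what licenses the cancellation of this factor in the converse direction; without it one could conclude only that the cosmological equation holds on the (open) subset of $I$ where $\dot{\varphi}\neq 0$, which is why the hypothesis that $\varphi$ be an immersion is essential to the statement.
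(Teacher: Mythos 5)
Your proof is correct, but it runs the chain of equivalences in a different order and with a different computation than the paper. You work entirely in the time parameterization: differentiating $E_\varphi$ gives $\frac{\dd E_\varphi}{\dd t}=\dot{\varphi}\,(\ddot{\varphi}+\Phi'(\varphi))$, so the dissipation equation \eqref{DissipEq} becomes the factored identity $\dot{\varphi}\left[\ddot{\varphi}+\cH_\varphi\dot{\varphi}+\Phi'(\varphi)\right]=0$, and regularity ($\dot{\varphi}\neq 0$ everywhere) licenses cancelling $\dot{\varphi}$ to recover \eqref{eomsingle}; the equivalence with \eqref{floworbiteq} is then quoted from Proposition \ref{prop:floworbiteq}. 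The paper instead uses regularity first to reparameterize by $x=\varphi(t)$, invokes Proposition \ref{prop:floworbiteq} to replace \eqref{eomsingle} by the flow orbit equation \eqref{floworbiteq}, rewrites the latter as $\frac{\dd}{\dd x}\cH(x,\v(x))=-\v(x)/M_0^2$ (dividing by $\cH>0$ rather than by $\dot{\varphi}$), and converts back to \eqref{DissipEq} via the speed equation \eqref{speedeq}. The two arguments hinge on the same cancellation of a nonvanishing quantity; yours is more elementary in that it never leaves the $t$-variable and isolates exactly where the regularity hypothesis enters, while the paper's version exhibits the dissipation equation as the $x$-parameterized form of the flow orbit equation, which is the form exploited later in the Hamilton--Jacobi analysis. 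Your closing observation, that without regularity one could only conclude \eqref{eomsingle} on the open subset of $I$ where $\dot{\varphi}\neq 0$, correctly identifies why the hypothesis is essential.
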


\begin{proof}
Let $J\eqdef \varphi(I)$. Since $\varphi$ is a regular curve, the map
$\varphi:I\rightarrow J$ is a diffeomorphism and hence the
cosmological equation \eqref{eomsingle} for $\varphi$ is equivalent
with the regular flow orbit equation \eqref{floworbiteq} by
Proposition \ref{prop:floworbiteq}. In turn, the regular flow orbit
equation can be written as:
\be
\frac{\dd}{\dd x} [\cH(x,\v(x))^2]=-\frac{2\cH(x,\v(x)) \v(x)}{M_0^2} \Longleftrightarrow \frac{\dd}{\dd x}\cH(x,\v(x)) =-\frac{\v(x)}{M_0^2}~~,
\ee
where we used the fact that $\cH>0$. This is equivalent with the
dissipation equation \eqref{DissipEq} when combined with the speed
equation \eqref{speedeq} of $\varphi$.
\end{proof}

\section{Universal similarities and the dynamical renormalization group}
\label{sec:sim}

\noindent Cosmological models with an arbitrary number of scalar
fields admit a two-parameter group of similarities which relate the
cosmological curves of models with different parameters (see
reference \cite{ren}). This group acts naturally on the rescaled Planck mass,
scalar field metric and scalar potential of multifield models, as well
as on their phase space and basic observables. As we shall see below,
the realization of these actions in single field cosmological models
is somewhat special since our formulation uses the {\em coordinatized}
scalar field and scalar potential -- unlike the abstract formulation
\eqref{cLs}, whose multifield version was used in loc. cit.

Let $\Pot(\R)$ be the set of smooth and everywhere positive functions
defined on $\R$. Then the space of parameters of single-field models
with smooth and positive scalar potential is:
\be
\Par\eqdef\R_{>0}\times \Pot(\R)~~,
\ee
where the first factor parameterizes the rescaled Planck mass
$M_0$. Since most physically interesting basic observables depend on
the parameters $M_0$ and $\Phi$, in this section we will view them as
functions $F:\Par\times \R^2\rightarrow \R$ rather than as functions
from $\cT=\R^2$ to $\R$.

\begin{definition}
\label{def:scaletf}
Let $\epsilon>0$. The {\em $\epsilon$-scale transform} of a smooth map
$f:I\rightarrow \R$ (where $I$ is a non-degenerate interval) is the
smooth map $f_\epsilon:I_\epsilon\rightarrow \R$ defined through:
\be
I_\epsilon \eqdef \epsilon I=\{\epsilon t \vert t\in I\}
\ee
and:
\be
f_\epsilon(t)\eqdef f(t/\epsilon)~~\forall t\in I_\epsilon~~.
\ee
\end{definition}

\noindent Notice that $f$ and $f_\epsilon$ have the same image. Scale
transforms define an action of the multiplicative group $\R_{>0}$ on
the set $\C(\R)$ of smooth curves in $\R$. The parameter $\epsilon$ is
called the {\em scale parameter}.

\begin{remark}
\label{rem:psi}
Let $\psi$ be the scale transform of $\varphi$ at scale parameter
$\epsilon=M_0$:
\be
\psi\eqdef \varphi_{M_0}~~{\mathrm i.e.}~~\psi(t)=\varphi(t/M_0)\Longleftrightarrow \varphi(t)=\psi(M_0 t)~~.
\ee
Then:
\be
\frac{\dd^k \varphi}{\dd  t^k}=M_0^k \frac{\dd^k \psi}{\dd  t^k}~~\forall k\geq 0
\ee
and the cosmological equation \eqref{eomsingle} is equivalent with:
\ben
\label{eompsi}
M_0^2\ddot{\psi}(t)+\sqrt{M_0^2 \dot{\psi}(t)^2+2\Phi(\psi(t))} \dot{\psi}(t)+ \Phi'(\psi(t))=0~~.
\een
This form of the cosmological equation has the property that $M_0$
``counts the number of derivatives of $\psi$''. When $M_0$ is small,
one can look for solutions $\psi$ of \eqref{eompsi} which are expanded
as a power series in $M_0$:
\ben
\label{psiexp}
\psi(t)=\sum_{n\geq 0} \psi_n(t) M_0^n\Longleftrightarrow \varphi(t)=\sum_{n\geq 0} \psi_n(M_0 t) M_0^n=\sum_{n\geq 0} \varphi_n(t) M_0^n~~,
\een
where $\varphi_n(t)\eqdef \psi_n(M_0 t)$ for all $n\geq 0$.
Substituting this expansion into \eqref{eompsi} gives a differential
recursion relation for $\psi_n$, together with the zeroth order
condition:
\be
\sqrt{2\Phi(\psi_0(t))} \dot{\psi}_0(t)+ \Phi'(\psi_0(t))=0\Longleftrightarrow M_0\dot{\psi}_0(t)=-V'_0(\psi_0(t))\Longleftrightarrow \dot{\varphi}_0(t)=-V'_)(\varphi_0(t))~~,
\ee
where $V_0\eqdef M_0\sqrt{2\Phi}$ is the ``leading classical effective
potential'' of reference \cite{ren}. This shows that $\varphi_0$ is a
gradient flow curve of $V_0$. In latter sections, we will construct the
rescaled Planck mass expansion \eqref{psiexp} indirectly, namely by
expanding the Hamilton-Jacobi function of the cosmological flow orbit
of $\varphi$ as a Laurent series in $M_0$. Notice that \eqref{psiexp}
can be viewed as a particular instance of the infrared scale expansion
proposed in \cite{ren}, namely the scale expansion considered at scale
parameter $\epsilon=M_0$. This is natural from the Wilsonian perspective
adopted in loc. cit., since $M_0$ sets the natural energy scale of cosmological
dynamics. 
\end{remark}

\noindent Following \cite{ren}, we define:

\begin{definition}
The {\em cosmological similarity group} is the Abelian multiplicative group
$\rT\eqdef \R_{>0}\times \R_{>0}$ (where elements multiply componentwise). 
\end{definition}

\begin{remark}
\label{rem:Hscaling}
Notice that $\rT$ is isomorphic with the additive group $(\R^2,+)$
through the componentwise exponential map:
\be
\exp:\R^2\rightarrow \rT~~,~~\exp(s_1,s_2)\eqdef (e^{s_1},e^{s_2})~~\forall s_1,s_2\in \R~~.
\ee
However, the multiplicative formulation is more convenient for our purpose. 
\end{remark}

\begin{definition}
The {\em parameter action} $\rho_\param$ is the action of $\rT$ on the
set of parameters $\Par=\R_{>0}\times \Pot(\R)$ given by:
\be
\rho_\param(\lambda,\epsilon)(M_0,\Phi)\eqdef (\lambda^{1/2}M_0,\frac{\lambda}{\epsilon^2} \Phi_{\lambda^{1/2}})~~\forall (\lambda,\epsilon)\in \rT~~\forall (M_0,\Phi)\in \Par~~.
\ee
The {\em curve action} $\rho_0$ is the action of $\rT$ on the
set $\C(\R)$ given by:
\be
\rho_0(\lambda,\epsilon)(\varphi)=\lambda^{1/2}\varphi_\epsilon~~\forall (\lambda,\epsilon)\in \rT~~\forall \varphi\in \C(\R)~~.
\ee
The {\em similarity action} is the action $\rho\eqdef
\rho_\param\times \rho_0$ of $\rT$ on the set $\Par\times \C(\R)$:
\be
\rho(\lambda,\epsilon)(M_0,\Phi,\varphi)\eqdef (\lambda^{1/2}M_0, \frac{\lambda}{\epsilon^2}\Phi_{\lambda^{1/2}},\lambda^{1/2}\varphi_\epsilon)~~\forall (\lambda,\epsilon)\in \rT~~\forall
(M_0,\Phi,\varphi)\in \Par\times\C(\R)~~.
\ee
\end{definition}

The transformations $\rho(\lambda,1)$ are called {\em parameter
  homotheties}, while the transformations $\rho(1,\epsilon)$ are called
{\em scale similarities}. As in \cite{ren}, scale similarities fix the
abstract scalar field ${\hat \varphi}:\R^4\rightarrow \cM$ of
\eqref{cLs} while acting as follows on the system $(M_0,\cG,{\hat
  \Phi})$, where $\cG$ is the target space metric \eqref{cG} of the
abstract single field model and ${\hat \Phi}:\cM\rightarrow \R_{>0}$
is the abstract scalar potential:
\be
(M_0,\cG,{\hat \Phi})\rightarrow (\lambda^{1/2} M_0,\lambda \cG,\lambda {\hat \Phi})~~.
\ee
Then the formulation given above follows by noticing that under such a
transformation the Euclidean coordinate $x$ of \eqref{cG}
changes as:
\be
x\rightarrow \lambda^{1/2} x
\ee
and hence the coordinatized scalar field $\varphi=x\circ { \hat
  \varphi}$ and potential $\Phi={\hat \Phi}\circ x^{-1}$ of
\eqref{coordinatized} change as:
\be
\varphi\rightarrow \lambda^{1/2}\varphi~~,~~\Phi\rightarrow \lambda \Phi_{\lambda^{1/2}}~~.
\ee
In particular, a parameter homothety induces a rescaling of the {\em
  coordinatized} scalar field $\varphi$, because the latter is defined
using the Euclidean coordinate $x$ on $\cM$, which itself depends on the
target space metric $\cG$. Notice that the action $\rho_\param$ of $\rT$ on
the parameter set $\Par$ is free.

Let $\Sol(M_0,\Phi)\subset \C(\R)$ be the set of cosmological curves of
the model parameterized by $(M_0,\Phi)\in \Par$.

\begin{prop}
\label{prop:sim}
We have:
\be
\rho_0(\lambda,\epsilon)(\Sol(M_0,\Phi))=\Sol(\rho_\param(M_0,\Phi))~~\forall (\lambda,\epsilon)\in \rT~~\forall (M_0,\Phi)\in \Par~~.
\ee
In particular, the set:
\be
\Sol=\sqcup_{(M_0,\Phi)\in \Par}\Sol(M_0,\Phi)=\{(M_0,\Phi,\varphi)\in \Par\times \C(\R)\vert \varphi \in \Sol(M_0,\Phi)\}
\ee
is invariant under the similarity action $\rho$ of $\rT$.
\end{prop}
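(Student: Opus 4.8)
The plan is to reduce the set equality to a single inclusion plus a formal group-theoretic argument. Concretely, I would first prove the inclusion $\rho_0(\lambda,\epsilon)(\Sol(M_0,\Phi))\subseteq \Sol(\rho_\param(\lambda,\epsilon)(M_0,\Phi))$ by direct substitution into the cosmological equation \eqref{eomsingle}, then upgrade it to equality by applying the same inclusion to the inverse group element, and finally deduce the invariance of $\Sol$ by unwinding the definition $\rho=\rho_\param\times\rho_0$.

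For the inclusion, fix $\varphi\in\Sol(M_0,\Phi)$ and set $\tilde\varphi\eqdef\rho_0(\lambda,\epsilon)(\varphi)=\lambda^{1/2}\varphi_\epsilon$, so that $\tilde\varphi(t)=\lambda^{1/2}\varphi(t/\epsilon)$ on $\epsilon I$, and write $(\tilde M_0,\tilde\Phi)\eqdef\rho_\param(\lambda,\epsilon)(M_0,\Phi)=(\lambda^{1/2}M_0,\tfrac{\lambda}{\epsilon^2}\Phi_{\lambda^{1/2}})$. Setting $s\eqdef t/\epsilon$, I would first record the elementary scalings $\dot{\tilde\varphi}(t)=\tfrac{\lambda^{1/2}}{\epsilon}\dot\varphi(s)$ and $\ddot{\tilde\varphi}(t)=\tfrac{\lambda^{1/2}}{\epsilon^2}\ddot\varphi(s)$. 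The key step is the behaviour of the rescaled Hubble function $\tilde\cH$ of the transformed model along $\tilde\varphi$: since $\tilde\Phi(\tilde\varphi(t))=\tfrac{\lambda}{\epsilon^2}\Phi(\varphi(s))$ and $\dot{\tilde\varphi}(t)^2=\tfrac{\lambda}{\epsilon^2}\dot\varphi(s)^2$, the expression under the square root acquires a common factor $\tfrac{\lambda}{\epsilon^2}$, and the prefactor $1/\tilde M_0=1/(\lambda^{1/2}M_0)$ conspires to give $\tilde\cH(\tilde\varphi(t),\dot{\tilde\varphi}(t))=\tfrac{1}{\epsilon}\cH(\varphi(s),\dot\varphi(s))$. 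Similarly, differentiating $\tilde\Phi(x)=\tfrac{\lambda}{\epsilon^2}\Phi(x/\lambda^{1/2})$ yields $\tilde\Phi'(\tilde\varphi(t))=\tfrac{\lambda^{1/2}}{\epsilon^2}\Phi'(\varphi(s))$.

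Substituting these three ingredients into \eqref{eomsingle} written for $\tilde\varphi$ with parameters $(\tilde M_0,\tilde\Phi)$, every term carries the same prefactor $\tfrac{\lambda^{1/2}}{\epsilon^2}$, so the equation becomes
\[
\frac{\lambda^{1/2}}{\epsilon^2}\left[\ddot\varphi(s)+\cH(\varphi(s),\dot\varphi(s))\dot\varphi(s)+\Phi'(\varphi(s))\right]=0.
\]
The bracket is exactly the left-hand side of the cosmological equation for $\varphi$ evaluated at $s\in I$, which vanishes by assumption; hence $\tilde\varphi\in\Sol(\tilde M_0,\tilde\Phi)$, establishing the inclusion. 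To obtain equality I would apply this inclusion to the inverse element $(\lambda^{-1},\epsilon^{-1})\in\rT$ together with the parameters $\rho_\param(\lambda,\epsilon)(M_0,\Phi)$; since $\rho_\param$ is an action, this gives $\rho_0(\lambda^{-1},\epsilon^{-1})\bigl(\Sol(\rho_\param(\lambda,\epsilon)(M_0,\Phi))\bigr)\subseteq\Sol(M_0,\Phi)$, and applying the bijection $\rho_0(\lambda,\epsilon)$ to both sides produces the reverse inclusion. The ``in particular'' statement is then immediate, since $\rho(\lambda,\epsilon)(M_0,\Phi,\varphi)=(\rho_\param(\lambda,\epsilon)(M_0,\Phi),\rho_0(\lambda,\epsilon)(\varphi))$ and the first part places the last entry in $\Sol(\rho_\param(\lambda,\epsilon)(M_0,\Phi))$. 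I expect the only delicate point to be the bookkeeping of the exponents of $\lambda$ and $\epsilon$ in the Hubble term: the specific powers in the definition of $\rho_\param$ are chosen precisely so that the kinetic contribution $\dot\varphi^2$ and the potential contribution $2\Phi$ under the square root rescale by the same factor, which is what forces $\tilde\cH$ to pull back to $\tfrac{1}{\epsilon}\cH$ and lets the whole equation factor uniformly.
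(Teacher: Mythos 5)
Your proposal is correct and follows essentially the same route as the paper, whose proof simply asserts that $\lambda^{1/2}\varphi_\epsilon$ solves the cosmological equation of the transformed model ("it is easy to see"); you supply the explicit scaling computation behind that assertion and the standard inverse-element argument upgrading the inclusion to an equality. The bookkeeping of the powers of $\lambda$ and $\epsilon$ in your verification is accurate.
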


\begin{proof}
Let $\varphi$ be a solution of the cosmological equation of the model
$(M_0,\Phi)$. Then it is easy to see that
$\lambda^{1/2}\varphi_\epsilon$ is a solution of the cosmological
equation of the model $(\lambda^{1/2}M_0,\frac{\lambda}{\epsilon^2}
\Phi_{\lambda^{1/2}})$. In other words, the cosmological equation is
invariant under the universal similarity action.
\end{proof}

\begin{remark}
The parameter homothety $\rho(\lambda,1)$ with parameter
$\lambda=M_0^{-2}$ can be used to eliminate the rescaled Planck mass
$M_0$. For this, let:
\ben
\varphi_0\eqdef \rho_0(M_0^{-2},1)(\varphi)=\frac{\varphi}{M_0}
\een
and define the {\em reduced
scalar potential} $\Phi_0:\R\rightarrow \R_{>0}$ and the {\em reduced
Hubble function} $\cH_0:\cT\rightarrow \R_{>0}$ by:
\ben
\Phi_0(x)\eqdef \frac{1}{M_0^2}\Phi_{M_0^{-1}}(x)=\frac{\Phi(M_0 x)}{M_0^2}~~,~~\cH_0(x,v)\eqdef \sqrt{v^2+2\Phi_0(x)}~~.
\een
Then:
\be
\rho_\param(M_0^{-2},1)(M_0,\Phi)=(1,\Phi_0)
\ee
and we have $\Phi(\varphi(t))=M_0^2\Phi_0(\varphi_0(t))$ and
$\cH(\varphi(t),
\dot{\varphi}(t))=\cH_0(\varphi_0(t),\dot{\varphi}_0(t))$.  Since
$\Phi'_0(x)=\frac{1}{M_0}\Phi'(M_0 x)$, we also have
$\Phi'(\varphi(t))=M_0\Phi_0'(\varphi_0(t))$ and the cosmological
equation \eqref{eomsingle} of the model $(M_0,\Phi)$ is equivalent
with:
\be
\ddot{\varphi}_0(t)+\cH_0(\varphi_0(t),\dot{\varphi}_0(t))\dot{\varphi}_0(t)+ \Phi'_0(\varphi_0(t))=0~~,
\ee
which is the cosmological equation of the model $(1,\Phi_0)$. In
particular, the positive homothety classes of the cosmological curves
of the model $(M_0,\Phi)$ depend only the reduced scalar potential
$\Phi_0$. Accordingly, one can set $M_0=1$ without loss of
generality. Every orbit of the action $\rho_\param$ intersects the
subset:
\be
\Par_1\eqdef \{(M_0,\Phi)\in \Par\vert M_0=1\}=\{1\}\times \Pot(\R)\simeq \Pot(\R)~~.
\ee
The latter is stabilized by the scale subgroup:
\be
\rT_1\eqdef \{(\lambda,\epsilon)\in \rT~\vert~\lambda=1\}=\{1\}\times \R_{>0}\simeq \R_{>0}
\ee
of $\rT$, whose action on $\Par_1$ identifies with the action of $\R_{>0}$ given by: 
\be
\Phi_0\rightarrow \frac{1}{\epsilon^2}\Phi_0~~\forall \epsilon>0~~\forall \Phi_0\in \Pot(\R)~~.
\ee
Hence the quotient $\Par/\rT$ is in bijection with the quotient of
$\Pot(\R)$ through this action, which coincides with the set:
\be
\Proj_+(\Pot(\R))\eqdef \Pot(\R)/\R_{>0}
\ee
of positive homothety classes of smooth positive functions defined on
$\R$. In particular, the overall scale of the scalar potential can
also be eliminated by performing a scale transform of $\varphi$.
\end{remark}

The $\rho_0(\lambda,\epsilon)$-transform of a curve $\varphi\in \C(\R)$
induces the transformation $\dot{\varphi}(t)\rightarrow
\frac{\lambda^{1/2}}{\epsilon} \dot{\varphi}(t/\epsilon)$ of the tangent
vector field to that curve. Since the pointed maximal cosmological
curve $\varphi_{(x,v)}:I(x,v)\rightarrow \R$ which satisfies
$\varphi_{x,v}(0)=x$ and $\dot{\varphi}_{x,v}(0)=v$ is uniquely
determined by the vector $(x,v)\in \R^2$, we have:
\be
\rho_0(\lambda,\epsilon)(\varphi_{(x,v)}) =\varphi_{\lambda^{1/2} x,\lambda^{1/2} \epsilon^{-1} v}~~\forall (x,v)\in \R^2~~\forall (\lambda,\epsilon)\in \rT~~.
\ee
In particular, the set $\Sol_m$ of pointed maximal cosmological curves
of the model $(M_0,\Phi)$ identifies with $\cT=\R^2$ and the
$\rho_0(\lambda,\epsilon)$-transform of such curves identifies with
the transformation:
\be
(x,v)\rightarrow (\lambda^{1/2}x,\frac{\lambda^{1/2}}{\epsilon} v)~~ 
\ee
of $\R^2$. Accordingly, we define: 

\begin{definition}
The {\em phase space scaling action} is the action $\rho_s$ of $\rT$ on
the phase space $\cT=\R^2$ defined through:
\ben
\label{rhos}
{\hat \rho}_0(\lambda,\epsilon)(x,v)\eqdef (\lambda^{1/2}x,\frac{\lambda^{1/2}}{\epsilon} v)~~\forall (\lambda,\epsilon)\in \R^2~~\forall (x,v)\in \R^2~~.
\een
\end{definition}

\noindent The remarks above are encoded by the following:

\begin{prop}
Let $\varphi:I\rightarrow \R$ be a cosmological curve,
$\gamma_\varphi\eqdef c(\varphi):I\rightarrow \R^2$ the corresponding
cosmological flow curve and $\cO_\varphi\eqdef \gamma_\varphi(I)$ its
cosmological flow orbit. Then the following relations hold for all
$(\lambda,\epsilon)\in \rT$:
\begin{itemize}
\item $\gamma_{\rho_0(\lambda,\epsilon)(\varphi)}={\hat \rho}_0(\lambda,\epsilon)\circ (\gamma_\varphi)_{\epsilon}$~.
\item $\cO_{\rho_0(\lambda,\epsilon)(\varphi)}={\hat \rho}_0(\lambda,\epsilon)(\cO_\varphi)$~.
\end{itemize}
\end{prop}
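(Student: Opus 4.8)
The plan is to prove the first relation by a direct computation using the chain rule, and then to obtain the second relation from the first by passing to images. Set $\psi\eqdef \rho_0(\lambda,\epsilon)(\varphi)=\lambda^{1/2}\varphi_\epsilon$, so that by Definition \ref{def:scaletf} the curve $\psi$ is defined on $I_\epsilon=\epsilon I$ by $\psi(t)=\lambda^{1/2}\varphi(t/\epsilon)$. Differentiating and applying the chain rule gives $\dot{\psi}(t)=\frac{\lambda^{1/2}}{\epsilon}\dot{\varphi}(t/\epsilon)$, whence the canonical lift reads
\[
\gamma_\psi(t)=c(\psi)(t)=\Big(\lambda^{1/2}\varphi(t/\epsilon),\,\tfrac{\lambda^{1/2}}{\epsilon}\dot{\varphi}(t/\epsilon)\Big)\quad\forall t\in I_\epsilon~.
\]

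On the other hand, the $\epsilon$-scale transform of $\gamma_\varphi$ is $(\gamma_\varphi)_\epsilon(t)=\gamma_\varphi(t/\epsilon)=(\varphi(t/\epsilon),\dot{\varphi}(t/\epsilon))$, and applying ${\hat \rho}_0(\lambda,\epsilon)$ from \eqref{rhos} produces exactly the pair displayed above. Comparing the two expressions termwise on $I_\epsilon$ establishes the first relation $\gamma_{\rho_0(\lambda,\epsilon)(\varphi)}={\hat \rho}_0(\lambda,\epsilon)\circ(\gamma_\varphi)_\epsilon$.

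For the second relation I would take images of both sides of the first. Since $\psi$ is defined on $\epsilon I$, and since a curve and its scale transform share their image (Definition \ref{def:scaletf}), so that $(\gamma_\varphi)_\epsilon(\epsilon I)=\gamma_\varphi(I)=\cO_\varphi$, we obtain $\cO_\psi=\gamma_\psi(\epsilon I)={\hat \rho}_0(\lambda,\epsilon)\big((\gamma_\varphi)_\epsilon(\epsilon I)\big)={\hat \rho}_0(\lambda,\epsilon)(\cO_\varphi)$, which is the desired identity.

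The only thing requiring care---there is no genuine obstacle here---is bookkeeping: one must track the domain shift $I\mapsto\epsilon I$ correctly and observe that the factor $1/\epsilon$ produced by differentiating the time-rescaled curve is precisely the factor by which ${\hat \rho}_0(\lambda,\epsilon)$ rescales the velocity coordinate $v$ relative to the position coordinate $x$. This matching of the velocity scaling with the chain-rule factor is the conceptual content of the statement, reflecting that the phase space scaling action ${\hat \rho}_0$ is defined exactly so as to be compatible with the canonical phase space lift $c(\cdot)$.
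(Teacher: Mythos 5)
Your proof is correct and follows exactly the reasoning the paper intends: the proposition is stated there without a separate proof, as an encoding of the immediately preceding remark that $\rho_0(\lambda,\epsilon)$ sends $\dot{\varphi}(t)$ to $\frac{\lambda^{1/2}}{\epsilon}\dot{\varphi}(t/\epsilon)$, which is precisely your chain-rule computation. The passage to images for the second bullet is likewise the intended (and only) step needed.
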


The restriction of $\rho(\lambda,\epsilon)$ to the
subset $\Par\times \Sol_m\simeq \Par\times \R^2$ 
identifies with the map:
\ben
\label{sim}
{\hat \rho}\eqdef \rho_\param\times {\hat \rho}_0:
\Par\times \R^2\rightarrow \Par\times \R^2~~,~~(M_0,\Phi,x,v)\rightarrow (\lambda^{1/2} M_0,\frac{\lambda}{\epsilon^2}\Phi_{\lambda^{1/2}},\lambda^{1/2}x, \frac{\lambda^{1/2}}{\epsilon} v)~~,
\een
which we call the {\em phase space similarity action}. Following \cite{ren}, we define: 

\begin{definition}
The {\em cosmological renormalization group} $\rT_\ren$ is defined through:
\be
\rT_\ren\eqdef \{(\lambda,\epsilon)\in \rT~\vert~\lambda=\epsilon^2\}\simeq \R_{>0}~~.
\ee
\end{definition}

\begin{definition}
The {\em similarity action of $\rT_\ren$} is the restriction of $\rho$
to $\rT_\ren$, which is given by:
\be
\rho_\ren(\epsilon)(M_0,\Phi,\varphi)=(\epsilon M_0,\Phi_\epsilon,\epsilon \varphi_\epsilon)~~\forall \epsilon>0~~\forall
(M_0,\Phi,\varphi)\in \Par\times\C(\R)~~.
\ee
The {\em parameter action} of $\rT_\ren$ is the restriction of
$\rho_\param$ to $\rT_\ren$, which is given by:
\be
\rho_\ren^\param(\epsilon)(M_0,\Phi)=(\epsilon M_0,\Phi_\epsilon)~~\forall \epsilon >0~~\forall
(M_0,\Phi)\in \Par~~.
\ee
The {\em curve action} of $\rT_\ren$ is the restriction of
$\rho_0$ to $\rT_\ren$, which is given by:
\be
\rho_\ren^0(\epsilon)(\varphi)=\epsilon\varphi_\epsilon~~\forall \epsilon >0~~\forall
\varphi \in \C(\R)~~.
\ee
\end{definition}

\noindent Proposition \ref{prop:sim} implies:
\be
\rho_\ren^0(\epsilon)(\Sol(M_0,\Phi))=\Sol(\rho_\ren^\param(M_0,\Phi))~~\forall \epsilon >0~~\forall
(M_0,\Phi)\in \Par~~.
\ee
The restriction of $\rho_\ren$ to $\Par\times \Sol_m$ identifies with
the action of $\rT_\ren$ on $\Par\times \R^2$ given by the corresponding
restriction of ${\hat \rho}$:
\be
{\hat \rho}_\ren(\epsilon)(M_0,\Phi,x,v)=(\epsilon M_0,\Phi_\epsilon, \epsilon x, v)~~\forall \epsilon>0~~\forall (M_0,\Phi,x,v)\in \Par\times \cT~~.
\ee

\begin{definition}
The {\em phase space renormalization group action} is the
action of $\rT_\ren$ on $\cT=\R^2$ obtained by restricting \eqref{rhos}:
\be
{\hat \rho}_\ren^0(\epsilon)(x,v)=(\epsilon x, v)~~\forall \epsilon>0~~\forall (x,v)\in \cT~~.
\ee
\end{definition}

\begin{remark}
Notice that the rescaled Planck mass can also be eliminated by
performing a renormalization group transformation with parameter
$\epsilon=\frac{1}{M_0}$, since:
\be
\rho_\ren(1/M_0)(M_0,\Phi,\varphi)=(1,\Phi_{1/M_0},\frac{1}{M_0}\varphi_{1/M_0})~~.
\ee
\end{remark}

\section{The Hamilton-Jacobi formalism}
\label{sec:HJ}

\noindent In this section, we discuss the Hamilton-Jacobi equation of
regular cosmological flow orbits, which is equivalent with the regular
flow orbit equation \eqref{floworbiteq} of Subsection
\ref{subsec:reg}. We also discuss the action of the universal
similarity group $\rT$ on Hamilton-Jacobi functions.

\subsection{The Hamilton-Jacobi transformation}

\noindent By definition, the {\em Hamilton-Jacobi half-plane} is the
open upper half-plane $\R\times \R_{>0}$ endowed with Cartesian
coordinates $(x,h)$, where $x\in \R$ and $h\in \R_{>0}$. The restriction
of the rescaled Hubble function $\cH$ to the zero section $Z$ of $\cT=T\R$
gives the positive smooth function:
\ben
\label{fH0}
\fH_0\eqdef \frac{\sqrt{2\Phi}}{M_0}:\R\rightarrow \R_{>0}~~.
\een

\begin{definition}
The {\em Hamilton-Jacobi region} of the model $(M_0,\Phi)$ is the
closed overgraph of the function $\fH_0$ inside the
Hamilton-Jacobi half-plane:
\be
\cU\eqdef \{(x,h)\in \R\times \R_{> 0}\vert h\geq \frac{\sqrt{2\Phi(x)}}{M_0}\}\subset \R\times \R_{>0}~~.
\ee
The {\em Hamilton-Jacobi locus} of the model is the frontier
of the Hamilton-Jacobi region:
\be
\cB\eqdef\pd \cU=\graph(\fH_0)=\Big\{\Big(x,\frac{\sqrt{2\Phi(x)}}{M_0}\Big)\bvert x\in \R \Big\}\subset \R\times \R_{> 0}~~.
\ee
\end{definition}

\begin{definition}
The {\em Hamilton-Jacobi transformation} of the model $(M_0,\Phi)$ is the
surjective map $\h:\cT\rightarrow \cU$ defined through:
\be
\h(x,v)=(x,\cH(x,v))~~\forall (x,v)\in \cT=\R^2~~,
\ee
where $\cH$ is the rescaled Hubble function of the model.
\end{definition}

\noindent The Hamilton-Jacobi transformation is a double cover of the
Hamilton-Jacobi region $\cU$ which is ramified above the
Hamilton-Jacobi locus and whose ramification set is the image $Z$ of
the zero section of $T\R$:
\be
\h^{-1}(\cB)=Z=\R\times \{0\}\subset \cT=\R^2~~.
\ee
The {\em positive sheet} of this cover is the open upper half-plane
$\cT^+$ while its {\em negative sheet} is the open lower half-plane
$\cT^-$. The closures of the two sheets meet along $Z$. Since
$\cH(x,v)=\frac{1}{M_0}\sqrt{v^2+2\Phi(x)}$, the $\h$-preimage of a
point $(x,h)\in \Int \cU$ is given by:
\be
\h^{-1}(\{(x,h)\})=\{(x,-\sqrt{M_0^2 h^2-2\Phi(x)}),(x,+\sqrt{M_0^2 h^2-2\Phi(x)})\}
\ee
and hence consists of two points of the phase space $\cT=\R^2$ which
are symmetric with respect to the $x$ axis $Z$. The preimage reduces
to a point which lies on the $x$ axis when $(x,h)\in \cB$. The Galois
group of the cover is $\Z_2$, whose action by $(x,v)\rightarrow
(x,-v)$ on $\cT$ permutes the two sheets. The fixed point set of this
action coincides with the ramification set $Z$. The restriction of
$\h$ to any of the sheets $\cT^\pm$ is a diffeomorphism between
that sheet and $\Int \cU$ and hence gives coordinates $(x,h)\in
\Int\cU$ defined on that sheet.

\begin{definition}
The coordinates $(x,h)\in \Int\cU$ defined by $\h$ on each of the
sheets $\cT^+$ and $\cT^-$ are called {\em phase space Hamilton-Jacobi
  coordinates}.
\end{definition}

Given a non-trivial cosmological flow orbit $\cO\subset \R^2$, its
image through $\h$ is a one-dimensional submanifold of the
Hamilton-Jacobi half-plane which is contained in the Hamilton-Jacobi
region $\cU$. We denote this image by:
\be
\cV_\cO=\h(\cO)\subset \cU
\ee
and call it the {\em Hamilton-Jacobi manifold} of $\cO$. The
Hamilton-Jacobi manifold is called {\em regular} if $\cV_\cO\cap
\cB=\emptyset$, which happens iff the orbit $\cO$ is regular
(i.e. contained in one of the open sheets $\cT^+$ or $\cT^-$).
The Hamilton-Jacobi manifold of an irregular flow orbit $\cO$
meets the Hamilton-Jacobi locus along the discrete set:
\be
\cV_\cO\cap \cB=\h(\cO \cap Z)~~.
\ee
The connected components of $\cO\setminus Z$ are the {\em regular
  components} of $\cO$. The connected components of
$\cV_\cO \setminus \cB$ are regular Hamilton-Jacobi manifolds. As we
shall see below, regular Hamilton-Jacobi manifolds coincide with
the graphs of regular {\em Hamilton-Jacobi functions}.

\begin{definition}
Let $\cO\subset \cT\setminus Z$ be a connected and regular
one-dimensional submanifold of the phase space which is the graph of a
smooth function $\v_\cO$. The {\em H-function} of $\cO$ is the smooth
function $\fH_\cO:\pi(\cO)\rightarrow \R$ defined though:
\be
\fH_\cO=\h\circ \v_\cO~~\mathrm{i.e.}~~\fH_\cO\eqdef \cH(x,\v_\cO(x))~~\forall x\in \pi(\cO)~~.
\ee
\end{definition}

\begin{definition}
The $H$-function of a regular curve $\varphi:I\rightarrow \R$ is the
$H$-function:
\be
\fH_\varphi\eqdef \fH_{\cO_\varphi}
\ee
of the phase space manifold $\cO_\varphi$ of $\varphi$.
\end{definition}

Let $\varphi:I\rightarrow \R$ be a regular curve of signature
$\zeta_\varphi$ and image $\varphi(I)=J$. Since the map
$\varphi:I\rightarrow J$ is a diffeomorphism, we can use $x\in J$ as a
parameter for $\varphi$. As in Subsection \ref{subsec:reg}, set:
\be
\t_\varphi\eqdef \varphi^{-1}:J\rightarrow I
\ee
and let $v_\varphi\eqdef \dot{\varphi}:I\rightarrow \R$ and
$\v_\varphi\eqdef v_\varphi\circ \t_\varphi:J\rightarrow \R^\times$.
Then the $H$-function of $\varphi$ is given by:
\ben
\label{fH}
\fH_\varphi(x)=\cH(x,\v_\varphi(x))=\frac{1}{M_0}\left[\v_\varphi(x)^2+2\Phi(x)\right]^{1/2}=\frac{1}{M_0}\left[\frac{1}{\t'_\varphi(x)^2}+2\Phi(x)\right]^{1/2}~~\forall x\in J~~,
\een
i.e.:
\be
\fH_\varphi\eqdef \cH_\varphi\circ \t_\varphi:J\rightarrow \R~~,
\ee
where $\cH_\varphi:I\rightarrow \R$ is the rescaled Hubble parameter
of $\varphi$:
\be
\cH_\varphi(t)= \frac{1}{M_0}\sqrt{\dot{\varphi}(t)^2+2\Phi(\varphi(t))}~~\forall t\in I~~.
\ee
We have $\sign(\v_\varphi)=\sign(\t'_\varphi)=\zeta_\varphi$.

\begin{remark}
\label{rem:hphi}
The functions $\Phi$ and $\cH$ provide local coordinates on each
connected component of the open subset:
\be
\cN\eqdef (\Noncrit\Phi)\times \R^\times\subset \R^2\setminus Z
\ee
of the phase space. Writing $\Noncrit\Phi=\R\setminus \Crit\Phi$ as an
at most countable disjoint union of intervals:
\be
\Noncrit\Phi=\sqcup_j K_j~~,
\ee
these connected components are $K_j\times \R_\pm$, where $\R_+\eqdef
\R_{>0}$ and $\R_{-}\eqdef \R_{<0}$. If $U_{j,\zeta}\eqdef K_j\cup
\R_\zeta$ with $\zeta\in \{-1,1\}$ is such a connected component, then
the map $(\Phi,\cH):U_{j,\zeta}\rightarrow S_j$ is bijective,
where:
\be
S_j=\{(\phi,h)\in F_j\times \R_{> 0}\vert h\geq \frac{\sqrt{2\phi}}{M_0}\}
\ee
is the intersection of the region $h\geq \frac{\sqrt{2\phi}}{M_0}$ of
the $(\phi,h)$ plane with a vertical strip through the interval
$F_j\eqdef \Phi(K_j)\subset \R$ located along the $\phi$-axis. This
map has inverse given by:
\be
x=(\Phi\vert_{K_j})^{-1}(\phi)~~,~~v=\zeta \sqrt{M_0^2 h^2-2\phi}~~.
\ee
In particular, one can express the restriction of any basic
cosmological observable to a connected component of $\cN$ as a
function of $\Phi$ and $\cH$.  The derivative of $\Phi$ can be
expressed as a function of $\Phi$, which we denote by $W_j$:
\be
W_j\eqdef \Phi'\circ (\Phi\vert_{K_j})^{-1}:K_j\rightarrow \R^\times
\ee
and we have $\frac{\dd \phi}{\dd x}=W_j(\phi)$ for $\phi\in F_j$.
\end{remark}

\subsection{The Hamilton-Jacobi equation and Hamilton-Jacobi functions}
\label{sec:HJfunctions}

\begin{definition}
The {\em Hamilton-Jacobi equation} of the model $(M_0,\Phi)$ is the
following first order nonlinear and non-autonomous ODE for the
real-valued function $\fH$:
\ben
\label{HJ}
\fH(x)= \sqrt{M_0^2\fH'(x)^2+\frac{2\Phi(x)}{M_0^2}}~~.
\een
A {\em Hamilton-Jacobi function} of the model is a function
$\fH:J\rightarrow \R_{>0}$ of class $C^1$ (where $J$ is a
non-degenerate interval) which satisfies the Hamilton-Jacobi equation.
\end{definition}

\noindent Notice that any Hamilton-Jacobi function $\fH$ satisfies the
inequality $\fH(x)\geq \frac{\sqrt{2\Phi(x)}}{2M_0}$ and hence its
graph is contained in the Hamilton-Jacobi region of the model:
\be
\graph(\fH)\subset \cU~~.
\ee
Moreover, we have $\fH'(x)=0$ iff $(x,\fH(x))\in \pd \cU=\cB$. Define:
\be
J_\sing\eqdef \{x\in J~\vert~\fH'(x)=0\}=\Big\{x\in J~\vert~\fH(x)=\frac{\sqrt{2\Phi(x)}}{2M_0}\Big\}
\ee
and $J_\reg\eqdef J\setminus J_\sing$. Then the usual bootstrap
argument shows that $\fH$ is piecewise-smooth, namely its restriction
to the open set $J_\reg$ is smooth. Notice that $\fH'$ may
have different signs on the various connected components of $J_\reg$.

\begin{remark}
\label{rem:hatHJ}
Let ${\hat \fH}\eqdef M_0\fH$. Then the Hamilton-Jacobi equation
\eqref{HJ} is equivalent with the following equation for ${\hat \fH}$:
\ben
\label{hatHJ}
{\hat \fH}(x)=\sqrt{M_0^2 {\hat \fH}'(x)^2+2\Phi(x)}~~,
\een
which has the advantage that the parameter $M_0$ appears only in the
term involving the first derivative of ${\hat \fH}$. One can seek
solutions of \eqref{hatHJ} which expand as a power series in $M_0^2$:
\be
{\hat \fH}(x)=\sum_{n\geq 0}{\hat \fH}_n(x) M_0^{2n}~~,
\ee
which amounts to seeking solutions of \eqref{HJ} which expand as a
Laurent series of the form:
\be
\fH(x)=\frac{1}{M_0}\sum_{n\geq 0} \fH_n(x)M_0^{2n}~~.
\ee
We will see below (see Remark \ref{rem:speedHsol}) that this is
equivalent with the rescaled Planck mass expansion of cosmological
curves mentioned in Remark \ref{rem:psi}.
\end{remark}

\begin{definition}
A Hamilton-Jacobi function $\fH:J\rightarrow \R$ is called {\em
regular} if it satisfies the condition $\fH>2\Phi$ (i.e. $\fH'\neq 0$)
on its interval of definition. In this case, the sign $\xi_\fH\eqdef
\sign(\fH'(x))\in \{-1,1\}$ is constant on $J$ and is called the {\em
signature} of $\fH$.
\end{definition}

\noindent A Hamilton-Jacobi function $\fH:J\rightarrow \R$ is regular
iff $\graph(\fH)\cap \cB=\emptyset$. A regular Hamilton-Jacobi function
of signature $\xi$ satisfies the {\em $\xi$-signed Hamilton-Jacobi equation}:
\ben
\label{HJsigned}
\fH'(x)=\frac{\xi}{M_0} \sqrt{\fH(x)^2-\frac{2\Phi(x)}{M_0^2}}~~\forall x\in J~~.
\een

The following proposition gives a bijection between regular
Hamilton-Jacobi functions and regular cosmological flow orbits. The
bijection is obtained by writing the regular flow orbit equation
\eqref{floworbiteq} in terms of Hamilton-Jacobi coordinates on that
sheet of the Hamilton-Jacobi map which contains a given regular flow
orbit. This presents the $\h$-image of that flow orbit as the graph of
the corresponding regular Hamilton-Jacobi function.

\begin{prop}
\label{prop:regflow}
The regular flow orbit equation \eqref{floworbiteq} for regular flow
orbits of signature $\zeta$ is equivalent with the signed
Hamilton-Jacobi equation \eqref{HJsigned} with $\xi=-\zeta$. More
precisely, a connected and regular one-dimensional submanifold
$\cO\subset \cT^\zeta$ signature $\zeta$ which is the graph of a
smooth function is a regular cosmological flow orbit iff its
$H$-function $\fH_\cO:J\eqdef \pi(\cO)\rightarrow \R$ is a regular
Hamilton-Jacobi function of signature $\xi=-\zeta$. In this case,
we have:
\be
\h(\cO)=\graph(\fH_\cO)=\{(x,\fH_\cO(x))~\vert~x\in J\}~~.
\ee
Conversely, every regular Hamilton-Jacobi function $\fH:J\rightarrow
\cU$ of signature $\xi$ is the $H$-function of a regular cosmological
flow orbit $\cO_\fH$ of signature $\zeta=-\xi$ which is defined by
the {\em speed equation of $\fH$}:
\ben
\label{speq}
v=-M_0^2\fH'(x)\Longleftrightarrow v=-\sign(\fH'(x)) \sqrt{M_0^2\fH(x)^2-2\Phi(x)}~~\forall x\in J~~.
\een
In particular, the regular Hamilton-Jacobi manifolds of the model
$(M_0,\Phi)$ coincide with the graphs of its regular Hamilton-Jacobi
functions.
\end{prop}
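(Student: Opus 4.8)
The plan is to translate the regular flow orbit equation \eqref{floworbiteq} into the signed Hamilton-Jacobi equation \eqref{HJsigned} through the change of function $\v(x)\rightarrow \fH(x)=\cH(x,\v(x))$, reading off the speed equation \eqref{speq} as the bridge between the two descriptions. First I would record the purely algebraic relation coming from the definition of the $H$-function: since $\fH(x)=\cH(x,\v(x))=\frac{1}{M_0}\sqrt{\v(x)^2+2\Phi(x)}$, squaring gives
\[
M_0^2\fH(x)^2=\v(x)^2+2\Phi(x),
\]
so that $\v(x)^2=M_0^2\fH(x)^2-2\Phi(x)$ and, on the sheet $\cT^\zeta$, $\v(x)=\zeta\sqrt{M_0^2\fH(x)^2-2\Phi(x)}$. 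Differentiating the squared relation in $x$ yields the identity $M_0^2\fH\fH'=\v\v'+\Phi'$, which is the device that converts derivatives of $\v$ into derivatives of $\fH$. Note this identity holds unconditionally, so it may be used freely.

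Next I would rewrite \eqref{floworbiteq}. Multiplying it by $\v$ and using $\cH(x,\v(x))=\fH(x)$ turns it into $\v\v'+\Phi'=-\fH\v$. Comparing with the differentiated identity gives $M_0^2\fH\fH'=-\fH\v$; since $\fH>0$ (the $H$-function takes values in $\R_{>0}$), dividing by $\fH$ produces exactly the speed equation
\[
\v=-M_0^2\fH'.
\]
Each step here is reversible, so \eqref{floworbiteq} is equivalent to $\v=-M_0^2\fH'$. The signature bookkeeping is then immediate: $\zeta=\sign(\v)=-\sign(\fH')=-\xi$, i.e. $\xi=-\zeta$, as claimed.

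Finally I would derive \eqref{HJsigned} and establish the bijection. Substituting $\v=-M_0^2\fH'$ into $\v^2=M_0^2\fH^2-2\Phi$ gives $M_0^2(\fH')^2=\fH^2-\frac{2\Phi}{M_0^2}$, which is \eqref{HJ}; extracting the square root with sign $\xi=\sign(\fH')$ yields \eqref{HJsigned}, and the regularity condition $\v\neq 0$ corresponds to $\fH'\neq 0$, i.e. to $\graph(\fH)\cap\cB=\emptyset$. For the converse I would invoke the fact, established above, that the restriction of $\cH$ to each sheet $\cT^\zeta$ is a diffeomorphism onto $\Int\cU$: given a regular Hamilton-Jacobi function $\fH$ of signature $\xi$, set $\zeta=-\xi$ and define $\v$ by the speed equation \eqref{speq}; then $\v$ has constant sign $\zeta$, satisfies \eqref{floworbiteq} by running the computation backwards, and by Propositions \ref{prop:cO} and \ref{prop:floworbiteq} its graph is a regular cosmological flow orbit $\cO_\fH$ with $\fH_{\cO_\fH}=\fH$. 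Since $\v\mapsto \cH(\cdot,\v)$ and \eqref{speq} are mutually inverse on a fixed sheet, this upgrades the equivalence into the asserted bijection. The computation is routine; the only points demanding care are the legitimacy of dividing by $\fH$ (valid because $\fH>0$ everywhere) and the global sign-tracking that pins down $\xi=-\zeta$ throughout $J$, together with the sheetwise diffeomorphism property used to convert the pointwise equivalence of ODEs into an honest bijection of geometric objects.
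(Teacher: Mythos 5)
Your proposal is correct and follows essentially the same route as the paper: both multiply the regular flow orbit equation by $\v$, use the algebraic identity $\v^2+2\Phi=M_0^2\fH^2$ together with $\fH>0$ to obtain the speed equation $\v=-M_0^2\fH'$, and then substitute $\v=\zeta\sqrt{M_0^2\fH^2-2\Phi}$ to reach the signed Hamilton-Jacobi equation with $\xi=-\zeta$, with reversibility of each step giving the bijection. The only cosmetic slip is attributing the sheetwise diffeomorphism to $\cH$ rather than to the Hamilton-Jacobi transformation $\h$, which does not affect the argument.
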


\noindent We stress that the flow orbit $\cO$ and its associated
Hamilton-Jacobi function $\fH$ have opposite signatures. The
proposition shows that the speed function $\v$ of a regular flow orbit
can be recovered from the Hamilton-Jacobi function $\fH$ of that orbit
through the formula:
\be
\v(x)=-M_0^2\fH'(x)=-\sign(\fH'(x)) \sqrt{M_0^2\fH(x)^2-2\Phi(x)}
\ee
while $\fH$ can be recovered from $\v$ as:
\be
\fH(x)=\cH(x,\v(x))~~.
\ee

\begin{proof}
Since $\cO$ is regular, we have $\v_\cO(x)\neq 0$ for all $x\in J$. The
regular flow orbit equation \eqref{floworbiteq} is equivalent with:
\ben
\label{HJequiv}
\frac{\dd}{\dd x} [\v(x)^2+2\Phi(x)]=-2\v(x) \cH(x,\v(x))\Longleftrightarrow \frac{\dd}{\dd x} \cH(x,\v(x))=-\frac{\v(x)}{M_0^2}~~,
\een
where we used the relation $\v(x)^2+2\Phi(x)=M_0^2 \cH(x,\v(x))^2$ and
the fact that $\cH>0$. Since:
\be
\v(x)=\zeta \sqrt{M_0^2\cH(x,\v(x))^2-2\Phi(x)}~~,
\ee
the second equation in \eqref{HJequiv} amounts to:
\be
\frac{\dd}{\dd x} \cH(x,\v(x))=-\frac{\zeta}{M_0^2} \sqrt{M_0^2\cH(x,\v(x))^2-2\Phi(x)}~~,
\ee
which is equivalent with \eqref{HJsigned} with $\xi=-\zeta$ and
$\fH(x)=\cH(x,\v(x))$ for all $x\in J$. Since $\v\neq 0$, we have
$\fH>\frac{\sqrt{2\Phi}}{M_0}$ on $J$ and hence $\fH$ is a regular
Hamilton-Jacobi function. Fixing a regular flow orbit $\cO$ amounts to
choosing a solution $\v$ of the regular flow orbit equation, which by
the above amounts to choosing a regular Hamilton-Jacobi function
$\fH(x)\eqdef \cH(x,\v(x))$. The remaining statements of the
proposition are obvious.
\end{proof}

\begin{remark}
The signed Hamilton-Jacobi equation for a regular orbit of signature
$\zeta$ which is contained in the open subset $U_{j,\zeta}=K_j\times
\R_\zeta$ of the phase space can be written as follows:
\be
\frac{\dd h}{\dd \phi}=-\frac{\zeta}{M_0} \frac{\sqrt{h^2-\frac{2\phi}{M_0^2}}}{W_j(\phi)}~~(\phi \in F_j)~~,
\ee
where $W_j$ was defined in Remark \ref{rem:hphi}.
\end{remark}

\subsection{Reconstruction of regular cosmological curves}

\noindent Let $\varphi:I\rightarrow \R$ be a regular smooth
curve of signature $\zeta_\varphi$ and image $\varphi(I)=J$. Since
the map $\varphi:I\rightarrow J$ is a diffeomorphism, we can use $x\in
J$ as a parameter for $\varphi$. As above, set:
\be
\t_\varphi\eqdef \varphi^{-1}:J\rightarrow I
\ee
and let $\v_\varphi\eqdef \dot{\varphi}\circ \t_\varphi:J\rightarrow
\R^\times$ be the speed function of $\varphi$.

\begin{prop}
\label{prop:varphirec}
Let $\varphi:I\rightarrow \R$ be a regular curve of signature
$\zeta_\varphi\eqdef \sign(\dot{\varphi})$ and image $J\eqdef
\varphi(I)$. Then $\varphi$ is a cosmological curve of the model
$(M_0,\Phi)$ iff one (and hence both) of the following equivalent
conditions is satisfied:
\begin{enumerate}[(a)]
\item $\fH_\varphi$ is a regular Hamilton-Jacobi function of signature
$\sign(\fH_\varphi)=-\zeta_\varphi$, i.e. it satisfies the signed
Hamilton-Jacobi equation:
\ben
\label{HJvarphi}
\fH_\varphi'(x)=-\frac{\zeta_\varphi}{M_0^2} \sqrt{M_0^2\fH_\varphi(x)^2-2\Phi(x)}~~\forall x\in J~~.
\een
\item $\v_\varphi$ satisfies the speed equation of $\fH_\varphi$:
\ben
\label{speed}
\v_\varphi(x)=-M_0^2\fH'_\varphi(x)~~.
\een
\end{enumerate}
In this case, $\fH_\varphi$ coincides with the Hamilton-Jacobi
function $\fH_\cO$ of the cosmological flow orbit $\cO$ determined by
$\varphi$.
\end{prop}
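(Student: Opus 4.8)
The plan is to obtain this proposition by chaining together Propositions \ref{prop:floworbiteq} and \ref{prop:regflow}, and then verifying directly the algebraic equivalence of conditions (a) and (b).

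First I would record the identity underlying everything. Since $\fH_\varphi\eqdef \fH_{\cO_\varphi}=\cH(x,\v_\varphi(x))$, the defining formula for $\cH$ gives
\be
M_0^2\fH_\varphi(x)^2=\v_\varphi(x)^2+2\Phi(x)\quad\forall x\in J~~.
\ee
Because $\varphi$ is regular, $\v_\varphi$ is nowhere zero with $\sign(\v_\varphi)=\zeta_\varphi$, so this identity solves for the speed function as $\v_\varphi(x)=\zeta_\varphi\sqrt{M_0^2\fH_\varphi(x)^2-2\Phi(x)}$; in particular $\fH_\varphi>\frac{\sqrt{2\Phi}}{M_0}$ everywhere on $J$, i.e.\ its graph is disjoint from the Hamilton-Jacobi locus $\cB$.

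Next I would establish the equivalence between $\varphi$ being a cosmological curve and condition (a). By Proposition \ref{prop:floworbiteq}, $\varphi$ is a cosmological curve iff its speed function $\v_\varphi$ satisfies the regular flow orbit equation \eqref{floworbiteq}. By Proposition \ref{prop:regflow}, the latter, for an orbit of signature $\zeta_\varphi$, is equivalent with the signed Hamilton-Jacobi equation \eqref{HJsigned} for $\fH_{\cO_\varphi}=\fH_\varphi$ with $\xi=-\zeta_\varphi$. Rewriting the square root via $\sqrt{\fH_\varphi^2-\frac{2\Phi}{M_0^2}}=\frac{1}{M_0}\sqrt{M_0^2\fH_\varphi^2-2\Phi}$ turns \eqref{HJsigned} into exactly \eqref{HJvarphi}, which is condition (a).

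Finally I would check the equivalence (a) $\Leftrightarrow$ (b). Substituting $\v_\varphi=\zeta_\varphi\sqrt{M_0^2\fH_\varphi^2-2\Phi}$ into the speed equation \eqref{speed} and rearranging reproduces \eqref{HJvarphi} verbatim; conversely \eqref{HJvarphi} feeds back into \eqref{speed} by the same algebra. The closing statement $\fH_\varphi=\fH_\cO$ is then immediate: once $\varphi$ is cosmological, $\cO=\cO_\varphi$ is a cosmological flow orbit and $\fH_\cO=\fH_{\cO_\varphi}=\fH_\varphi$ by definition. The only point requiring care is the systematic sign flip $\xi=-\zeta_\varphi$ between the signature of the flow orbit and that of its Hamilton-Jacobi function; beyond bookkeeping this flip, no genuine obstacle arises, since all of the analytic content is already packaged in the two cited propositions.
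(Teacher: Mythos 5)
Your proof is correct, but it routes through different intermediate results than the paper does. You chain Proposition \ref{prop:floworbiteq} (cosmological curve $\Leftrightarrow$ regular flow orbit equation for $\v_\varphi$) with Proposition \ref{prop:regflow} (flow orbit equation $\Leftrightarrow$ signed Hamilton--Jacobi equation with $\xi=-\zeta$), arriving at condition (a) first and then verifying (a) $\Leftrightarrow$ (b) by the algebraic substitution $\v_\varphi=\zeta_\varphi\sqrt{M_0^2\fH_\varphi^2-2\Phi}$. The paper instead invokes Proposition \ref{prop:regdissip} to replace the cosmological equation by the dissipation equation \eqref{DissipEq}, uses the chain rule $\dot{\cH}_\varphi=(\fH_\varphi'\circ\varphi)\dot{\varphi}$ to convert that directly into the speed equation \eqref{speed} (condition (b)), and only then passes to (a) by the same algebraic identity. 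The two arguments are essentially mirror images: your version buys economy by reusing the already-proved equivalence in Proposition \ref{prop:regflow} (whose proof contains the same computation the paper redoes here), at the cost of an extra layer of indirection through the flow orbit $\cO_\varphi$; the paper's version is self-contained at the level of the curve $\varphi$ and makes the sign determination $\sign(\fH_\varphi')=-\zeta_\varphi$ transparent via $\dot{\cH}_\varphi<0$. Your handling of the signature flip $\xi=-\zeta_\varphi$ and of the regularity condition $\fH_\varphi>\frac{\sqrt{2\Phi}}{M_0}$ is accurate, so no gap remains.
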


\begin{remark}
\label{rem:speedHsol}
Notice that the speed equation of $\fH_\varphi$ can be written as:
\be  
\t'_\varphi(x)=-\frac{1}{M_0^2\fH'_\varphi(x)}~~\forall x\in J~~,
\ee
with general solution:
\ben
\label{tHJ}
\t_\varphi(x)=C-\frac{1}{M_0^2}\int_{x_0}^x \frac{\dd y}{\fH_\varphi'(y)}=C+\zeta_\varphi\int_{x_0}^x \frac{\dd y}{\sqrt{M_0^2\fH_\varphi(y)^2-2\Phi(y)}}~~,
\een
where $C\in \R$ and $x_0\in J$ are arbitrary. This relation can be
inverted to find the equation $x=\varphi(t)$ of $\varphi$. In
particular $\t_\varphi$ and hence $\varphi$ depend non-locally on
$\fH_\varphi$. Also notice that the scale transform $\psi\eqdef \varphi_{M_0}$
of Remark \ref{rem:psi} has speed function:
\ben
\label{speedpsi}
\v_\psi=\frac{1}{M_0}\v_\varphi=-M_0\fH'_\varphi(x)=-{\hat \fH}'_\varphi(x)~~
\een
and hence is obtained by inverting the function:
\be
\t_\psi(x)=M_0 C'-\int_{x_0}^x \frac{\dd y}{{\hat \fH}_\varphi'(y)}=M_0\left(C'+\zeta_\varphi\int_{x_0}^x \frac{\dd y}{\sqrt{{\hat \fH_\varphi(y)}^2-2\Phi(y)}}\right)~~.
\ee
where $C=M_0 C'$. Relation \eqref{speedpsi} shows that the expansion
of $\psi$ in powers of $M_0$ mentioned in Remark \ref{rem:psi}
corresponds to the power series expansion of ${\hat \fH}_\varphi$ in
$M_0$ and hence to the Laurent expansion of $\fH_\varphi$ in $M_0$. 
\end{remark}

\begin{proof}
Since $\varphi$ is a regular curve, the cosmological equation for
$\varphi$ is equivalent with the dissipation equation \eqref{DissipEq}
by Proposition \ref{prop:regdissip}. We have:
\ben
\label{dotcH}
\dot{\cH}_\varphi=(\fH_\varphi'\circ \varphi)\dot{\varphi}~~.
\een
Hence the dissipation equation \eqref{DissipEq} is equivalent
with:
\ben
\label{dotvarphi}
\dot{\varphi}^2=-M_0^2 \dot{\varphi}\fH_\varphi'\circ \varphi\Longleftrightarrow \dot{\varphi}=-M_0^2\fH_\varphi'\circ \varphi~~,
\een
which is the speed equation \eqref{speed}. On the other hand, relation
\eqref{fH} is equivalent with:
\be
\v_\varphi=\zeta_\varphi\sqrt{M_0^2\fH_\varphi^2-2\Phi}~~,
\ee
where we used the fact that
$\sign(\v_\varphi)=\sign(\dot{\varphi})=\zeta_\varphi$. This shows
that the speed equation \eqref{speed} of $\fH_\varphi$ is equivalent
with the signed Hamilton-Jacobi equation \eqref{HJvarphi}. Since
$\dot{\cH}_\varphi<0$, relation \eqref{dotcH} gives
$\sign(\fH_\varphi')=-\sign(\dot{\varphi})=-\zeta_\varphi$ (this also
follows from the speed equation \eqref{speed}). It is clear from
\eqref{fH} that $\fH_\varphi$ coincides with the Hamilton-Jacobi
function of the cosmological flow orbit $\cO$ determined by $\varphi$.
\end{proof}

\subsection{The Hamilton-Jacobi potential of a regular cosmological flow orbit}
\label{subsec:HJpot}

\begin{definition}
The {\em Hamilton-Jacobi potential} of a regular cosmological flow
orbit $\cO$ of the model $(M_0,\Phi)$ is the function $V_\cO:J\eqdef
\pi(\cO)\rightarrow \cU$ defined through:
\be
V_\cO(x)\eqdef M_0^2\fH_\cO(x)~~\forall x\in J~~,
\ee
where $\fH_\cO$ is the Hamilton-Jacobi function of $\cO$.
\end{definition}

\noindent Let $\varphi:I\rightarrow \R$ be a regular cosmological
curve with cosmological flow orbit $\cO$. Then the
cosmological orbit of $\varphi$ is the interval $J=\pi(\cO)\subset
\R$. Since $x=\varphi(t)$ and $v_\varphi(t)=\dot{\varphi}(t)$, the
speed equation \eqref{speed} of $\fH_\varphi$ is equivalent with the
gradient flow equation:
\ben
\label{gf}
\dot{\varphi}(t)=-V'_\cO(\varphi(t))~~
\een
of $V_\cO$. This equation determines $\varphi$ up to an arbitrary
constant translation of $t$ when the cosmological flow orbit $\cO$ is
fixed.  Hence the problem of computing the regular cosmological curves
of a given model can be approached in two steps:
\begin{enumerate}
\item Solve the Hamilton-Jacobi equation to find the regular
Hamilton-Jacobi functions, which determine the regular cosmological
flow orbits $\cO$ of the model.
\item For each regular cosmological flow orbit (i.e. for each regular
Hamilton-Jacobi function), solve the one-dimensional gradient flow
equation \eqref{gf} to determine the corresponding regular
cosmological curve $\varphi$ up to translation of $t$ by an arbitrary
constant. As shown in \eqref{tHJ}, the solutions of this equation are
obtained by inverting the function:
\ben
\label{tHJ}
\t_\varphi(x)=C-\int_{x_0}^x \frac{\dd y}{V'_\cO(y)}~~,
\een
where $C\in \R$ and $x_0\in J$ are arbitrary.
\end{enumerate}

\begin{remark}
The {\em first slow roll approximation} for a regular cosmological flow orbit
$\cO$ corresponds to taking:
\be
\frac{M_0^2|\fH'_\cO(x)|}{\sqrt{2\Phi(x)}}\ll 1~~\Longleftrightarrow~~\fH_\cO\approx \fH_0~~,
\ee where $\fH_0\eqdef \frac{\sqrt{2\Phi}}{M_0}$ is independent of
$\cO$. This approximation is accurate for a cosmological flow
orbit $\cO$ which is very close to the zero set $Z$ (and hence the
corresponding portion of the Hamilton-Jacobi manifold $\cV_\cO$ is
very close to the Hamilton-Jacobi locus $\cB$). In this case, $V_\cO$
reduces to the {\em leading classical effective potential}:
\ben
\label{fV0}
V_0\eqdef M_0^2\fH_0=M_0\sqrt{2\Phi}
\een
of \cite{ren} while $\fH_\cO$ reduces to $\fH_0$. Notice that $V_0$
and $\fH_0$ depend only on the parameters $(M_0,\Phi)$ of the model,
being independent of the orbit $\cO$. Also notice that $\fH_0$ is a
Hamilton-Jacobi function only when restricted to a connected component
of the interior of the critical locus $\Crit\Phi$ (assuming that
$\Int(\Crit\Phi)$ is non-empty), in which case it corresponds to a
trivial cosmological flow orbit.  In the next section, we explain how
one can construct a better approximant for cosmological orbits which
are close to the zero section $Z$ of $T\R$.
\end{remark}

\subsection{Similarities of Hamilton-Jacobi functions}
\label{subsec:HJsim}

\noindent Under a phase space scaling action \eqref{rhos} with
parameters $(\lambda,\epsilon)$, a regular cosmological flow orbit
$\cO$ with implicit equation $v=\v(x)$ of the model $(M_0,\Phi)$ maps
to the regular flow orbit ${\tilde \cO}$ with equation $v={\tilde \v}(x)$
of the model $({\tilde M}_0,{\tilde
  \Phi})=(\lambda^{1/2}M_0,\lambda\epsilon^{-2}\Phi_{\lambda^{1/2}})$,
where:
\ben
\label{v'}
{\tilde \v}(x)=\frac{\lambda^{1/2}}{\epsilon} \v_{\lambda^{1/2}}(x)=\frac{\lambda^{1/2}}{\epsilon} \v(\lambda^{-1/2} x)~~.
\een
Accordingly, the Hamilton-Jacobi function
$\fH(x)=\frac{1}{M_0}\sqrt{\v(x)^2+2\Phi(x)}$ of $\cO$ maps to
the Hamilton-Jacobi function of ${\tilde \cO}$, which is given by:
\be
{\tilde \fH}(x)=\frac{1}{{\tilde M}_0}\sqrt{{\tilde \v}(x)^2+2{\tilde \Phi}(x)}=\frac{1}{\epsilon}\fH_{\lambda^{1/2}}(x)=\frac{1}{\epsilon} \fH(\lambda^{-1/2}x)~~.
\ee
Setting $\lambda=\epsilon^2$ gives the action of the renormalization
group $\rT_\ren$ on Hamilton-Jacobi functions:
\be
\fH\rightarrow \frac{1}{\epsilon}\fH_\epsilon~~.
\ee
Accordingly, we define: 

\begin{definition}
\label{def:rdef}
The {\em Hamilton-Jacobi scaling action} is the action $\r_0$ of $\rT$
on $\C(\R)$ given by:
\ben
\label{HJscaling}
\r_0(\lambda,\epsilon)(\fH)\eqdef \frac{1}{\epsilon}\fH_{\lambda^{1/2}}~~\forall \fH\in \C(\R)~~\forall (\lambda,\epsilon)\in \rT~~.
\een
The {\em Hamilton-Jacobi similarity action} is the action
$\r=\rho_\param\times \r_0$ of $\rT$ on $\Par\times \C(\R)$:
\be
\r(\lambda,\epsilon)(M_0,\Phi,\fH)\eqdef (\lambda^{1/2} M_0,\frac{\lambda}{\epsilon^2}\Phi_{\lambda^{1/2}},
\frac{1}{\epsilon}\fH_{\lambda^{1/2}})~~\forall (M_0,\Phi,\fH)\in \Par\times \C(\R)~~\forall (\lambda,\epsilon)\in \rT~~.
\ee
\end{definition}

\noindent Let $\H(M_0,\Phi)$ and $\H_\reg(M_0,\Phi)$ be the sets of
all (respectively regular) Hamilton-Jacobi functions of the model
$(M_0,\Phi)$ and $\H_\reg^\pm(M_0,\Phi)$ the subsets of regular
Hamilton-Jacobi functions of positive and negative signature. We have:
\be
\H_\reg(M_0,\Phi)=\H_\reg^+(M_0,\Phi)\sqcup \H_\reg^-(M_0,\Phi)\subset \H(M_0,\Phi)~~.
\ee

\begin{prop}
\label{prop:HJsim}
For any $(\lambda,\epsilon)\in \rT$ and $(M_0,\Phi)\in \Par$, we have:
\be
\r_0(\lambda,\epsilon)(\H(M_0,\Phi))=\H(\rho_\param(\lambda,\epsilon)(M_0,\Phi))~~,~~\r_0(\lambda,\epsilon)(\H^\pm_\reg(M_0,\Phi))=
\H_\reg^\pm(\rho_\param(\lambda,\epsilon)(M_0,\Phi))~~.
\ee
\end{prop}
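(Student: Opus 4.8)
The plan is to prove the first identity by direct substitution into the Hamilton-Jacobi equation \eqref{HJ}, and then to deduce the second from it by a sign computation; in each case equality of the two sides will follow from the fact that $\r_0$ and $\rho_\param$ are both actions of the group $\rT$.

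First I would fix $(\lambda,\epsilon)\in \rT$ and a Hamilton-Jacobi function $\fH:J\rightarrow \R_{>0}$ of the model $(M_0,\Phi)$, and set $\tilde{\fH}\eqdef \r_0(\lambda,\epsilon)(\fH)=\frac{1}{\epsilon}\fH_{\lambda^{1/2}}$. This is of class $C^1$ on the non-degenerate interval $\lambda^{1/2}J$ and everywhere positive, since $\fH>0$ and $\epsilon>0$. Writing $y=\lambda^{-1/2}x$, so that $\tilde{\fH}(x)=\frac{1}{\epsilon}\fH(y)$ and $\tilde{\fH}'(x)=\frac{\lambda^{-1/2}}{\epsilon}\fH'(y)$, I would substitute into the right-hand side of equation \eqref{HJ} for the transformed model $(\tilde{M}_0,\tilde{\Phi})\eqdef \rho_\param(\lambda,\epsilon)(M_0,\Phi)=(\lambda^{1/2}M_0,\frac{\lambda}{\epsilon^2}\Phi_{\lambda^{1/2}})$. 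The one computation requiring care is that under the square root the kinetic term $\tilde{M}_0^2\tilde{\fH}'(x)^2=\lambda M_0^2\cdot \frac{\lambda^{-1}}{\epsilon^2}\fH'(y)^2$ and the potential term $\frac{2\tilde{\Phi}(x)}{\tilde{M}_0^2}=\frac{1}{\lambda M_0^2}\cdot \frac{2\lambda}{\epsilon^2}\Phi(y)$ acquire exactly the same overall factor $\frac{1}{\epsilon^2}$, so that the whole expression equals $\frac{1}{\epsilon}\sqrt{M_0^2\fH'(y)^2+\frac{2\Phi(y)}{M_0^2}}=\frac{1}{\epsilon}\fH(y)=\tilde{\fH}(x)$, the middle equality being \eqref{HJ} for $(M_0,\Phi)$. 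This gives the inclusion $\r_0(\lambda,\epsilon)(\H(M_0,\Phi))\subseteq \H(\rho_\param(\lambda,\epsilon)(M_0,\Phi))$.

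To upgrade this to equality, I would check that $\r_0$ is a genuine action of $\rT$ on $\C(\R)$, which reduces to the scale-transform identity $(f_{\epsilon_2})_{\epsilon_1}=f_{\epsilon_1\epsilon_2}$ together with multiplicativity of the prefactor $\frac{1}{\epsilon}$, and recall that $\rho_\param$ is likewise an action. Applying the inclusion just proved to the element $(\lambda^{-1},\epsilon^{-1})$ and the model $\rho_\param(\lambda,\epsilon)(M_0,\Phi)$, and using $\rho_\param(\lambda^{-1},\epsilon^{-1})\rho_\param(\lambda,\epsilon)=\id$ together with $\r_0(\lambda,\epsilon)\r_0(\lambda^{-1},\epsilon^{-1})=\id$, then yields the reverse inclusion and hence the first identity.

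Finally, for the regular case I would use $\tilde{\fH}'(x)=\frac{\lambda^{-1/2}}{\epsilon}\fH'(\lambda^{-1/2}x)$: the prefactor $\frac{\lambda^{-1/2}}{\epsilon}$ is strictly positive and $x\mapsto \lambda^{-1/2}x$ is an orientation-preserving diffeomorphism of $\lambda^{1/2}J$ onto $J$, so $\tilde{\fH}'$ vanishes nowhere iff $\fH'$ does, and when both are nowhere-vanishing one has $\sign(\tilde{\fH}')=\sign(\fH')$ pointwise. Hence $\tilde{\fH}$ is regular iff $\fH$ is, with the same signature, giving $\r_0(\lambda,\epsilon)(\H^\pm_\reg(M_0,\Phi))\subseteq \H^\pm_\reg(\rho_\param(\lambda,\epsilon)(M_0,\Phi))$; the reverse inclusion follows by passing to $(\lambda^{-1},\epsilon^{-1})$ exactly as above. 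I do not expect a genuine obstacle: the whole argument is a bookkeeping of scaling weights, the one substantive point being that $\rho_\param$ is designed precisely so that the two terms under the Hamilton-Jacobi square root rescale uniformly.
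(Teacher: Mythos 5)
Your proof is correct and follows essentially the same route as the paper, which simply observes that $\fH$ satisfies the Hamilton-Jacobi equation of $(M_0,\Phi)$ iff $\r_0(\lambda,\epsilon)(\fH)$ satisfies that of $\rho_\param(\lambda,\epsilon)(M_0,\Phi)$, with nonvanishing of the derivative and its sign preserved; your substitution verifying that both terms under the square root acquire the common factor $1/\epsilon^2$ is exactly the computation the paper leaves implicit. The inverse-element argument you use to upgrade the inclusion to an equality is the standard way of making the paper's ``iff'' precise and is fine.
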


\begin{proof}
It is clear that $\fH\in \C(\R)$ satisfies the Hamilton-Jacobi
equation \eqref{HJ} of the model $(M_0,\Phi)$ iff
$\r(\lambda,\epsilon)(\fH)$ satisfies the Hamilton-Jacobi equation of
the model $\rho_\param(\lambda,\epsilon)(M_0,\Phi)=(\lambda^{1/2}
M_0,\frac{\lambda}{\epsilon^2}\Phi_{\lambda^{1/2}})$. Moreover, the
derivative of $\fH$ is non-vanishing iff that of
$\r_0(\lambda,\epsilon)(\fH)$ is, in which case the two derivatives
have the same sign.
\end{proof}

\begin{definition}
The {\em Hamilton-Jacobi RG scaling action} is the restriction
$\r_\ren^0$ of $\r_0$ to $\rT_\ren$, which is given by:
\ben
\label{HJren}
\r_\ren^0(\epsilon)(\fH)\eqdef \frac{1}{\epsilon}\fH_{\epsilon}~~\forall \fH\in \C(\R)~~\forall \epsilon>0~~.
\een
The {\em Hamilton-Jacobi RG similarity action} is the restriction
$\r_\ren$ of $\r$ to $\rT_\ren$, which is given by:
\ben
\label{HJren}
\r_\ren(\epsilon)(M_0,\Phi,\fH)\eqdef (\epsilon M_0, \Phi_\epsilon,\frac{1}{\epsilon}\fH_{\epsilon})~~\forall (M_0,\Phi,\fH)\in \Par\times \C(\R)~~\forall \epsilon>0~~.
\een
\end{definition}

\noindent Proposition \ref{prop:HJsim} implies:
\ben
\label{rren}
\r_\ren(\epsilon)(\H(M_0,\Phi))=\H(\rho_\ren^\param(\epsilon)(M_0,\Phi))~~,~~\r_\ren(\epsilon)(\H^\pm_\reg(M_0,\Phi))=\H_\reg^\pm(\rho_\ren^\param(\epsilon)(M_0,\Phi))
\een
for all $\epsilon>0$ and all $(M_0,\Phi)\in \Par$. In particular, the
following subsets of $\Par\times \C(\R)$:
\be
\H\eqdef \sqcup_{(M_0,\Phi)\in \Par} \H(M_0,\Phi)~~\mathrm{and}~~\H_\reg^\pm\eqdef \sqcup_{(M_0,\Phi)\in \Par} \H_\reg^\pm(M_0,\Phi)
\ee
are invariant under the similarity action $\r_\ren$ of $\rT_\ren$.

\section{The IR Hamilton-Jacobi function and potential}
\label{sec:deformed}

In this section, we discuss the construction of a universal formal IR
Hamilton-Jacobi function $\fH_\IR$ as a Laurent power series in the
rescaled Planck mass $M_0$ and of the associated formal
Hamilton-Jacobi potential $V_\IR$, which admits a power series
expansion in $M_0$. For technical reasons, we start by constructing
slightly more general objects which also depend on the scale parameter
and reduce to $\fH_\IR$ and $V_\IR$ when this parameter is formally
set to one. We show that the coefficients of the corresponding
expansions are controlled by certain quasi-homogeneous polynomials in
the scalar potential and its derivatives, which are determined by a
recursion relation. Using the homogeneity properties of these
polynomials, we show that the resulting expansion can be written in
terms of the slow-roll parameters of \cite{LPB} and coincides with the
slow roll expansion of loc. cit. This provides a conceptually clear
explanation of the slow roll expansion as an expansion in powers of
the rescaled Planck mass. In particular, we give an explicit recursion
procedure for constructing the slow roll expansion, thus providing an
efficient method for computing its higher order terms. In Appendix
\ref{app:list}, we list the the coefficients of this expansion up to 10th
order, thus improving markedly on the results of loc. cit. 

\begin{definition}
For any $\epsilon>0$, the {\em $\epsilon$-deformed Hamilton-Jacobi
  equation} of the model $(M_0,\Phi)$ is the following first order ODE
for the unknown function $u:J\rightarrow \R_{>0}$, where $J\subset
\R$ is a non-degenerate interval:
\ben
\label{HJeps}
u(x)=\sqrt{\epsilon^2 M_0^2 u'(x)^2+\frac{2\Phi(x)}{M_0^2}}\Longleftrightarrow u(x)=\frac{\sqrt{2\Phi(x)}}{M_0}\left[1+\frac{\epsilon^2 M_0^4}{2\Phi(x)} u'(x)^2 \right]^{1/2}~~(x\in J)~~.
\een
\end{definition}

\noindent This reduces to the Hamilton-Jacobi equation of $(M_0,\Phi)$
when $\epsilon=1$ and to the algebraic equation
$u=\frac{\sqrt{2\Phi}}{M_0}$ for $\epsilon\rightarrow 0$. Notice that
the parameter $\epsilon$ ``counts the number of derivatives'' and that
\eqref{HJeps} is equivalent with the equation:
\ben
\label{Phieps}
\frac{2\Phi(x)}{M_0^2}=u(x)^2-\epsilon^2 M_0^2 u'(x)^2
\een
together with the condition $u>0$. Setting ${\hat u}=M_0 u$
(cf. Remark \ref{rem:hatHJ}), equation \eqref{HJeps} is equivalent
with:
\ben
\label{hat HJeps}
{\hat u}(x)=\sqrt{(\epsilon M_0)^2 {\hat u}'(x)+2\Phi(x)}~~,
\een
which is obtained from \eqref{hatHJ} by replacing $M_0$ with $\epsilon
M_0$. This gives the following result, which relates the solutions of
\eqref{HJeps} to constant rescalings of the Hamilton-Jacobi functions of a
different model.

\begin{prop}
\label{prop:hatHJepsilon}
Let $\mu>0$ and consider a family of smooth functions $u^{(M_0)}\in
\cC^\infty(J)$.  Then $u^{(M_0)}$ satisfies the $\epsilon$-deformed
Hamilton-Jacobi equation of $(M_0,\Phi)$ for all $M_0\in (0,\mu)$
iff:
\ben
\label{fHrecM}
\fH^{(M_0)}\eqdef \frac{1}{\epsilon} u^{(M_0/\epsilon)}~~
\een
satisfies the Hamilton-Jacobi equation of $(M_0,\Phi)$ for
all $M_0\in (0,\epsilon \mu)$.
\end{prop}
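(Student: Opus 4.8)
The plan is to reduce the $\epsilon$-deformed equation of one model to the undeformed Hamilton--Jacobi equation of a mass-rescaled model, and then to reindex the family. First I would pass from the radical form of both equations to their squared, polynomial forms. Since $\epsilon>0$, $M_0>0$ and both $u$ and $\fH$ are required to be strictly positive, squaring is reversible; hence \eqref{HJeps} is equivalent to \eqref{Phieps}, namely $u^2-\epsilon^2 M_0^2 (u')^2=2\Phi/M_0^2$ together with $u>0$, and likewise \eqref{HJ} is equivalent to $\fH^2-M_0^2(\fH')^2=2\Phi/M_0^2$ together with $\fH>0$. Working with the polynomial identities and the separate sign condition is what will guarantee that the equivalences obtained are genuine biconditionals rather than one-directional implications.

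The core step is the following fixed-mass equivalence: for a fixed mass $m>0$, a smooth positive function $u$ satisfies the $\epsilon$-deformed Hamilton--Jacobi equation of $(m,\Phi)$ if and only if $\frac{1}{\epsilon}u$ satisfies the undeformed Hamilton--Jacobi equation of $(\epsilon m,\Phi)$. I would establish this directly by substituting $\fH=u/\epsilon$ and $M_0=\epsilon m$ into the squared undeformed equation: the left-hand side becomes $\epsilon^{-2}\bigl(u^2-\epsilon^2 m^2(u')^2\bigr)$ and the right-hand side becomes $\epsilon^{-2}\cdot 2\Phi/m^2$, so after clearing the common factor $\epsilon^{-2}$ one recovers exactly the squared $\epsilon$-deformed equation of $(m,\Phi)$; positivity is preserved since $u>0\Leftrightarrow u/\epsilon>0$. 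Equivalently, this is the content of the remark preceding the statement: the substitution $\hat u=m u$ turns \eqref{HJeps} into $\hat u=\sqrt{(\epsilon m)^2(\hat u')^2+2\Phi}$, which is \eqref{hatHJ} for the model $(\epsilon m,\Phi)$, whose undeformed Hamilton--Jacobi function is $\hat u/(\epsilon m)=u/\epsilon$.

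Next I would reindex. Applying the fixed-mass equivalence with $m=M_0/\epsilon$, so that $\epsilon m=M_0$, gives that $u^{(M_0/\epsilon)}$ satisfies the $\epsilon$-deformed equation of $(M_0/\epsilon,\Phi)$ if and only if $\fH^{(M_0)}\eqdef\frac{1}{\epsilon}u^{(M_0/\epsilon)}$ satisfies the undeformed Hamilton--Jacobi equation of $(M_0,\Phi)$. Finally I would match the parameter ranges: the family $u^{(m)}$ solves its deformed equation for every $m\in(0,\mu)$, and under $m=M_0/\epsilon$ the condition $m\in(0,\mu)$ is equivalent to $M_0\in(0,\epsilon\mu)$; quantifying over the whole family then yields the asserted biconditional.

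No step here is genuinely difficult. The only place demanding care is the bookkeeping of the two distinct roles played by the symbol $M_0$ -- as the running mass labelling a member of the family versus as the mass of the target undeformed model -- together with the correct translation of the index range $(0,\mu)$ into $(0,\epsilon\mu)$ under the substitution $m=M_0/\epsilon$.
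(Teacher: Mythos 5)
Your proof is correct and follows essentially the same route as the paper, which offers no separate proof but presents the proposition as an immediate consequence of the observation that the substitution $\hat u = M_0 u$ turns \eqref{HJeps} into \eqref{hatHJ} with $M_0$ replaced by $\epsilon M_0$. Your fixed-mass equivalence is exactly that observation, and your explicit reindexing $m = M_0/\epsilon$ with the range translation $(0,\mu)\leftrightarrow(0,\epsilon\mu)$ supplies the bookkeeping the paper leaves implicit.
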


\noindent We also have the following result, which relates the
solutions of \eqref{HJeps} to scale transforms of the Hamilton-Jacobi
functions of a different model. 

\begin{prop}
\label{prop:HJepsilon}
The $\epsilon$-deformed Hamilton-Jacobi equation of the model
$(M_0,\Phi)$ coincides with the Hamilton-Jacobi equation of
the model $(\epsilon M_0,\epsilon^2\Phi)$. Moreover, $\fH$ satisfies
the Hamilton-Jacobi equation of $(M_0,\Phi)$ iff $u\eqdef \fH_\epsilon$
satisfies the $\epsilon$-deformed Hamilton-Jacobi equation of
the model $(M_0,\Phi_\epsilon)$.
\end{prop}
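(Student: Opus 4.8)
The plan is to prove both assertions by direct substitution, tracking the scale parameter through the Hamilton-Jacobi equation \eqref{HJ} and the $\epsilon$-deformed equation \eqref{HJeps}.

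First I would establish the coincidence of the two equations. Writing the Hamilton-Jacobi equation \eqref{HJ} for the model $(\epsilon M_0,\epsilon^2\Phi)$ amounts to replacing $M_0$ by $\epsilon M_0$ and $\Phi$ by $\epsilon^2\Phi$, which gives
\be
u(x)=\sqrt{(\epsilon M_0)^2 u'(x)^2+\frac{2\epsilon^2\Phi(x)}{(\epsilon M_0)^2}}~~.
\ee
The factor $\epsilon^2$ in the numerator of the potential term cancels against the $\epsilon^2$ in its denominator, leaving $\frac{2\Phi(x)}{M_0^2}$, while the first term under the radical becomes $\epsilon^2 M_0^2 u'(x)^2$. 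This is precisely the $\epsilon$-deformed Hamilton-Jacobi equation \eqref{HJeps} of $(M_0,\Phi)$, so the two equations coincide as constraints on $u$. This cancellation is exactly what forces the parameter assignment $(\epsilon M_0,\epsilon^2\Phi)$ to be the correct one.

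For the second assertion, I would unwind the definition of the scale transform (Definition \ref{def:scaletf}): $u\eqdef \fH_\epsilon$ means $u(x)=\fH(x/\epsilon)$ on the rescaled domain, and likewise $\Phi_\epsilon(x)=\Phi(x/\epsilon)$. The chain rule gives $u'(x)=\frac{1}{\epsilon}\fH'(x/\epsilon)$. Substituting these into the $\epsilon$-deformed Hamilton-Jacobi equation \eqref{HJeps} of the model $(M_0,\Phi_\epsilon)$, the prefactor $\epsilon^2 M_0^2$ multiplying $u'(x)^2=\frac{1}{\epsilon^2}\fH'(x/\epsilon)^2$ collapses to $M_0^2\fH'(x/\epsilon)^2$, while the potential term becomes $\frac{2\Phi(x/\epsilon)}{M_0^2}$. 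Setting $y\eqdef x/\epsilon$ then turns the resulting identity into the Hamilton-Jacobi equation \eqref{HJ} of $(M_0,\Phi)$ evaluated at $y$. Since $x\mapsto y=x/\epsilon$ is a diffeomorphism between the domain of $u$ and that of $\fH$, the identity holds for all $x$ in the former iff the Hamilton-Jacobi equation holds for all $y$ in the latter, giving the claimed equivalence in both directions.

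I do not expect a genuine obstacle: both parts reduce to the substitution rule for parameters together with the chain rule for the scale transform. The only points deserving care are keeping the two models distinct --- in the first assertion the potential is rescaled by $\epsilon^2$ with no argument shift, whereas in the second it is the scale transform $\Phi_\epsilon$ that enters --- and correctly matching the domains under the reparameterization $y=x/\epsilon$, so that the stated ``iff'' is an honest equivalence rather than a one-way implication.
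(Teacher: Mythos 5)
Your proposal is correct and proceeds by the same underlying computation as the paper: the first claim is the substitution $(M_0,\Phi)\to(\epsilon M_0,\epsilon^2\Phi)$ in \eqref{HJ} (which the paper dismisses as obvious), and the second is the chain rule applied to the scale transform. The paper merely packages the second step differently, observing that $\r(\epsilon^2,1)(M_0,\Phi,\fH)=(\epsilon M_0,\epsilon^2\Phi_\epsilon,\fH_\epsilon)$ and invoking the already-established covariance of the Hamilton-Jacobi equation under the similarity action (Proposition \ref{prop:HJsim}), whereas you verify that covariance by hand in this special case; both arguments are complete and equivalent.
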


\begin{remark}
\label{rem:fHrec}
In particular, the solutions $\fH$ of the Hamilton-Jacobi equation of
$(M_0,\Phi)$ can be recovered from the solutions $u^\Phi$ of \eqref{HJeps}
by replacing $\Phi$ with $\Phi_\epsilon$ and setting
$\fH=u_{1/\epsilon}$:
\ben
\label{fHrec}
\fH=(u^{\Phi_\epsilon})_{1/\epsilon}~~\mathrm{i.e.}~~\fH(x)=u^{\Phi_\epsilon}(\epsilon x)~~.
\een
\end{remark}

\begin{proof}
The first statement is obvious. The second statement follows by noticing that we have: 
\ben
\label{parsim}
\r(\epsilon^2,1)(M_0,\Phi,\fH)=(\epsilon M_0,\epsilon^2\Phi_\epsilon,\fH_\epsilon)~~,
\een
where $\r$ is the Hamilton-Jacobi similarity action of Subsection \ref{subsec:HJsim}.
\end{proof}

\noindent When $\epsilon$ is small, the behavior of $\fH_\epsilon$
away from the origin captures the behavior of $\fH(x)$ for $x$ very
large in absolute value. Indeed, we have $\fH_\epsilon(\pm 1)=\fH(\pm
1/\epsilon)$ and $1/\epsilon$ tends to infinity when
$\epsilon\rightarrow 0$.

Below, we construct a formal solution $u$ of \eqref{HJeps} for all
$M_0>0$ as a power series in $\epsilon^2$. Performing the
transformation \eqref{fHrec} (or \eqref{fHrecM}) will then produce a
``universal'' formal solution $\fH_\IR$ of the Hamilton-Jacobi
equation of the model $(M_0,\Phi)$ such that $\fH_\IR/\fH_0$ is given
by a power series in $M_0^2$ whose coefficients are rational functions
of $\Phi$ and its derivatives (more precisely, they are ratios between
a polynomial in $\Phi$ and its derivatives and a power of
$\Phi$). This also produces a series for the {\em infrared
  Hamilton-Jacobi potential} $V_\IR\eqdef M_0^2\fH_\IR$. Conceptually,
this constructs successive approximations for a ``formal infrared flow
orbit'' $\cO_\IR$ of the model $(M_0,\Phi)$ -- which corresponds to an
actual flow orbit only when the series are uniformly convergent on
some interval of the real axis. The resulting formal power series
expansion of $u$ will have the form:
\be
u=\frac{1}{M_0}\sum_{n\geq 0} (\epsilon M_0)^{2n} U_n~~,
\ee
where $U_n\in \cC^\infty(\R)$ are expressed as functions of $\Phi$ and
its derivatives (as we confirm in Proposition \ref{prop:uuprimeseries}
below). Accordingly,  $\fH_\IR$ has the expansion:
\be
\fH_\IR=\frac{1}{M_0}\sum_{n\geq 0} M_0^{2n} U_n
\ee
which can be obtained from the expansion of $u$ by formally
setting $\epsilon=1$ (a fact which we confirm in Proposition
\ref{prop:subs}). Hence $\fH_\IR$ is given by a
MacLaurent expansion in $M_0$, as could be
expected\footnote{Notice that the expansion of $\fH_\IR$ could be
obtained directly starting from equation \eqref{hatHJ}. However, the
approach followed below allows to separate $M_0$ from the scale
parameter $\epsilon$ and hence is useful for understanding the effect
of RG transformations.}  from the fact that $M_0$ ``counts the number
of derivatives'' in equation \eqref{hatHJ} for the function ${\hat
  \fH}=M_0\fH$ (see Remark \ref{rem:hatHJ}). As a consequence, $V_\IR$ expands as a
power series in $M_0$.

\subsection{Formal IR Hamilton-Jacobi families}

\noindent One can extract a formal equation from \eqref{HJeps} by
taking $J=\R$ and expanding the right hand side as a formal power
series in $\epsilon$. Recall that the element $1+X\in \R[[X]]$ admits
a square root given by the binomial expansion:
\ben
\label{binomial}
(1+X)^{1/2}=\sum_{k\geq 0} a_k X^k=1+\sum_{k\geq 1}a_k X^k~~,
\een
where $a_0=1$ and:
\ben
\label{ak}
a_k=\left(\begin{array}{c}1/2\\ k \end{array}\right)\eqdef \frac{(1/2)(1/2-1)(1/2-2)\ldots (1/2-k+1)}{k!}=(-1)^{k-1} \frac{(2k-3)!}{2^{2k-2} k!(k-2)!}~~\forall k\geq 1
\een
are the generalized binomial coefficients, the first of which are:
\beqan
a_1=\frac{1}{2}~~,~~a_2=-\frac{1}{8}~~,~~a_3=\frac{1}{16}~~,~~a_4=-\frac{5}{128}~~.
\eeqan
Using \eqref{binomial} in \eqref{HJeps} produces the {\em formal deformed Hamilton-Jacobi equation}:
\ben
\label{HJexp}
u=\frac{\sqrt{2\Phi}}{M_0}\left[1+\sum_{k\geq 1} a_k \frac{\epsilon^{2k} M_0^{4k}}{(2\Phi)^{k}} (u')^{2k}\right]~~,
\een
where the unknown:
\ben
\label{uexp}
u=\sum_{n\geq 0} u_n \epsilon^{2n} \in \cC^\infty(\R)[[\epsilon]]
\een
is a formal power series in $\epsilon$ whose coefficients are smooth
univariate functions $u_n\in \cC^\infty(\R)$ and we defined:
\ben
\label{uprimeexp}
u'\eqdef \sum_{n\geq 0} u'_n \epsilon^{2n}\in \cC^\infty(\R)[[\epsilon]]~~.
\een

\begin{definition}
The {\em formal IR Hamilton-Jacobi family} of the model $(M_0,\Phi)$
is the formal power series $u\in \cC^\infty(\R)[[\epsilon]]$ which
satisfies the formal deformed Hamilton-Jacobi equation
\eqref{HJexp}. The {\em formal IR effective potential family} $V_\eff\in
\cC^\infty(\R)[[\epsilon]]$ of the model $(M_0,\Phi)$ is defined
through:
\ben
\label{Veff}
V_\eff \eqdef M_0^2 u~~.
\een
The {\em formal IR Hamilton-Jacobi function}  of the model $(M_0,\Phi)$ is the 
following series with coefficients in the ring $\cC^\infty(\R)$:
\ben
\label{fHIR}
\fH_\IR\eqdef (u^{\Phi_\epsilon})_{1/\epsilon}~~.
\een
The {\em formal IR effective potential} of the model $(M_0,\Phi)$ is
the following series with coefficients in the ring $\cC^\infty(\R)$:
\ben
\label{VIR}
V_\IR\eqdef M_0^2\fH_\IR=M_0^2 (u^{\Phi_\epsilon})_{1/\epsilon}~~,
\een
Here, the right hand sides of \eqref{fHIR} and \eqref{VIR} are
understood as the sequences of their partial sums.
\end{definition}

\noindent We will see in a moment that $u$ is uniquely determined by
$M_0$ and $\Phi$ -- and hence so are $V_\eff$, $\fH_\IR$ and $V_\IR$.

\begin{prop}
\label{prop:un}
The formal IR Hamilton-Jacobi family $u\in \cC^\infty(\R)[[\epsilon]]$
of the model $(M_0,\Phi)$ is unique. Moreover, we have:
\ben
\label{u0}
u_0=\fH_0=\frac{\sqrt{2\Phi}}{M_0}
\een
and the functions $u_n\in \cC^\infty(\R)$ satisfy the recursion relation: 
\ben
\label{unrec}
u_n=\sum_{k=1}^n\sum_{s_1+\ldots +s_{2k}=n-k} a_k M_0^{2k}\frac{u'_{s_1}\ldots u'_{s_{2k}}}{u_0^{2k-1}}~~\forall n\geq 1~~,
\een
where $s_j\in \N$ for all $j\geq 1$.
\end{prop}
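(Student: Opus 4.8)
The plan is to obtain both claims at once by substituting the Ansatz $u=\sum_{n\geq 0}u_n\epsilon^{2n}$ of \eqref{uexp}--\eqref{uprimeexp} into the formal deformed Hamilton-Jacobi equation \eqref{HJexp}, comparing coefficients of $\epsilon^{2n}$, and then reading off uniqueness from the triangular (order-lowering) shape of the resulting relations.

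First I would isolate the $\epsilon^0$ term. Setting $\epsilon=0$ in \eqref{HJexp} annihilates every summand with $k\geq 1$, leaving $u_0=\tfrac{\sqrt{2\Phi}}{M_0}=\fH_0$, which is \eqref{u0}. Since $\Phi>0$ everywhere by assumption, $u_0\in\cC^\infty(\R)$ is smooth and nowhere vanishing, so all subsequent divisions by powers of $u_0$ take place within $\cC^\infty(\R)$. Next I would absorb the prefactors using the elementary identity
\[
\frac{\sqrt{2\Phi}}{M_0}\,\frac{M_0^{4k}}{(2\Phi)^{k}}=\frac{M_0^{2k}}{u_0^{2k-1}},
\]
which follows immediately from $u_0=(2\Phi)^{1/2}M_0^{-1}$ by matching the powers of $2\Phi$ and of $M_0$ on the two sides. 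With this, \eqref{HJexp} takes the cleaner form
\[
u=u_0+\sum_{k\geq 1}a_k\,\epsilon^{2k}\,\frac{M_0^{2k}}{u_0^{2k-1}}\,(u')^{2k}.
\]

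Then I would expand the $2k$-th power of $u'=\sum_{n\geq 0}u_n'\epsilon^{2n}$ as an iterated Cauchy product,
\[
(u')^{2k}=\sum_{m\geq 0}\Big(\sum_{s_1+\cdots+s_{2k}=m}u_{s_1}'\cdots u_{s_{2k}}'\Big)\epsilon^{2m},
\]
so that the $k$-th summand contributes only to powers $\epsilon^{2(k+m)}$. Collecting the coefficient of $\epsilon^{2n}$ for $n\geq 1$ forces $k+m=n$ with $k\geq 1$, i.e. $1\leq k\leq n$ and $m=n-k$, and this is exactly the asserted recursion \eqref{unrec}, with the $a_k$ the generalized binomial coefficients of \eqref{binomial}--\eqref{ak}.

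The only place any care is needed — and the crux of the uniqueness statement — is the index shift that turns \eqref{unrec} into a genuine recursion. Because each right-hand summand carries an explicit factor $\epsilon^{2k}$ with $k\geq 1$, the derivatives $u_{s_j}'$ that occur obey $s_j\leq s_1+\cdots+s_{2k}=n-k\leq n-1<n$. Hence the right-hand side of \eqref{unrec} depends only on $u_0,\dots,u_{n-1}$ (through their $x$-derivatives) and never on $u_n$ itself. Consequently, once $u_0$ is pinned down by \eqref{u0}, the relations \eqref{unrec} determine $u_1,u_2,\dots$ successively and unambiguously; this proves uniqueness by induction on $n$ and simultaneously exhibits the solution. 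Smoothness of each $u_n$ propagates at every step, since it is assembled from $u_0,\dots,u_{n-1}$ and their derivatives by sums, products, and division by the nowhere-vanishing $u_0$. The hard part is thus purely bookkeeping: correctly tracking the $\epsilon$-powers contributed by the binomial factor and by the expansion of $(u')^{2k}$, after which the order-lowering structure makes the conclusion immediate.
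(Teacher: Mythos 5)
Your proposal is correct and follows essentially the same route as the paper: substitute the expansions \eqref{uexp} and \eqref{uprimeexp} into \eqref{HJexp}, compare coefficients of $\epsilon^{2n}$ to obtain \eqref{u0} and the recursion, and rewrite the prefactor $\frac{\sqrt{2\Phi}}{M_0}\frac{M_0^{4k}}{(2\Phi)^{k}}$ as $\frac{M_0^{2k}}{u_0^{2k-1}}$ using \eqref{u0}. Your explicit bookkeeping of the Cauchy product and the observation that $s_j\leq n-k<n$ (which yields uniqueness by induction) simply spell out details the paper leaves implicit.
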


\noindent The first two relations in \eqref{unrec} read:
\beqan
\label{u12}
&&u_1=a_1M_0^2 \frac{(u'_0)^2}{u_0}\Longrightarrow u'_1=a_1M_0^2 \frac{2u_0u'_0u_0''-u_0'^3}{u_0^2}\\
&&u_2=a_2 M_0^4\frac{u_0'^4}{u_0^3}+2a_1 M_0^2\frac{u_0'u_1'}{u_0}=(a_2-2a_1^2)M_0^4 \frac{u_0'^4}{u_0^3}+4a_1^2 M_0^4\frac{(u_0')^2u_0''}{u_0^2}~~.\nn
\eeqan

\begin{proof}
Using the expansions \eqref{uexp} and \eqref{uprimeexp} in equation
\eqref{HJexp} and rearranging terms gives \eqref{u0} and the recursion
relation:
\ben
\label{un}
u_n=\frac{\sqrt{2\Phi}}{M_0} \sum_{k=1}^n \sum_{s_1+\ldots +s_{2k}=n-k} \frac{a_k M_0^{4k}}{(2\Phi)^{k}}u'_{s_1}\ldots u'_{s_{2k}}~~\forall n\geq 1~~.
\een
The latter is equivalent to \eqref{unrec} upon using \eqref{u0}.
\end{proof}

\subsection{Expressing $u$ through $u_0$ and its derivatives}

\paragraph{Preparations.}

Let $\cC^\infty_x$ denote the space of germs of smooth
real-valued univariate functions at $x\in \R$ and $J_x^\nu$ denote the
space of jets of order $\nu$ of such functions at $x$, where $\nu\in
\N\cup \{\infty\}$. The $\R$-vector space $J^\infty_x$ has a natural
ring structure which we recall in Appendix \ref{app:jets}. For any
non-empty open interval $K\subset \R$, any point $x\in J$ and any
smooth real-valued function $f\in \cC^\infty(K)$, let $f_x\in
\cC^\infty_x$ and $\rj^\nu_x(f)\in J^\nu_x$ denote respectively the
germ and jet of order $\nu$ of $f$ at $x$ (see Appendix \ref{app:jets}
for the precise definition of $\rj_x^\nu(f)$). We identify
$J_x^\infty$ with the infinite-dimensional vector space
$\R^\infty\eqdef \R^\N$ by sending $\rj^\infty_x(f)$ to the
{\em reduced infinite jet} $j^\infty_x(f)\in \R^\N$, 
which we define as the infinite vector with components:
\be
[j^\infty_x(f)]^k\eqdef f^{(k)}(x)~~\forall k\in \N~~.
\ee
Accordingly, for any $n\in \N$, we identify $J_x^n$ with the vector
space $\R^{n+1}$ by sending $\rj^n_x(f)$ to the {\em reduced $n$-th
  order jet} $j^n_x(f)\in \R^{n+1}$, which we define as the
$(n+1)$-vector with components:
\be
[j^n_x(f)]^k=f^{(k)}(x)~~\forall k=0,\ldots,n~~.
\ee
This gives linear isomorphisms:
\be
j_x^\nu:J_x^\nu\stackrel{\sim}{\rightarrow}\R^\nu~~\forall \nu\in \N\cup\{\infty\}~~,
\ee
where surjectivity follows from Borel's lemma. For any function $f\in
\cC^\infty(K)$ as above and all $\nu\in \N\cup\{\infty\}$, we denote
by $j^\nu(f):K\rightarrow \R^\nu$ the function defined through:
\be
j^\nu(f)(x)\eqdef j_x^\nu(f)~~\forall x\in K~~.
\ee

For later reference, we define the following cones in $\R^n$ and $\R^\N$:
\be
\R_0^n\eqdef \{w=(w_0,\ldots, w_n)\in \R^n~~\vert~w_0>0\}~~\forall n\in \N~~,~~\R_0^\N \eqdef \{w=(w_n)_{n\geq 0}\in \R^\N~~\vert~w_0>0\}~~.
\ee
For every $n\in \N$, let:
\ben
\label{Xleqn}
w_{\leq n}\eqdef (w_0,\ldots w_n)
\een
denote the $n$-th truncation of the infinite sequence $w=(w_j)_{j\geq
  0}\in \R^\N$ and let $\tau_n:\R^\N\rightarrow \R^{n+1}$ be the
corresponding truncation map:
\be
\tau_n(w)=w_{\leq n}~~\forall w\in \R^\N~~.
\ee

\paragraph{The Hamilton-Jacobi polynomials.}

Since the sum in the right hand side forces $s_1+\ldots
+s_{2k}=n-k<n$ and hence $s_j<n$, relation \eqref{unrec} expresses $u_n$
in terms of $u_0$ and $u'_0,\ldots,u'_{n-1}$. Together with \eqref{u0}, this
determines $u_n$ recursively for all $n\geq 1$ as a function of
$u_0=\frac{\sqrt{2\Phi}}{M_0}$ and its derivatives of order $\leq n$:
\be
u_n(x)=f_n(u^{(0)}_0(x),\ldots,u^{(n)}_0(x))~~\mathrm{i.e.}~~u_n(x)=(f_n\circ j^n_x)(u_0)~~
\ee
for some rational function $f_n:\R_0^{n+1}\rightarrow \R$. The
expansion \eqref{uexp} becomes:
\ben
\label{usol}
u=\sum_{n\geq 0} \epsilon^{2n} (f_n\circ j^n)(u_0)\in \cC^\infty(\R)[[\epsilon]]~~.
\een

To describe the structure of $f_n$, we will use the algebra $\cA$ of
polynomials with real coefficients in a countable set of variables
$X_k$ ($k\in \N$). Formally, $\cA\eqdef \R[\Sigma]\simeq \R[\N]$ is
the free associative and commutative algebra on the countable set:
\be
\Sigma\eqdef \{X_k~\vert~k\in \N\}~~.
\ee
This algebra admits an exhaustive ascending filtration by the
subspaces $\cA^n=\R[X_0,\ldots, X_n]$:
\be
0\subset \cA^0\subset \cA^1\subset \cA^2\subset \ldots \subset \cA~~,~~\cA=\cup_{n\geq 0}\cA^n
\ee
and admits gradings defined by the following degrees:
\begin{itemize}
\item The {\em ordinary polynomial degree} $\deg$, with respect to which:
\be
\deg(X_k)=1~~\forall k\in \N~~.
\ee
\item The {\em order-weighted degree} $\wdeg$, with respect to which:
\be
\wdeg(X_k)=k~~\forall k\in \N~~.  
\ee
\end{itemize}
We combine these into the $\N\times\N$-valued bigrading which associates to each monomial $T$
the bidegree $(\deg(T),\wdeg(T))$ (in this order).

Let $\pd_{X_j}$ be the canonical derivations of $\cA$. Given a
polynomial $T\in \R[X_0,\ldots, X_n]$ and any $\alpha\in \N$, define
$T'\in \R[X_0,\ldots, X_{n+1}]$ and $D T \in \R[X_0,\ldots, X_{n+1}]$
by:
\be
T'\eqdef \sum_{j=0}^n X_{j+1} \frac{\pd T}{\pd X_j}~~,~~D(\alpha) T\eqdef X_0 T'-\alpha X_1 T~~.
\ee
Then $(-)'$ and $D(\alpha)$ are differential operators on $\cA$ of
bidegrees $(0,1)$ and $(1,1)$ respectively; in particular, they map
bi-homogeneous polynomials to bi-homogeneous polynomials. Also notice the
following relation in the algebra of rational functions $\R(\Sigma)$:
\ben
\label{Dfrac}
\left(\frac{T}{X_0^\alpha}\right)'=\frac{D(\alpha) T}{X_0^{\alpha+1}}~~,
\een
which implies:
\ben
\label{Dfrack}
\left(\frac{T}{X_0^\alpha}\right)^{(k)}=\frac{D(\alpha+k-1)\ldots D(\alpha+1)D(\alpha)T}{X_0^{\alpha+k}}~~\forall k\geq 1~~.
\een

\begin{definition}
The {\em Hamilton Jacobi polynomials} $Q_n\in \Q[X_0,\ldots,X_n]$ are
defined through the recursion relation:
\ben
\label{Qrec}
Q_n=\sum_{k=1}^n \sum_{s_1+\ldots+ s_{2k}=n-k} a_k (D(2s_1-1) Q_{s_1})\ldots (D(2s_{2k}-1) Q_{s_{2k}})~~\forall n\geq 1
\een
with the initial condition $Q_0=1$, where $a_k$ are the generalized
binomial coefficients \eqref{ak}.
\end{definition}

\begin{remark}
For all $n\geq 0$, we have:
{\footnotesize
\ben
\label{Dn}  
D(2n-1) Q_n=\left[X_0\sum_{j=0}^{n}  X_{j+1}\frac{\pd}{\pd X_j} -(2n-1) X_1 \right] Q_n=X_1\left[X_0 \frac{\pd}{\pd X_0} -(2n-1)\right]Q_n +
X_0 \sum_{j=1}^{n}  X_{j+1}\frac{\pd}{\pd X_j} Q_n~~.
\een}
\end{remark}

\noindent The first nontrivial polynomials $Q_n$ are (see Appendix
\ref{app:list} for a list of these polynomials up $n=10$):
{\scriptsize \beqan
&&Q_1=\frac{1}{2}X_1^2~~,~~Q_2=-\frac{5}{8}X_1^4+X_0 X_1^2 X_2~~,~~Q_3=\frac{37}{16}X_1^6-\frac{11}{2} X_0 X_1^4 X_2+\frac{5}{2} X_0^2  X_1^2 X_2^2+X_0^2 X_1^3 X_3~~,\\
&& Q_4=-\frac{1773}{128} X_1^8+\frac{347}{8} X_0 X_1^6  X_2-\frac{147}{4} X_0^2  X_1^4 X_2^2+7 X_0^3 X_1^2 X_2^3-\frac{19}{2}X_0^2  X_1^5 X_3 + 9 X_0^3  X_1^3 X_2 X_3+X_0^3 X_1^4  X_4 ~~.\nn
\eeqan}

\begin{prop}
\label{prop:HJpoly}
For all $n\geq 0$, the Hamilton-Jacobi polynomial $Q_n$ is homogeneous
of bidegree $(2n,2n)$. If $n\geq 1$, then the largest power of $X_0$
in the constituent monomials of $Q_n$ is strictly smaller than $n$.
\end{prop}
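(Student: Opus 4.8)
The plan is to prove both assertions by strong induction on $n$, feeding the recursion \eqref{Qrec} through the bidegrees of the operators $(-)'$ and $D(\alpha)$ that were recorded right after their definition. Note first that the recursion only involves $Q_{s_j}$ with $s_j\le n-k\le n-1<n$, so strong induction is legitimate.

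For the homogeneity claim, the base case $Q_0=1$ is homogeneous of bidegree $(0,0)=(2\cdot 0,2\cdot 0)$. For the inductive step, assume $Q_m$ is bi-homogeneous of bidegree $(2m,2m)$ for all $m<n$. Since $D(\alpha)$ has bidegree $(1,1)$, each factor $D(2s_j-1)Q_{s_j}$ appearing in \eqref{Qrec} is homogeneous of bidegree $(2s_j+1,2s_j+1)$. As the summation index satisfies $s_1+\dots+s_{2k}=n-k$, the product of the $2k$ factors is homogeneous of bidegree $\big(\sum_{j}(2s_j+1),\sum_j(2s_j+1)\big)=(2(n-k)+2k,\,2(n-k)+2k)=(2n,2n)$. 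Multiplying by the scalar $a_k$ and summing over all admissible $k$ and $(s_1,\dots,s_{2k})$ preserves this bidegree, so $Q_n$ is homogeneous of bidegree $(2n,2n)$.

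For the bound on the power of $X_0$, write $d_0(T)$ for the largest power of $X_0$ occurring in the monomials of $T$; the aim is $d_0(Q_n)\le n-1$ for $n\ge 1$. The key elementary step is to control how $D(\alpha)$ acts on $d_0$. First I would check that $(-)'$ never raises the $X_0$-power: in $T'=\sum_j X_{j+1}\pd_{X_j}T$ the term $j=0$ is $X_1\pd_{X_0}$, which lowers the $X_0$-power by one, while each term with $j\ge 1$ leaves it unchanged, so $d_0(T')\le d_0(T)$. Consequently $d_0(X_0 T')\le d_0(T)+1$ and $d_0(X_1 T)=d_0(T)$, whence $d_0(D(\alpha)T)\le d_0(T)+1$. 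Separately, the degenerate factor $s_j=0$ must be handled by hand: since $1'=0$, one gets $D(-1)Q_0=X_1$, so $d_0(D(-1)Q_0)=0=s_j$.

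Combining these, for $s_j\ge 1$ the inductive hypothesis $d_0(Q_{s_j})\le s_j-1$ together with the $+1$ bound gives $d_0(D(2s_j-1)Q_{s_j})\le s_j$, while the computation above gives the same bound for $s_j=0$; hence in all cases $d_0(D(2s_j-1)Q_{s_j})\le s_j$. The precise cancellation here is the crux: the $-1$ in the inductive hypothesis exactly offsets the $+1$ introduced by $D$. Therefore every product in \eqref{Qrec} obeys $d_0\big(\prod_{j}D(2s_j-1)Q_{s_j}\big)\le\sum_{j}s_j=n-k\le n-1$, using $k\ge 1$, and since each summand respects this bound, so does $Q_n$, giving $d_0(Q_n)\le n-1<n$. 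I expect the only genuine obstacle to be the bookkeeping of $d_0$ under $D(\alpha)$ — verifying that $(-)'$ does not increase the $X_0$-power and correctly treating the $s_j=0$ factor — since everything else is a direct homogeneity count.
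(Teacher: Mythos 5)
Your proof is correct and follows essentially the same route as the paper: strong induction on $n$ via the recursion \eqref{Qrec}, using that $D(\alpha)$ has bidegree $(1,1)$ for the homogeneity claim and that $D(2s-1)$ raises the maximal $X_0$-power by at most one (with the degenerate case $D(-1)Q_0=X_1$ treated separately) for the bound on $X_0$. The paper's argument is just a terser version of the same induction, so your more explicit bookkeeping of $d_0$ under $(-)'$ and $D(\alpha)$ is exactly the content it leaves implicit.
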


\begin{proof}
The bi-homogeneity statement is true for $n=0$. Let $n\geq 1$ and
assume that the statement holds for all $s<n$. Then relation
\eqref{Qrec} implies that the statement also holds for $n$ and hence
it holds for all $n$ by induction.

The statement about the largest power of $X_0$ in $Q_n$ also follows
by induction over $n$. The statement is obviously true for $n=1$ since
$Q_1=\frac{1}{2}X_1^2$. If $s\geq 1$ and the largest power of $X_0$ in
$Q_s$ is strictly smaller than $s$, then relation \eqref{Dn} shows
that the largest power of $X_0$ in $D(2s-1)Q_s$ is at most $s$. The
latter is also true if $s=0$ since $D(-1)Q_0=X_1$. Now if
$n\geq 2$ and the statement holds for all $0\leq s<n$, then the
previous remark and the recursion relation \eqref{Qrec} shows that it
also holds for $n$.
\end{proof}

\begin{remark}
\label{rem:admissibleQ}
Consider a monomial:
\be
q\eqdef X_0^{k_0}\ldots X_n^{k_n}\in \Q[X_0,\ldots, X_n]~~
\ee
where $n, k_0,\ldots,k_n \in \N$. This monomial has bidegree $(2n,2n)$
iff $k_0,\ldots, k_n$ satisfy:
\ben
\label{kconds}
\sum_{j=0}^n{k_j}=2n~~\&~~\sum_{j=0}^n j k_j=2n\Longleftrightarrow \sum_{j=1}^n j k_j=2n ~~\&~~ k_0=2n-\sum_{j=1}^{n} k_j~~.
\een
If $n\geq 1$ and $q$ appears as a monomial in $Q_n$, then Proposition
\ref{prop:HJpoly} shows that we also have $k_0<n$, i.e.  $\sum_{j=1}^n
k_j>n$. 
\end{remark}

\paragraph{The functions $f_n$.}

\begin{prop}
\label{prop:unpoly}
For any $n\geq 0$, we have:
\ben
\label{ununprime}
u_n(x)=M_0^{2n}\frac{Q_n(j^n_x(u_0))}{u_0(x)^{2n-1}}~~\mathrm{and}~~u'_n(x)=M_0^{2n}\frac{(D(2n-1) Q_n)(j^{n+1}_x(u_0))}{u_0(x)^{2n}}~~.
\een
\end{prop}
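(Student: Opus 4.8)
The plan is to establish both identities in \eqref{ununprime} simultaneously by induction on $n$, organized in a staggered fashion: the first identity at level $n$ is deduced from the second identity at all strictly lower levels via the recursion \eqref{unrec}, and the second identity at level $n$ is then obtained from the first identity at level $n$ by a single differentiation. The device that links the algebraic derivation $(-)'$ on $\cA$ to honest differentiation in $x$ is the chain rule: for any polynomial $P\in \R[X_0,\ldots,X_m]$ one has
\[
\frac{\dd}{\dd x}\left[P(j^m_x(u_0))\right]=\sum_{j=0}^m \frac{\pd P}{\pd X_j}(j^m_x(u_0))\, u_0^{(j+1)}(x)=P'(j^{m+1}_x(u_0))~~,
\]
since $\frac{\dd}{\dd x}u_0^{(j)}=u_0^{(j+1)}$ identifies the effect of $\frac{\dd}{\dd x}$ with the shift $X_j\mapsto X_{j+1}$ encoded by $(-)'$. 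Because $u_0=\frac{\sqrt{2\Phi}}{M_0}>0$ everywhere (as $\Phi>0$), this extends verbatim to the rational function $Q_n/X_0^{2n-1}$, whose evaluation at the jets of $u_0$ is well-defined; indeed relation \eqref{Dfrac} with $\alpha=2n-1$ is precisely the evaluated chain rule for such a quotient.

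Granting the first identity at level $n$, the second then follows at once. Writing $u_n=M_0^{2n}\big(Q_n/X_0^{2n-1}\big)(j^n_x(u_0))$ and differentiating via the chain rule gives
\[
u_n'=M_0^{2n}\left(\frac{Q_n}{X_0^{2n-1}}\right)'(j^{n+1}_x(u_0))=M_0^{2n}\frac{(D(2n-1)Q_n)(j^{n+1}_x(u_0))}{u_0(x)^{2n}}~~,
\]
where the last equality is \eqref{Dfrac}. This is the second identity in \eqref{ununprime}.

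For the inductive step of the first identity, I substitute the second identity (inductive hypothesis) for each factor $u'_{s_j}$ into the recursion \eqref{unrec}. Since $s_1+\ldots+s_{2k}=n-k$ with $k\geq 1$ forces every $s_j\leq n-1<n$, the hypothesis applies and yields $u'_{s_j}=M_0^{2s_j}(D(2s_j-1)Q_{s_j})(j^{s_j+1}_x(u_0))/u_0^{2s_j}$. Collecting prefactors, the powers of $M_0$ combine as $M_0^{2k}\prod_j M_0^{2s_j}=M_0^{2k+2(n-k)}=M_0^{2n}$ and the denominator powers of $u_0$ as $u_0^{2k-1}\prod_j u_0^{2s_j}=u_0^{(2k-1)+2(n-k)}=u_0^{2n-1}$, uniformly across every term. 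Since the evaluation map $X_k\mapsto u_0^{(k)}(x)$ is a ring homomorphism, the product of the evaluated factors equals the evaluation of $\prod_{j=1}^{2k}\big(D(2s_j-1)Q_{s_j}\big)$; summing over $k$ and over the compositions $s_1+\ldots+s_{2k}=n-k$ then reproduces exactly the right-hand side of the defining recursion \eqref{Qrec} for $Q_n$, evaluated at $j^n_x(u_0)$. Hence $u_n=M_0^{2n}Q_n(j^n_x(u_0))/u_0(x)^{2n-1}$, as claimed.

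It remains to record the base case and to flag the one point requiring care. For $n=0$ the first identity reads $u_0=M_0^{0}\,Q_0(j^0_x(u_0))/u_0^{-1}=u_0$ since $Q_0=1$, and differentiating gives $u_0'=(D(-1)Q_0)(j^1_x(u_0))=u_0'$ because $D(-1)Q_0=X_0\cdot 0+X_1\cdot 1=X_1$; this also confirms the second identity at $n=0$. The main subtlety is purely organizational: the induction must be run in the staggered order above, so that at each level one first derives the first identity (legitimately using only second-identity data from strictly lower levels, which is available since $s_j<n$) and only afterwards differentiates to obtain the second identity at the same level. The genuine content is the compatibility of the two recursions \eqref{unrec} and \eqref{Qrec} under jet evaluation, guaranteed by the chain rule together with \eqref{Dfrac}; once this is in place, the matching of the $M_0$ and $u_0$ exponents is routine bookkeeping.
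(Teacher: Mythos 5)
Your proof is correct and follows essentially the same route as the paper: induction on $n$, using the recursion \eqref{unrec} together with \eqref{Qrec} to get the formula for $u_n$ from the derivative formulas at lower levels, and then \eqref{Dfrac} (the evaluated chain rule) to pass to $u_n'$. You merely make explicit the exponent bookkeeping and the jet-evaluation chain rule that the paper leaves implicit, and your staggered induction starting at $n=0$ also absorbs the $n=1$ case that the paper checks separately.
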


\begin{proof}
We proceed by induction over $n$. Relations \eqref{ununprime} hold for
$n=0$ since $Q_0=1$. They also hold for $n=1$ by the first equation in
\eqref{u12}. Assume that $n\geq 2$ and that \eqref{ununprime} holds
for all $s=0,\ldots, n-1$. Then the second relation in
\eqref{ununprime} gives:
\be
u_s'(x)=M_0^{2s}\frac{(D(2s-1)Q_s)(u_0^{(0)}(x),\ldots, u^{(s+1)}_0(x))}{u_0(x)^{2s}}~~\forall s=0,\ldots, n-1~~.
\ee
Now the recursion relation \eqref{unrec} and the definition of $Q_n$ implies that
the first relation in \eqref{ununprime} holds for $u_n$. In turn, this implies that the
second relation in \eqref{ununprime} holds for $u_n$. By induction we conclude that
relations \eqref{ununprime} hold for all $n\geq 0$.
\end{proof}

\begin{cor}
\label{cor:f}
The rational functions $f_n$ are homogeneous of bidegree $(1,2n)$, being given by:
\be
f_n(w_0,\ldots, w_n)=M_0^{2n}\frac{Q_n(w_0,\ldots, w_n)}{w_0^{2n-1}}~~\forall n\geq 0~~\forall (w_0,\ldots,w_n)\in \R_0^{n+1}~~.
\ee
In particular, we have $f_0(w_0)=w_0$.
\end{cor}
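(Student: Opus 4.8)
The plan is to obtain the closed form of $f_n$ essentially for free from Proposition \ref{prop:unpoly}, and then to read off the bidegree from the bihomogeneity of $Q_n$ established in Proposition \ref{prop:HJpoly}. So the corollary is really a bookkeeping consequence of the two preceding results.

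First I would recall how $f_n$ is characterized: it is the rational function supplied by the recursion \eqref{unrec}, determined by the identity $u_n(x)=(f_n\circ j^n_x)(u_0)$, where $u_0=\sqrt{2\Phi}/M_0$. The key observation is that as $\Phi$ ranges over all smooth everywhere-positive functions, $u_0$ does so too, and hence by Borel's lemma (invoked above to prove surjectivity of $j^\nu_x$) the jet $j^n_x(u_0)$ realizes every point of the cone $\R_0^{n+1}$. Consequently the identity of Proposition \ref{prop:unpoly},
\[
u_n(x)=M_0^{2n}\frac{Q_n(j^n_x(u_0))}{u_0(x)^{2n-1}},
\]
holds after substituting an arbitrary point of $\R_0^{n+1}$ for $j^n_x(u_0)$, which pins down
\[
f_n(w_0,\ldots,w_n)=M_0^{2n}\frac{Q_n(w_0,\ldots,w_n)}{w_0^{2n-1}}
\]
as an equality of rational functions on $\R_0^{n+1}$ (the constraint $w_0>0$ making the right-hand side well defined). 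Specializing to $n=0$ with $Q_0=1$ gives $f_0(w_0)=w_0$, the final assertion.

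The remaining step is the bidegree count. By Proposition \ref{prop:HJpoly}, $Q_n$ is homogeneous of bidegree $(2n,2n)$. The scalar prefactor $M_0^{2n}$ leaves both degrees unchanged, while dividing by $w_0^{2n-1}$ subtracts $2n-1$ from the ordinary degree and $0$ from the order-weighted degree (since $\wdeg(X_0)=0$). This yields $\deg(f_n)=2n-(2n-1)=1$ and $\wdeg(f_n)=2n$, i.e. bidegree $(1,2n)$.

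I expect no genuine obstacle here. The only point that calls for a sentence of justification is the passage from the identity holding on the jets of $u_0$ to an identity of rational functions on the entire cone $\R_0^{n+1}$; this is exactly where Borel's lemma, together with the freedom in choosing $\Phi$, is needed.
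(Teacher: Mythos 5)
Your proposal is correct and follows essentially the same route as the paper, which states this corollary without proof as an immediate consequence of Proposition \ref{prop:unpoly} together with the bihomogeneity of $Q_n$ from Proposition \ref{prop:HJpoly}. The only extra ingredient you supply is the Borel's-lemma argument identifying the rational functions on all of $\R_0^{n+1}$; this is harmless but not strictly needed, since $f_n$ is already defined by the algebraic recursion \eqref{unrec} in the jet variables, so the inductive identity of Proposition \ref{prop:unpoly} is an identity of rational functions from the outset.
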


\noindent In the proposition below and in what follows, we use the
same notation for a polynomial and the polynomial function which it
induces by evaluation over a field.

\begin{prop}
\label{prop:uuprimeseries}
We have:
\ben
\label{uexpQ}
u=\sum_{n\geq 0} (\epsilon M_0)^{2n} \frac{(Q_n\circ j^n)(u_0)}{u_0^{2n-1}}=\sum_{n\geq 0} \epsilon^{2n}  M_0^{2n-1}\frac{(Q_n\circ j^n)(\sqrt{2\Phi})}{(2\Phi)^{n-1/2}}
\een
and:
\ben
\label{uprimeexpQ}
u'=\sum_{n\geq 0} (\epsilon M_0)^{2n}\frac{((D(2n-1)Q_n)\circ j^{n+1})(u_0)}{u_0^{2n}}=\sum_{n\geq 0} \epsilon^{2n} M_0^{2n-1}\frac{((D(2n-1)Q_n)\circ j^{n+1})(\sqrt{2\Phi})}{(2\Phi)^{n}}~~.
\een
\end{prop}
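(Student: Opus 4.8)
The plan is to obtain both displayed identities by substituting the closed-form expressions for $u_n$ and $u'_n$ established in Proposition \ref{prop:unpoly} into the defining expansions \eqref{uexp} and \eqref{uprimeexp}, and then to convert the resulting formulas from the jet of $u_0$ to the jet of $\sqrt{2\Phi}$ using the homogeneity of the Hamilton-Jacobi polynomials recorded in Proposition \ref{prop:HJpoly}.

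First I would establish the first equality in each of \eqref{uexpQ} and \eqref{uprimeexpQ}. By \eqref{uexp} we have $u=\sum_{n\geq 0}u_n\epsilon^{2n}$, and Proposition \ref{prop:unpoly} gives $u_n=M_0^{2n}(Q_n\circ j^n)(u_0)/u_0^{2n-1}$. Substituting and absorbing $M_0^{2n}$ into the scale factor yields $u=\sum_{n\geq 0}(\epsilon M_0)^{2n}(Q_n\circ j^n)(u_0)/u_0^{2n-1}$, which is the first equality in \eqref{uexpQ}. The identical argument applied to \eqref{uprimeexp} together with the second formula of Proposition \ref{prop:unpoly} produces the first equality in \eqref{uprimeexpQ}.

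For the second equalities I would invoke \eqref{u0}, namely $u_0=\sqrt{2\Phi}/M_0$. Since $M_0$ is constant, differentiation gives $u_0^{(k)}=M_0^{-1}(\sqrt{2\Phi})^{(k)}$ for all $k$, so that $j^n_x(u_0)=M_0^{-1}j^n_x(\sqrt{2\Phi})$ componentwise. By Proposition \ref{prop:HJpoly} the polynomial $Q_n$ is homogeneous of ordinary polynomial degree $\deg=2n$, hence $(Q_n\circ j^n)(u_0)=M_0^{-2n}(Q_n\circ j^n)(\sqrt{2\Phi})$; combined with $u_0^{2n-1}=(2\Phi)^{n-1/2}/M_0^{2n-1}$ this turns the $n$-th term into $\epsilon^{2n}M_0^{2n-1}(Q_n\circ j^n)(\sqrt{2\Phi})/(2\Phi)^{n-1/2}$, which is the second equality in \eqref{uexpQ}. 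For \eqref{uprimeexpQ} I would note that $D(2n-1)Q_n$ has $\deg=2n+1$, since $D(\alpha)$ has bidegree $(1,1)$ and therefore raises $\deg$ by one; the same scaling then produces the factor $M_0^{-(2n+1)}$, and together with $u_0^{2n}=(2\Phi)^n/M_0^{2n}$ this recovers the stated prefactor $M_0^{2n-1}$ and the denominator $(2\Phi)^n$.

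There is no serious obstacle here: the statement is a direct consequence of Proposition \ref{prop:unpoly} together with the homogeneity furnished by Proposition \ref{prop:HJpoly}. The only point demanding care is the bookkeeping of the powers of $M_0$ --- specifically, verifying that the factor $M_0^{2n}$ in $u_n$, the factor $M_0^{-2n}$ arising from the $\deg$-homogeneity of $Q_n$, and the factor $M_0^{-(2n-1)}$ from $u_0^{2n-1}$ combine to the net exponent $2n-1$ asserted in the second form, with the analogous cancellation (using $\deg D(2n-1)Q_n=2n+1$ and $u_0^{2n}$) for the expansion of $u'$.
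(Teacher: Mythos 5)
Your proposal is correct and follows essentially the same route as the paper: substitute the closed forms of $u_n$ and $u'_n$ from Proposition \ref{prop:unpoly} (equivalently, Corollary \ref{cor:f}) into the expansions \eqref{uexp} and \eqref{uprimeexp}, then pass to the jet of $\sqrt{2\Phi}$ via \eqref{u0} and the $\deg$-homogeneity of $Q_n$ and $D(2n-1)Q_n$ of degrees $2n$ and $2n+1$. The $M_0$-bookkeeping you carry out is exactly the point the paper leaves implicit, and it checks out.
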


\begin{proof}
Relations \eqref{uexpQ} and \eqref{uprimeexpQ} follow immediately from
Corollary \ref{cor:f} and the expansion \eqref{uexp} upon using
\eqref{u0} and the fact that $Q_n$ and $D(2n-1) Q_n$ are homogeneous
of degrees $2n$ and $2n+1$ respectively with respect to the ordinary
polynomial grading.
\end{proof}

\subsection{The expansions of $\fH_\IR$ and $V_\IR$}

\begin{definition}
The {\em weighted scaling operator} of order $n\in \N$ and parameter
$\kappa>0$ is the linear operator $\fS_n(\kappa):\R^{n+1}\rightarrow
\R^{n+1}$ defined through:
\be
\fS_n(\kappa)(w_0,\ldots, w_n)\eqdef (w_0,\kappa w^1,\ldots, \kappa^n w_n)~~\forall (w_0,\ldots, w_n)\in \R^{n+1}.
\ee
\end{definition}

\noindent Notice that $\fS_0=\id_\R$.

\begin{lemma}
\label{lemma:unscalings}
The functions $u_n\in \cC^\infty(\R)$ are homogeneous of degree $1/2$
under a positive constant rescaling of $\Phi$ and homogeneous of
degree $2n-1$ under a positive constant rescaling of $M_0$:
\be
u_n^{\xi \Phi}=\xi^{1/2} u_n^\Phi~~,~~u_n^{\xi M_0}=\xi^{2n-1} u_n^{M_0}~~\forall \xi>0~~.
\ee
Moreover, they behave as follows under a scale transformation of $\Phi$:
\be
u_n^{\Phi_\epsilon}=\frac{1}{\epsilon^{2n}} (u_n^\Phi)_\epsilon~~\forall \epsilon>0~~
\ee
and hence they satisfy:
\be
(u_n^{\Phi_\epsilon})_{1/\epsilon}=\frac{1}{\epsilon^{2n}} u_n^\Phi~~\forall \epsilon>0~~.
\ee
\end{lemma}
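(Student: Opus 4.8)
The plan is to derive all four identities from the closed-form expression for $u_n$ obtained in Proposition \ref{prop:unpoly},
\[
u_n(x) = M_0^{2n}\,\frac{Q_n(j^n_x(u_0))}{u_0(x)^{2n-1}},
\]
combined with the two homogeneity properties of $Q_n$ recorded in Proposition \ref{prop:HJpoly}, namely that $Q_n$ is bihomogeneous of bidegree $(2n,2n)$ with respect to both the ordinary degree $\deg$ and the order-weighted degree $\wdeg$. Since $u_0 = \fH_0 = \frac{\sqrt{2\Phi}}{M_0}$, each of the four operations on the parameters affects $u_n$ only through its action on the reduced jet $j^n_x(u_0)$ and on the prefactors $M_0^{2n}$ and $u_0(x)^{2n-1}$, so the whole argument reduces to bookkeeping of scaling exponents. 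I would carry out the four cases in the order listed, since the first two are simplest and fix the pattern.

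First I would handle the two constant rescalings, each of which uses only the ordinary homogeneity of $Q_n$ (degree $2n$). Under $\Phi\to\xi\Phi$ one has $u_0\to\xi^{1/2}u_0$, so every component of $j^n_x(u_0)$ is multiplied by the same constant $\xi^{1/2}$; ordinary homogeneity then contributes $\xi^n$ to the numerator and the denominator contributes $\xi^{(2n-1)/2}$, giving the net exponent $n-\tfrac{2n-1}{2}=\tfrac12$. Under $M_0\to\xi M_0$ one has $u_0\to\xi^{-1}u_0$, yielding $\xi^{-2n}$ from $Q_n$, $\xi^{-(2n-1)}$ from $u_0^{2n-1}$ in the denominator, and the explicit $\xi^{2n}$ from $M_0^{2n}$, for a net exponent $2n-2n+(2n-1)=2n-1$.

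The third identity, for the scale transform $\Phi\to\Phi_\epsilon$, is the one where the weighted degree enters and where the operator $\fS_n(\kappa)$ defined just above becomes relevant. Because $\Phi_\epsilon(x)=\Phi(x/\epsilon)$ we have $u_0^{\Phi_\epsilon}(x)=u_0^\Phi(x/\epsilon)$, and the chain rule gives $(u_0^{\Phi_\epsilon})^{(k)}(x)=\epsilon^{-k}(u_0^\Phi)^{(k)}(x/\epsilon)$. The crucial observation — the conceptual heart of the lemma — is that these per-derivative factors $\epsilon^{-k}$ assemble exactly into the weighted scaling operator, i.e.
\[
j^n_x(u_0^{\Phi_\epsilon}) = \fS_n(\epsilon^{-1})\bigl(j^n_{x/\epsilon}(u_0^\Phi)\bigr),
\]
so that it is the weighted (not the ordinary) homogeneity of $Q_n$ that must be applied. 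That homogeneity contributes the single factor $\epsilon^{-2n}$, while the denominator $u_0^{\Phi_\epsilon}(x)^{2n-1}=u_0^\Phi(x/\epsilon)^{2n-1}$ is unchanged, yielding $u_n^{\Phi_\epsilon}(x)=\epsilon^{-2n}u_n^\Phi(x/\epsilon)=\epsilon^{-2n}(u_n^\Phi)_\epsilon(x)$.

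Finally, the fourth identity follows by applying the inverse scale transform $(-)_{1/\epsilon}$ to the third and using $((u_n^\Phi)_\epsilon)_{1/\epsilon}=u_n^\Phi$, which holds because scale transforms form an action of $\R_{>0}$ on $\cC^\infty(\R)$. The computation is otherwise mechanical; the only step demanding real care is recognizing that the chain-rule factors in the third case repackage precisely into $\fS_n(\epsilon^{-1})$, which is what forces the use of the weighted rather than the ordinary homogeneity of $Q_n$.
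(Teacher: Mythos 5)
Your proposal is correct and follows essentially the same route as the paper: it rests on the closed form $u_n=M_0^{2n}Q_n(j^n(u_0))/u_0^{2n-1}$, the scaling of $u_0$ under each operation, the identity $j^n_x(u_0^{\Phi_\epsilon})=\fS_n(1/\epsilon)\,(j^n_x(u_0^\Phi))_\epsilon$, and the bihomogeneity of $Q_n$ of bidegree $(2n,2n)$. The only cosmetic difference is that you invoke Proposition \ref{prop:unpoly} directly where the paper cites its repackaged form, Proposition \ref{prop:uuprimeseries}; the exponent bookkeeping in all four cases checks out.
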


\begin{proof}
Notice that $u_0=\frac{\sqrt{2\Phi}}{M_0}$ is homogeneous of degree
$1/2$ respectively $-1$ under a positive constant rescaling of $\Phi$ or $M_0$:
\ben
\label{u0hom}
u_0^{\xi \Phi}=\xi^{1/2}u_0^\Phi~~,~~u_0^{\xi M_0}=\xi^{-1}u_0^{M_0}~~\forall \xi>0
\een
and that it transforms as follows under a scale transformation of $\Phi$:
\be
u_0^{\Phi_\epsilon}=(u_0^\Phi)_\epsilon~~\forall \epsilon>0~~.
\ee
The last relation implies: 
\ben
\label{ju0scaling}
j^n_x(u_0^{\Phi_\epsilon})=\fS_n(1/\epsilon) (j^n_x(u_0^\Phi))_\epsilon~~\forall \epsilon>0~~.
\een
Proposition \ref{prop:uuprimeseries} now implies the conclusion upon
using \eqref{u0hom} and \eqref{ju0scaling} and the fact that $Q_n$ is
homogeneous of bidegree $(2n,2n)$.
\end{proof}

\begin{prop}
\label{prop:subs}
The formal IR Hamilton-Jacobi function and formal IR effective
potential can be obtained respectively by substituting $\epsilon=1$ in
$u$ and $V_\eff$:
\ben
\label{fHIRexp0}
\fH_\IR=u\vert_{\epsilon=1}=\sum_{n\geq 0} M_0^{2n} \frac{(Q_n\circ j^n)(u_0)}{u_0^{2n-1}}=\sum_{n\geq 0} M_0^{2n-1}\frac{(Q_n\circ j^n)(\sqrt{2\Phi})}{(2\Phi)^{n-1/2}}
\een
and:
\ben
\label{VIRexp0}
V_\IR=M_0^2 \fH_\IR=\sum_{n\geq 0} M_0^{2n+1}\frac{(Q_n\circ j^n)(\sqrt{2\Phi})}{(2\Phi)^{n-1/2}}~~,
\een
where the series in the right hand side are understood as sequences of
their partial sums.
\end{prop}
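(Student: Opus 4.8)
The plan is to reduce the whole statement to the single identity $\fH_\IR=u\vert_{\epsilon=1}$; once this is in hand, the two explicit series for $\fH_\IR$ follow by setting $\epsilon=1$ in the expansion \eqref{uexpQ} of Proposition \ref{prop:uuprimeseries}, while the assertions about $V_\IR$ are immediate from $V_\eff=M_0^2 u$ together with $V_\IR=M_0^2\fH_\IR$.

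First I would write the formal IR Hamilton-Jacobi family for the rescaled potential $\Phi_\epsilon$ as its defining power series $u^{\Phi_\epsilon}=\sum_{n\geq 0} u_n^{\Phi_\epsilon}\,\epsilon^{2n}$ in the sense of \eqref{uexp}. The scale transform $(\cdot)_{1/\epsilon}$ is an $\R$-linear operation on functions (it merely precomposes with $x\mapsto \epsilon x$), so it acts coefficientwise on this series; interpreting $\fH_\IR=(u^{\Phi_\epsilon})_{1/\epsilon}$ through its partial sums, as prescribed by the definition \eqref{fHIR}, this manipulation is legitimate and needs no convergence input. Hence $\fH_\IR=\sum_{n\geq 0}\big(u_n^{\Phi_\epsilon}\big)_{1/\epsilon}\,\epsilon^{2n}$.

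The key step is to invoke the last identity of Lemma \ref{lemma:unscalings}, namely $\big(u_n^{\Phi_\epsilon}\big)_{1/\epsilon}=\epsilon^{-2n}\,u_n^\Phi$. Substituting this into the previous display cancels the factors $\epsilon^{2n}$ and $\epsilon^{-2n}$ termwise, so that each coefficient collapses to $u_n^\Phi$, independent of $\epsilon$; summing yields $\fH_\IR=\sum_{n\geq 0} u_n^\Phi=u^\Phi\vert_{\epsilon=1}$, which is exactly the claimed substitution rule. Feeding in the explicit form \eqref{uexpQ} of $u$ at $\epsilon=1$ and using $u_0=\frac{\sqrt{2\Phi}}{M_0}$ from \eqref{u0} then produces the two expressions for $\fH_\IR$ in \eqref{fHIRexp0}; multiplying by $M_0^2$ and recalling $V_\eff=M_0^2 u$ gives \eqref{VIRexp0} and identifies $V_\IR$ with $V_\eff\vert_{\epsilon=1}$.

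I expect no essential obstacle: the argument is bookkeeping in the scaling exponents. The only point deserving care is the interplay between the formal expansion parameter $\epsilon$ and the scale transform, which itself carries $\epsilon$; the resolution is that the scale transform is applied to the function-valued coefficients $u_n^{\Phi_\epsilon}$, whereupon the two occurrences of $\epsilon$ cancel exactly by Lemma \ref{lemma:unscalings}, leaving a genuine series in $M_0$ with coefficients in $\cC^\infty(\R)$ rather than a series in $\epsilon$.
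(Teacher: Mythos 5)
Your argument is correct and follows the same route as the paper's own (very terse) proof: both rest on the definition \eqref{fHIR}, the coefficientwise application of the scale transform to the partial sums, and the cancellation $(u_n^{\Phi_\epsilon})_{1/\epsilon}\,\epsilon^{2n}=u_n^\Phi$ supplied by Lemma \ref{lemma:unscalings}, followed by inserting the explicit expansion of Proposition \ref{prop:uuprimeseries}. Your write-up simply makes explicit the bookkeeping the paper leaves implicit.
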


\begin{proof}
Follows immediately from \eqref{fHIR} and \eqref{VIR} using Proposition
\ref{prop:uuprimeseries} and Lemma \ref{lemma:unscalings}.
\end{proof}

\noindent Let us momentarily denote $\fH_\IR$ by
$\fH_\IR^{(M_0,\Phi)}$ to indicate its dependence on the parameters of
the model. The following result shows that the set:
\be
\{(M_0,\Phi,\fH_\IR^{(M_0,\Phi)})\vert (M_0,\Phi)\in \Par\}
\ee
is invariant under the Hamilton-Jacobi similarity action $\r$ of the
group $\rT=\R_{>0}^2$ (see Subsection \ref{subsec:HJsim}) and in
particular under the action $\r_\ren$ of the renormalization group
$\rT_\ren$. 

\begin{cor}
The functions $\fH_\IR$ and $V_\IR$ transform as follows under the
action $\rho_\param$ of the universal similarity group $\rT$ on the
set of parameters $(M_0,\Phi)$:
\beqa
&& \fH_\IR^{\rho_\param(\lambda,\epsilon)(M_0,\Phi)}=\frac{1}{\epsilon}(\fH_\IR^{(M_0,\Phi)})_{\lambda^{1/2}}=\r_0(\lambda,\epsilon)(\fH_\IR^{(M_0,\Phi)})\nn\\
&& V_\IR^{\rho_\param(\lambda,\epsilon)(M_0,\Phi)}= \frac{\lambda}{\epsilon}(V_\IR^{(M_0,\Phi)})_{\lambda^{1/2}}=\lambda\,\r_0(\lambda,\epsilon)(V_\IR^{(M_0,\Phi)})~~,
\eeqa
where $(\lambda,\epsilon)\in \rT$ and $\r_0(\lambda,\epsilon)$ (which
was given in Definition \ref{def:rdef}) is applied to the partial sums
of series. In particular, $\fH_\IR$ and $V_\IR$ satisfy:
\be
\fH_\IR^{\rho_\ren^\param(\epsilon)(M_0,\Phi)}=\frac{1}{\epsilon}(\fH_\IR^{(M_0,\Phi)})_{\epsilon}=\r_\ren^0(\epsilon)(\fH_\IR^{(M_0,\Phi)})~~,~~
V_\IR^{\rho_\ren^\param(\epsilon)(M_0,\Phi)}= \epsilon (V_\IR^{(M_0,\Phi)})_{\epsilon}=\epsilon^2 \r_\ren^0(\epsilon)(V_\IR^{(M_0,\Phi)})~~
\ee
for all $\epsilon>0$.
\end{cor}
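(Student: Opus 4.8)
The plan is to reduce the statement to the transformation properties of the individual coefficients $u_n$ established in Lemma \ref{lemma:unscalings}, and then sum. First I would invoke Proposition \ref{prop:subs}, which identifies $\fH_\IR^{(M_0,\Phi)}$ with the series $\sum_{n\geq 0} u_n^{(M_0,\Phi)}$ obtained by setting $\epsilon=1$ in the formal IR Hamilton-Jacobi family $u=\sum_n u_n\epsilon^{2n}$. Since the scale transform $(-)_{\lambda^{1/2}}$ and multiplication by the constant $M_0^2$ are both linear operations that commute with the (partial) summation, it suffices to track how each $u_n$ transforms under the parameter action and then reassemble the series.

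Next I would decompose $\rho_\param(\lambda,\epsilon)(M_0,\Phi)=(\lambda^{1/2}M_0,\tfrac{\lambda}{\epsilon^2}\Phi_{\lambda^{1/2}})$ into the three elementary moves whose effect on $u_n$ is recorded in Lemma \ref{lemma:unscalings}: a scale transform $\Phi\mapsto\Phi_{\lambda^{1/2}}$, a constant rescaling $\Phi\mapsto\tfrac{\lambda}{\epsilon^2}\Phi$, and a constant rescaling $M_0\mapsto\lambda^{1/2}M_0$. Applying the scale-transform rule $u_n^{\Phi_\epsilon}=\epsilon^{-2n}(u_n^\Phi)_\epsilon$ at $\epsilon=\lambda^{1/2}$, then the $\Phi$-homogeneity $u_n^{\xi\Phi}=\xi^{1/2}u_n^\Phi$ at $\xi=\lambda/\epsilon^2$, and finally the $M_0$-homogeneity $u_n^{\xi M_0}=\xi^{2n-1}u_n^{M_0}$ at $\xi=\lambda^{1/2}$, the accumulated powers of $\lambda$ are $\lambda^{-n}\cdot\lambda^{1/2}\cdot\lambda^{n-1/2}=1$, while the scale-transform factor collapses to $1/\epsilon$. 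This yields the key identity $u_n^{\rho_\param(\lambda,\epsilon)(M_0,\Phi)}=\tfrac{1}{\epsilon}(u_n^{(M_0,\Phi)})_{\lambda^{1/2}}$, uniformly in $n$. Summing over $n$ and using linearity of the scale transform gives $\fH_\IR^{\rho_\param(\lambda,\epsilon)(M_0,\Phi)}=\tfrac{1}{\epsilon}(\fH_\IR^{(M_0,\Phi)})_{\lambda^{1/2}}$, which is precisely $\r_0(\lambda,\epsilon)(\fH_\IR^{(M_0,\Phi)})$ by Definition \ref{def:rdef}.

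For $V_\IR$ I would use $V_\IR^{(M_0',\Phi')}=(M_0')^2\fH_\IR^{(M_0',\Phi')}$ with the \emph{transformed} mass $M_0'=\lambda^{1/2}M_0$. Thus $V_\IR^{\rho_\param(\lambda,\epsilon)(M_0,\Phi)}=\lambda M_0^2\cdot\tfrac{1}{\epsilon}(\fH_\IR^{(M_0,\Phi)})_{\lambda^{1/2}}$, and since multiplication by the constant $M_0^2$ commutes with the scale transform one has $M_0^2(\fH_\IR)_{\lambda^{1/2}}=(V_\IR)_{\lambda^{1/2}}$, so that $V_\IR^{\rho_\param(\lambda,\epsilon)(M_0,\Phi)}=\tfrac{\lambda}{\epsilon}(V_\IR^{(M_0,\Phi)})_{\lambda^{1/2}}=\lambda\,\r_0(\lambda,\epsilon)(V_\IR^{(M_0,\Phi)})$. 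The renormalization-group statements then follow by specializing to $\lambda=\epsilon^2$ (so that $\rho_\param(\epsilon^2,\epsilon)=\rho_\ren^\param(\epsilon)$ and $\lambda^{1/2}=\epsilon$) and rewriting $\tfrac{1}{\epsilon}(-)_\epsilon=\r_\ren^0(\epsilon)$, which gives $\epsilon(V_\IR)_\epsilon=\epsilon^2\,\r_\ren^0(\epsilon)(V_\IR)$ for the potential.

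I expect the only real pitfall to be bookkeeping: one must verify that the three elementary transformations of Lemma \ref{lemma:unscalings} genuinely compose to $\rho_\param(\lambda,\epsilon)$ and that they are applied in a consistent order (the $\Phi$-rescaling and $\Phi$-scale-transform commute, and the $M_0$-rescaling acts on the independent slot), and---crucially for $V_\IR$---that the prefactor $M_0^2$ is evaluated at the new mass $\lambda^{1/2}M_0$, which is exactly what produces the extra factor of $\lambda$ relative to the $\fH_\IR$ formula. Everything else is linearity of the scale transform and the cancellation of the $\lambda$-powers.
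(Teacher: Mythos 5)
Your proposal is correct and follows essentially the same route as the paper: the paper's proof likewise reduces the claim to Lemma \ref{lemma:unscalings}, combining the scale transform of $\Phi$, the constant rescalings of $\Phi$ and $M_0$, to obtain $u_n^{\rho_\param(\lambda,\epsilon)(M_0,\Phi)}=\frac{1}{\epsilon}(u_n^{(M_0,\Phi)})_{\lambda^{1/2}}$ and then sums over $n$. Your power-counting ($\lambda^{-n}\cdot\lambda^{1/2}\cdot\lambda^{n-1/2}=1$ with the residual $1/\epsilon$) and the observation that the extra factor $\lambda$ for $V_\IR$ comes from evaluating the prefactor $M_0^2$ at the transformed mass are exactly the details the paper leaves implicit.
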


\begin{proof}
Recall that
$\rho_\param(\lambda,\epsilon)(M_0,\Phi)=(\lambda^{1/2}M_0,
\frac{\lambda}{\epsilon^2}\Phi_{\lambda^{1/2}})$. Lemma
\ref{lemma:unscalings} gives:
\be
u_n^{(\lambda^{1/2}M_0, \frac{\lambda}{\epsilon^2}\Phi_{\lambda^{1/2}})}=\frac{1}{\epsilon} (u_n^{(M_0,\Phi)})_{\lambda^{1/2}}=\r_0(\lambda,\epsilon)(u_n^{(M_0,\Phi)})~~,
\ee
which implies the conclusion. 
\end{proof}

\subsection{Expansion in powers of $\Phi$ and its derivatives}

\begin{lemma}
\label{lemma:DerSqrtPhi}
For any $n\geq 1$, we have:
\be
(\sqrt{\Phi})^{(n)}=\sqrt{\Phi}\sum_{l=1}^n\sum_{{\scriptsize \begin{array}{c}k_1+\ldots+k_l=n\\k_1,\ldots,k_l\geq 1\end{array}}} a_l \frac{n!}{k_1!\ldots k_l!} \frac{\Phi^{(k_1)}\ldots \Phi^{(k_l)}}{\Phi^l}~~
\ee
and hence:
\ben
\label{sqrtder}
(\sqrt{2\Phi})^{(n)}=\sqrt{2\Phi} \frac{R_n(j^n(\Phi))}{\Phi^n}~~\forall n\geq 1~~,
\een
where:
\be
R_n=\sum_{l=1}^n\sum_{{\scriptsize \begin{array}{c}k_1+\ldots+k_l=n\\k_1,\ldots,k_l\geq 1\end{array}}} a_l \frac{n!}{k_1!\ldots k_l!} X_0^{n-l} X_{k_1}\ldots X_{k_l}\in \Q[X_0,\ldots X_n]
\ee
is a homogeneous polynomial of bidegree $(n,n)$. 
\end{lemma}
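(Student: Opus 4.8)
The plan is to obtain the first identity as a direct instance of the Faà di Bruno formula for the $n$-th derivative of the composition $g\circ \Phi$ with outer function $g(y)\eqdef y^{1/2}$, to deduce the second identity from $\sqrt{2\Phi}=\sqrt{2}\,\sqrt{\Phi}$, and finally to read off the bidegree of $R_n$ by inspecting its monomials.

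First I would record the derivatives of the outer function. Since $g(y)=y^{1/2}$, repeated differentiation gives $g^{(l)}(y)=\left(\tfrac{1}{2}\right)\left(\tfrac{1}{2}-1\right)\cdots\left(\tfrac{1}{2}-l+1\right)y^{1/2-l}=a_l\,l!\,y^{1/2-l}$ for every $l\geq 1$, where $a_l$ is the generalized binomial coefficient of \eqref{ak}. Next I would use the form of Faà di Bruno's formula in which the Bell polynomials $B_{n,l}$ are written as a sum over ordered compositions of $n$ into $l$ positive parts, namely $B_{n,l}\big(\Phi',\ldots,\Phi^{(n-l+1)}\big)=\frac{1}{l!}\sum_{k_1+\ldots+k_l=n,\,k_i\geq 1}\frac{n!}{k_1!\ldots k_l!}\,\Phi^{(k_1)}\ldots\Phi^{(k_l)}$. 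Substituting $g^{(l)}(\Phi)=a_l\,l!\,\Phi^{1/2-l}$ into $(\sqrt{\Phi})^{(n)}=\sum_{l=1}^n g^{(l)}(\Phi)\,B_{n,l}$ cancels the factor $l!$ and, after writing $\Phi^{1/2-l}=\sqrt{\Phi}\,\Phi^{-l}$, produces exactly the stated expansion of $(\sqrt{\Phi})^{(n)}$.

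For the second identity I would use that $\sqrt{2\Phi}=\sqrt{2}\,\sqrt{\Phi}$ and that $\sqrt{2}$ is a constant, so $(\sqrt{2\Phi})^{(n)}=\sqrt{2}\,(\sqrt{\Phi})^{(n)}$; multiplying and dividing by $\Phi^n$ and recognizing the substitution $X_0\mapsto \Phi$, $X_j\mapsto \Phi^{(j)}$ then gives $(\sqrt{2\Phi})^{(n)}=\sqrt{2\Phi}\,\Phi^{-n}R_n(j^n(\Phi))$ with $R_n$ as defined. Finally, the bidegree claim is immediate from the explicit monomials $X_0^{n-l}X_{k_1}\ldots X_{k_l}$ of $R_n$: each has ordinary degree $(n-l)+l=n$, while its order-weighted degree is $0\cdot(n-l)+\sum_{i=1}^l k_i=n$ since $k_1+\ldots+k_l=n$; hence every monomial has bidegree $(n,n)$ and therefore so does $R_n$.

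The main obstacle is the combinatorial bookkeeping in the Faà di Bruno step, i.e. justifying the passage from the partition form of $B_{n,l}$ to the ordered-composition form used above: the multiplicity $\frac{l!}{\prod_j m_j!}$ of compositions of a given part-type $(m_j)_j$ exactly converts $\frac{n!}{\prod_j (j!)^{m_j}m_j!}\prod_j (\Phi^{(j)})^{m_j}$ into $\frac{1}{l!}\sum_{\mathrm{comp.}}\frac{n!}{k_1!\ldots k_l!}\Phi^{(k_1)}\ldots\Phi^{(k_l)}$. Should one prefer a self-contained argument, I would instead prove the first identity by induction on $n$, differentiating the level-$n$ expansion once: differentiating the prefactor $\Phi^{1/2-l}$ lengthens a composition by inserting a new part equal to $1$ and shifts $l\to l+1$ (picking up the factor $\tfrac12-l$ that relates $a_l$ to $a_{l+1}$), while differentiating a factor $\Phi^{(k_i)}$ increments $k_i$ by one; matching the resulting multinomial coefficients against those at level $n+1$ is the only nontrivial step, and it reduces to the Pascal-type identity underlying Faà di Bruno.
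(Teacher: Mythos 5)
Your proof is correct, but it takes a genuinely different route from the paper's. You invoke Fa\`a di Bruno's formula with outer function $g(y)=y^{1/2}$, compute $g^{(l)}(y)=a_l\,l!\,y^{1/2-l}$, and then convert the partition-indexed Bell polynomials $B_{n,l}$ into the ordered-composition form, which is where the multiplicity bookkeeping $l!/\prod_j m_j!$ enters; you correctly flag this conversion as the one nontrivial combinatorial step. The paper instead works with formal power series: it Taylor-expands $\Phi(x+\eta)$ around $x$, writes $\sqrt{\Phi(x+\eta)}=\sqrt{\Phi(x)}\,(1+X)^{1/2}$ with $X=\sum_{k\geq 1}\eta^k\Phi^{(k)}(x)/(k!\,\Phi(x))$, applies the binomial series \eqref{binomial}, and reads off the coefficient of $\eta^n$ by comparison with the Taylor expansion of $\sqrt{\Phi}$. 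That route produces the composition-indexed sum automatically, since expanding the $l$-th power of $X$ is literally a sum over ordered $l$-tuples $(k_1,\ldots,k_l)$ with $k_i\geq 1$ and $\sum k_i=n$; no Bell polynomials and no multiplicity count are needed, and the coefficients $a_l$ appear for free from \eqref{binomial} rather than via the identity $g^{(l)}(y)=a_l\,l!\,y^{1/2-l}$. Your approach buys familiarity (it is the textbook higher chain rule) at the cost of the partition-to-composition conversion; the paper's buys a shorter, self-contained derivation consistent with how the binomial coefficients $a_k$ are used elsewhere in Section \ref{sec:deformed}. Your treatment of the passage to $\sqrt{2\Phi}$ and of the bidegree of $R_n$ matches the (implicit) content of the paper and is fine.
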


\begin{remark}
\label{rem:admissibleR}
Relation \eqref{sqrtder} also holds for $n=0$ if we define $R_0\eqdef
1\in \Q[X_0]$. If $n\geq 1$, then notice that the largest power of
$X_0$ which appears in $R_n$ is strictly smaller than $n$.
\end{remark}

\begin{proof}
The Taylor expansion of $\Phi$ around $x$ and the binomial expansion of the square root give:
\beqa
&&\sqrt{\Phi(x+\eta)}=\sqrt{\Phi(x)}\sqrt{1+\sum_{k\geq 1}
  \eta^k\frac{\Phi^{(k)}(x)}{k!\Phi(x)}}=\sqrt{\Phi(x)}\left[1+\sum_{l\geq
    1} a_l \left(\sum_{k\geq 1}
  \eta^k\frac{\Phi^{(k)}(x)}{k!\Phi(x)}\right)^l\right]\nn\\ &&=\sqrt{\Phi(x)}\left[1+\sum_{n\geq
    1} \eta^n \sum_{l=1}^n \sum_{{\scriptsize \begin{array}{c}k_1+\ldots+k_l=n\\k_1,\ldots,k_l\geq 1\end{array}}} a_l
  \frac{1}{k_1!\ldots k_l!} \frac{\Phi^{(k_1)}(x)\ldots
    \Phi^{(k_l)}(x)}{\Phi(x)^l}\right]~~.
\eeqa
The conclusion now follows by comparing this with the Taylor expansion of $\sqrt{\Phi}$ around $x$:
\be
\sqrt{\Phi(x+\eta)}=\sqrt{\Phi(x)}+\sum_{n\geq 1} \eta^n \frac{(\sqrt{\Phi})^{(n)}(x)}{n!}~~.
\ee
\end{proof}

\noindent The first nontrivial polynomials $R$ are:
{\footnotesize \beqan
&&R_1=\frac{X_1}{2}~~,~~R_2=-\frac{1}{4}X_1^2+\frac{1}{2}X_0 X_2~~,~~R_3=\frac{3}{8} X_1^3-\frac{3}{4} X_0 X_1 X_2+\frac{1}{2} X_0^2 X_3\nn\\
&&R_4=-\frac{15}{16} X_1^4+\frac{9}{4} X_0 X_1^2 X_2-\frac{3}{4} X_0^2 X_2^2-X_0^2 X_1 X_3+\frac{1}{2} X_0^3 X_4~~.
\eeqan}

\begin{cor}
\label{cor:jsqrt}
For any $n\geq 0$, we have:
\ben
\label{jg}
j^n(\sqrt{2\Phi})=\sqrt{2\Phi} (g_n\circ j^n)(\Phi)
\een
where $g_n:\R_0^{n+1}\rightarrow \R^{n+1}$ is a function whose
components $g_n^k$ are rational and homogeneous of bidegree $(0,k)$:
\ben
\label{gn}
g_n^k(w_0,\ldots w_n)=\frac{R_k(w_0,\ldots, w_k)}{w_0^k}~~\forall k=0,\ldots, n~~.
\een
\end{cor}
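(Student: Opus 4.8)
The plan is to recognize this corollary as a componentwise repackaging of Lemma \ref{lemma:DerSqrtPhi}, so essentially no new work is needed beyond unwinding definitions. First I would recall that both sides of \eqref{jg} are $\R^{n+1}$-valued functions on the interval of definition, and that the reduced jet $j^n$ records the derivatives of orders $0,\ldots,n$ in its successive components. Since $\Phi$ is everywhere positive, $j^n_x(\Phi)$ lies in the cone $\R_0^{n+1}$ (its zeroth component is $\Phi(x)>0$), so the composition $(g_n\circ j^n)(\Phi)$ is well defined. Using that $g_n^k$ depends only on its first $k+1$ arguments, the $k$-th component of the asserted identity reads
\[
(\sqrt{2\Phi})^{(k)}=\sqrt{2\Phi}\,g_n^k(j^n(\Phi))=\sqrt{2\Phi}\,\frac{R_k(j^k(\Phi))}{\Phi^k}~~,
\]
and it therefore suffices to verify this scalar identity for each fixed $k=0,\ldots,n$.

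For $k\geq 1$ this is precisely relation \eqref{sqrtder} of Lemma \ref{lemma:DerSqrtPhi}. The case $k=0$ is the tautology $\sqrt{2\Phi}=\sqrt{2\Phi}\cdot R_0/\Phi^0$, valid under the convention $R_0=1$ recorded in Remark \ref{rem:admissibleR}. Because $w_0>0$ throughout $\R_0^{n+1}$, the division by $w_0^k$ defining $g_n^k$ is legitimate there, so each $g_n^k$ is a genuine rational function on the cone and $g_n$ is a well-defined rational map $\R_0^{n+1}\rightarrow\R^{n+1}$; this establishes \eqref{jg} componentwise and hence as an equality of vector-valued functions.

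It then remains to read off the bidegree. Lemma \ref{lemma:DerSqrtPhi} asserts that $R_k$ is bi-homogeneous of bidegree $(k,k)$, i.e.\ $\deg R_k=\wdeg R_k=k$, while $X_0^k$ has $\deg=k$ and $\wdeg=0$ since $\wdeg(X_0)=0$. Forming the quotient in \eqref{gn} subtracts these bidegrees, giving $g_n^k$ the bidegree $(k-k,\,k-0)=(0,k)$, as claimed. I do not anticipate any genuine obstacle: the only points needing minor care are the boundary case $k=0$, handled by the convention $R_0=1$, and the observation that $X_0$ carries weighted degree zero, so that dividing by a power of $X_0$ shifts only the ordinary degree while leaving the order-weighted degree untouched.
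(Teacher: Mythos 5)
Your proposal is correct and follows the same route as the paper, whose proof is simply the statement that the corollary follows immediately from Lemma \ref{lemma:DerSqrtPhi}; you have merely unwound the componentwise identification, the $k=0$ convention $R_0=1$, and the bidegree bookkeeping ($X_0^k$ having bidegree $(k,0)$) that the paper leaves implicit. No issues.
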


\begin{proof}
Follows immediately from Lemma \ref{lemma:DerSqrtPhi}.
\end{proof}

\noindent Notice that $g_0(w_0)=1$ for all $w_0\in \R_{>0}$. 

\begin{prop}
\label{prop:ExpPhi}
We have:
\ben
\label{uexpQPhi}
u=\fH_0\sum_{n\geq 0} (\epsilon M_0)^{2n}(v_n\circ j^n)(\Phi)~~\mathrm{and}~~V_\eff =V_0\sum_{n\geq 0} (\epsilon M_0)^{2n} (v_n\circ j^n)(\Phi)~~,
\een
where $\fH_0=\frac{\sqrt{2\Phi}}{M_0}$, $V_0\eqdef M_0\sqrt{2\Phi}$ and:
\ben
\label{vnf}
v_n\eqdef Q_n\circ g_n:\R_0^{n+1}\rightarrow \R
\een
is a rational function of bidegree $(0,2n)$ with coefficients in $\Q$,
whose abstract form is:
\ben
\label{vn}
v_n(X_0,\ldots, X_n)=\frac{S_n(X_0,\ldots,X_n)}{X_0^{2n}}\in \Q(X_0,\ldots,X_n)~~,
\een
where:
\ben
\label{Sn}
S_n(X_0,\ldots,X_n)\eqdef Q_n(1,R_1(X_0,X_1),\ldots, R_n(X_0,\ldots,X_n))\in \Q[X_0,\ldots, X_n]~~
\een
is a homogeneous polynomial of bidegree $(2n,2n)$. If $n\geq 1$, then the
largest power of $X_0$ which appears in $S_n$ is strictly smaller than $n$. 
\end{prop}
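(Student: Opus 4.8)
The plan is to start from the closed form for $u$ recorded in Proposition~\ref{prop:uuprimeseries} and to feed into it the jet expansion of $\sqrt{2\Phi}$ from Corollary~\ref{cor:jsqrt}, using the ordinary-degree homogeneity of $Q_n$ to peel off an overall factor of $\fH_0$ and leave behind a quantity depending only on $\Phi$ and its derivatives.

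First I would recall that $u_0=\sqrt{2\Phi}/M_0$, so its reduced jet satisfies $j^n(u_0)=\frac{1}{M_0}j^n(\sqrt{2\Phi})$. Corollary~\ref{cor:jsqrt} then gives $j^n(u_0)=u_0\,(g_n\circ j^n)(\Phi)$, the right-hand side being the scalar $u_0$ multiplying the vector $(g_n\circ j^n)(\Phi)\in\R^{n+1}$. Since Proposition~\ref{prop:HJpoly} states that $Q_n$ is homogeneous of ordinary polynomial degree $2n$, this scalar pulls out: $Q_n(j^n(u_0))=Q_n(u_0\,(g_n\circ j^n)(\Phi))=u_0^{2n}\,(Q_n\circ g_n\circ j^n)(\Phi)=u_0^{2n}\,(v_n\circ j^n)(\Phi)$, with $v_n=Q_n\circ g_n$ as in \eqref{vnf}. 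Substituting this into the expansion of Proposition~\ref{prop:uuprimeseries} collapses the coefficient $(Q_n\circ j^n)(u_0)/u_0^{2n-1}$ to $u_0\,(v_n\circ j^n)(\Phi)=\fH_0\,(v_n\circ j^n)(\Phi)$, which is precisely \eqref{uexpQPhi}; multiplying by $M_0^2$ and using $M_0^2\fH_0=V_0$ gives the stated formula for $V_\eff$.

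Next I would establish the properties of $v_n$ by tracking the two gradings. Each component $g_n^k=R_k/X_0^k$ is bihomogeneous of bidegree $(0,k)$ by Corollary~\ref{cor:jsqrt}, so substituting $X_j\mapsto g_n^j$ into a monomial $X_0^{k_0}\cdots X_n^{k_n}$ of $Q_n$ (which obeys $\sum_j k_j=2n$ and $\sum_j jk_j=2n$) produces a term of bidegree $(0,\sum_j jk_j)=(0,2n)$; hence $v_n$ is bihomogeneous of bidegree $(0,2n)$. For the abstract form I use $g_n^0=R_0/X_0^0=1$ and $g_n^k=R_k/X_0^k$ ($k\geq 1$): each monomial of $Q_n$ then acquires the denominator $X_0^{\sum_{j\geq 1}jk_j}=X_0^{2n}$, so collecting terms yields $v_n=Q_n(1,R_1,\ldots,R_n)/X_0^{2n}=S_n/X_0^{2n}$ with $S_n$ as in \eqref{Sn}; since $R_0=1$ and each $R_k\in\Q[X_0,\ldots,X_k]$, we get $S_n\in\Q[X_0,\ldots,X_n]$. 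Its bidegree follows identically: sending $X_0\mapsto 1$, $X_k\mapsto R_k$ (bidegree $(k,k)$ by Lemma~\ref{lemma:DerSqrtPhi}) maps each monomial of $Q_n$ to $\prod_{j\geq 1}R_j^{k_j}$, of bidegree $(\sum_{j\geq 1}jk_j,\sum_{j\geq 1}jk_j)=(2n,2n)$.

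Finally, for the bound on the power of $X_0$ in $S_n$ when $n\geq 1$, I would combine the two facts already proved by induction: by Remark~\ref{rem:admissibleR} the largest power of $X_0$ in $R_k$ is at most $k-1$ for $k\geq 1$, while by Remark~\ref{rem:admissibleQ} every monomial of $Q_n$ satisfies $\sum_{j\geq 1}k_j\geq n+1$. Hence in $\prod_{j\geq 1}R_j^{k_j}$ the power of $X_0$ is at most $\sum_{j\geq 1}k_j(j-1)=2n-\sum_{j\geq 1}k_j\leq 2n-(n+1)=n-1<n$. The one delicate point is exactly this degree/weight accounting across the composition; once both gradings are tracked with care, every step reduces to substitution together with the homogeneity already recorded in Propositions~\ref{prop:HJpoly} and~\ref{prop:uuprimeseries}, so I anticipate no genuine obstacle beyond the bookkeeping.
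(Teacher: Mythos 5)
Your proposal is correct and follows essentially the same route as the paper: substitute the jet identity of Corollary \ref{cor:jsqrt} into the expansion of Proposition \ref{prop:uuprimeseries}, use the ordinary-degree homogeneity of $Q_n$ to extract the factor $u_0^{2n}$ (hence $\fH_0$), use the order-weighted homogeneity to collect the denominator $X_0^{2n}$, and combine Remarks \ref{rem:admissibleQ} and \ref{rem:admissibleR} for the bound on the power of $X_0$ in $S_n$. Your final degree count $\sum_{j\geq 1}(j-1)k_j=2n-\sum_{j\geq 1}k_j\leq n-1$ is just a slight repackaging of the paper's Laurent-monomial bookkeeping and is equally valid.
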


\noindent Notice that $v_0=1$. 

\begin{proof}
Using \eqref{jg} in \eqref{uexpQ} gives \eqref{uexpQPhi} with $v_n$ as
in \eqref{vnf} upon recalling that $Q_n$ is homogeneous of degree $2n$
with respect to the ordinary polynomial grading. Relations \eqref{vnf}
and \eqref{gn} give:
\ben
\label{vnR}
v_n=Q_n\left(1,\frac{R_1}{X_0},\frac{R_2}{X_0^2},\ldots, \frac{R_n}{X_0^n}\right)~~.
\een
This is equivalent with \eqref{vn} with $S_n$ as in \eqref{Sn} since
$Q_n$ is homogeneous of degree $2n$ with respect to the order-weighted
grading. To prove the last statement, let us assume that $n\geq
1$. Then Remark \ref{rem:admissibleR} shows that the largest power of
$X_0$ in each polynomial $R_j$ with $j\in \{1,\ldots, n\}$ is strictly
smaller than $n$ and hence for all such $j$ the rational function
$\frac{R_j}{X_0^j}$ is a sum of Laurent monomials in which $X_0$
appears with power $\leq -1$. Since $Q_n$ has
ordinary polynomial degree $2n$ and $X_0$ appears in $Q_n$ with
exponent strictly smaller than $n$ (see Proposition
\ref{prop:HJpoly}), relation \eqref{vnR} shows that $v_n$ is a sum of
Laurent monomials of the form:
\be
\prod_{j=1}^{n} \left(\frac{R_j}{X_0^j}\right)^{k_j}
\ee
where $k_j\in \N$ satisfy $\sum_{j=1}^n k_j>n$. Together with the
observation above, this implies that $X_0$ enters the Laurent
monomials of $v_n$ with powers which are strictly smaller than $-n$
and hence enters the monomials of $S_n$ with powers
which are strictly smaller than $n$.
\end{proof}

We have:
{\scriptsize \beqan
&&v_1=\frac{X_1^2}{8 X_0^2}~~,~~v_2=-\frac{13 X_1^4}{128 X_0^4}+\frac{X_1^2 X_2}{8 X_0^3}~~,~~
v_3=\frac{213 X_1^6}{1024 X_0^6}-\frac{27 X_1^4 X_2}{64 X_0^5}+\frac{5 X_1^2 X_2^2}{32 X_0^4}+\frac{X_1^3 X_3}{16 X_0^4}\nn\\
&& v_4=-\frac{21949 X_1^8}{32768 X_0^8}+\frac{1907 X_1^6 X_2}{1024 X_0^7}-\frac{351 X_1^4 X_2^2}{256 X_0^6}+\frac{7
  X_1^2 X_2^3}{32 X_0^5}-\frac{45 X_1^5 X_3}{128 X_0^6}+\frac{9 X_1^3
  X_2 X_3}{32 X_0^5}+\frac{X_1^4 X_4}{32 X_0^5}
\eeqan}
and hence:
{\scriptsize \beqan
\label{Veff4}
&&\frac{V_\eff}{V_0}=1+(\epsilon M_0)^2 \frac{ \Phi'(x)^2}{8 \Phi
    (x)^2}+(\epsilon M_0)^4 \Big(-\frac{13 \Phi'(x)^4}{128 \Phi
    (x)^4}+\frac{\Phi'(x)^2 \Phi''(x)}{8 \Phi (x)^3}\Big)+\\
&&+(\epsilon M_0)^6 \Big(\frac{213 \Phi'(x)^6}{1024 \Phi
    (x)^6}-\frac{27 \Phi'(x)^4 \Phi''(x)}{64 \Phi (x)^5}+\frac{5
    \Phi'(x)^2 \Phi''(x)^2}{32 \Phi (x)^4}+\frac{\Phi'(x)^3 \Phi
    ^{(3)}(x)}{16 \Phi(x)^4}\Big)+\nn\\
&&+(\epsilon M_0)^8 \Big(-\frac{21949 \Phi'(x)^8}{32768
    \Phi(x)^8}+\frac{1907 \Phi'(x)^6 \Phi''(x)}{1024 \Phi
    (x)^7}-\frac{351\Phi'(x)^4 \Phi''(x)^2}{256 \Phi (x)^6}+ \frac{7
  \Phi'(x)^2 \Phi''(x)^3}{32 \Phi (x)^5}-\nn\\
&& -\frac{45 \Phi'(x)^5 \Phi ^{(3)}(x)}{128 \Phi(x)^6} +\frac{9
  \Phi'(x)^3 \Phi''(x) \Phi^{(3)}(x)}{32 \Phi (x)^5}+ \frac{\Phi'(x)^4
  \Phi ^{(4)}(x)}{32 \Phi (x)^5}\Big)~~.\nn
\eeqan}

\begin{cor}
\label{cor:IRexpansions}
The formal IR Hamilton-Jacobi function and formal IR effective 
potential of the model $(M_0,\Phi)$ are given by:
\ben
\label{fHIRexp}
\fH_\IR =u\vert_{\epsilon=1}=\fH_0\sum_{n\geq 0} M_0^{2n} (v_n\circ j^n)(\Phi)~~,
\een
and:
\ben
\label{VIRexp}
V_\IR =V_\eff\vert_{\epsilon=1}=V_0\sum_{n\geq 0} M_0^{2n} (v_n\circ j^n)(\Phi)~~,
\een
where $\fH_0\eqdef \frac{\sqrt{2\Phi}}{M_0}$, $V_0\eqdef
M_0\sqrt{2\Phi}$ and the series in the right hand side are understood
as the sequences of their partial sums.
\end{cor}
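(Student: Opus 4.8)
The plan is to obtain this statement as an immediate consequence of the two expansions already established in Propositions \ref{prop:subs} and \ref{prop:ExpPhi}, so that essentially no new computation is required.

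First I would recall that Proposition \ref{prop:ExpPhi} provides the factored expansions
\be
u=\fH_0\sum_{n\geq 0}(\epsilon M_0)^{2n}(v_n\circ j^n)(\Phi)~~\mathrm{and}~~V_\eff=V_0\sum_{n\geq 0}(\epsilon M_0)^{2n}(v_n\circ j^n)(\Phi)
\ee
as elements of $\cC^\infty(\R)[[\epsilon]]$, where the rational functions $v_n=Q_n\circ g_n$ already encode the combinatorial content coming from the Hamilton-Jacobi polynomials $Q_n$ and the square-root derivative polynomials $R_n$. Setting $\epsilon=1$ replaces each factor $(\epsilon M_0)^{2n}$ by $M_0^{2n}$, which turns the two right hand sides into exactly the series appearing in \eqref{fHIRexp} and \eqref{VIRexp}, understood (as in the statement) as the sequences of their partial sums.

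Next I would invoke Proposition \ref{prop:subs}, which identifies the $\epsilon=1$ specializations with the objects of interest: it states precisely that $\fH_\IR=u\vert_{\epsilon=1}$ and, since $V_\eff=M_0^2 u$ by the definition \eqref{Veff}, that $V_\IR=M_0^2\fH_\IR=V_\eff\vert_{\epsilon=1}$. Combining this identification with the factored expansions of the previous step yields both displayed formulas at once.

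The only point deserving a remark---rather than a genuine obstacle---is the consistency of the formal substitution $\epsilon=1$. Because $u$ lives in $\cC^\infty(\R)[[\epsilon]]$ and the series need not converge, the substitution must be interpreted at the level of partial sums, exactly as $\fH_\IR$ and $V_\IR$ were defined in \eqref{fHIR} and \eqref{VIR}; this consistency is precisely what Proposition \ref{prop:subs} guarantees via the scaling behaviour recorded in Lemma \ref{lemma:unscalings}. Thus all the substantive work has already been carried out in the polynomial and scaling analysis of the preceding subsections, and the corollary follows as a one-line combination of Propositions \ref{prop:subs} and \ref{prop:ExpPhi}.
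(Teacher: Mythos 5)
Your proposal is correct and matches the paper's (implicit) argument: the corollary is stated without a separate proof precisely because it follows by specializing the factored expansions of Proposition \ref{prop:ExpPhi} at $\epsilon=1$ and identifying the result with $\fH_\IR$ and $V_\IR$ via Proposition \ref{prop:subs}, which is exactly the one-line combination you describe. Your remark about interpreting the substitution at the level of partial sums is also consistent with how the paper defines these objects in \eqref{fHIR} and \eqref{VIR}.
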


\begin{remark}
The series in Corollary \ref{cor:IRexpansions} can also be viewed as
formal power series in the rescaled Planck mass $M_0$ (with
coefficients in the ring $\cC^\infty(\R)$).
\end{remark}

\subsection{Rewriting the expansion in term of slow roll functions}

\noindent Suppose that $n\geq 1$. Since the polynomial $S_n$ of
Proposition \ref{prop:ExpPhi} has bidegree $(2n,2n)$, the rational
function $v_n\in \Q(X_0,\ldots, X_n)$ of \eqref{vn} is a finite linear
combination of Laurent monomials of the form:
\be
W=\frac{X_0^{k_0} X_1^{k_1}\ldots X_n^{k_n}}{X_0^{2n}}
\ee
where $k_0,\ldots, k_n\in \N$ satisfy conditions \eqref{kconds}
and we have $k_0=2n-\sum_{j=1}^n k_j<n$. These conditions allow us
to write $W$ as:
\be
W=\frac{X_1^{k_1}\ldots X_n^{k_n}}{X_0^{\sum_{j=1}^n k_j}}~~,
\ee
where $k_1,\ldots, k_n\in \N$ satisfy $\sum_{j=1}^n
jk_j=2n$ and $\sum_{j=1}^n k_j>n$.

\begin{definition}
The {\em slow roll rational functions} $(\beta_j)_{j\geq 0}$ at
rescaled Planck mass $M_0$ are defined through:
\beqan
\label{betadef}
&&\beta_0 \eqdef M_0^2 \frac{X_1^2}{X_0^2}~~\\
&&\beta_j\eqdef M_0^{2j} \frac{X_1^{j-1} X_{j+1}}{X_0^j}~~\forall j\geq 1~~.\nn
\eeqan
\end{definition}

\noindent We have $\beta_0\in \R(X_0,X_1)$, $\beta_1\in
\R(X_0,X_2)$ and $\beta_j\in \R(X_0,X_1, X_{j+1})$ for all $j\geq
2$.

\begin{lemma}
\label{lemma:monSR}
Let $n\geq 1$ and assume that $k_1,\ldots, k_n\in \N$ satisfy the
conditions:
\be
\sum_{j=1}^n j k_j=2n~~\mathrm{and}~~\sum_{j=1}^n k_j >n~~.
\ee
Define:
\be
\alpha\eqdef -n+\sum_{j=1}^n k_j=n-k_0\in \N_{>0}~~,
\ee
where:
\be
k_0\eqdef 2n-\sum_{j=1}^nk_j \in \{0,\ldots, n-1\}~~.
\ee
Then the following identities are satisfied:
\beqan
&& k_1=2\alpha+\sum_{j=2}^n (j-2)k_j\label{id1}\\
&& \sum_{j=1}^n k_j=2\alpha+\sum_{j=2}^n (j-1) k_j\label{id2}\\
&& n=\alpha+\sum_{j=2}^n (j-1) k_j\label{id3}~~
\eeqan
and we have:
\ben
\label{monbeta}
M_0^{2n} \frac{X_1^{k_1}\ldots X_n^{k_n}}{X_0^{\sum_{j=1}^n k_j}}=\beta_0^\alpha \prod_{j=1}^{n-1} \beta_j^{k_{j+1}}~~.
\een
\end{lemma}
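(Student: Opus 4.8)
The plan is to treat the statement in two stages: first establish the three arithmetic identities \eqref{id1}--\eqref{id3}, which are elementary consequences of the hypothesis $\sum_{j=1}^n j k_j = 2n$ together with the definitions of $\alpha$ and $k_0$, and then deduce the monomial identity \eqref{monbeta} by substituting the definitions \eqref{betadef} of the slow roll functions into its right-hand side and matching the exponents of $M_0$, $X_0$, $X_1$ and the higher variables $X_i$ using those identities.

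I would begin with \eqref{id3}, which is the linchpin. Writing $\alpha = -n + \sum_{j=1}^n k_j$ and regrouping, one finds
\[
\alpha + \sum_{j=2}^n (j-1) k_j = -n + k_1 + \sum_{j=2}^n j k_j = -n + \sum_{j=1}^n j k_j = n,
\]
where the final equality is precisely the hypothesis $\sum_{j=1}^n j k_j = 2n$. Identity \eqref{id2} is then immediate: substituting $\sum_{j=2}^n (j-1) k_j = n - \alpha$ from \eqref{id3} gives $2\alpha + (n - \alpha) = \alpha + n$, which equals $\sum_{j=1}^n k_j$ by the definition of $\alpha$. Finally, \eqref{id1} follows from the splitting $\sum_{j=2}^n (j-2) k_j = \sum_{j=2}^n (j-1) k_j - \sum_{j=2}^n k_j$: applying \eqref{id3} to the first sum and \eqref{id2} to the second reduces the right-hand side to $k_1 - 2\alpha$, whence $2\alpha + \sum_{j=2}^n (j-2) k_j = k_1$.

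For the monomial identity I would expand the product in \eqref{monbeta} directly. From \eqref{betadef}, raising $\beta_j$ to the power $k_{j+1}$ contributes a factor $M_0^{2j k_{j+1}} X_1^{(j-1) k_{j+1}} X_{j+1}^{k_{j+1}} X_0^{-j k_{j+1}}$, and the substitution $i = j+1$ turns the product over $j = 1,\ldots,n-1$ into sums over $i = 2,\ldots,n$. Together with $\beta_0^\alpha = M_0^{2\alpha} X_1^{2\alpha} X_0^{-2\alpha}$, this produces a single Laurent monomial whose four exponents are controlled exactly by the identities just proved: the power of $M_0$ is $2\alpha + 2\sum_{i=2}^n (i-1) k_i = 2n$ by \eqref{id3}; the power of $X_1$ is $2\alpha + \sum_{i=2}^n (i-2) k_i = k_1$ by \eqref{id1}; the factors $\prod_{i=2}^n X_i^{k_i}$ assemble with $X_1^{k_1}$ into $X_1^{k_1}\cdots X_n^{k_n}$; and the power of $X_0$ in the denominator is $2\alpha + \sum_{i=2}^n (i-1) k_i = \sum_{j=1}^n k_j$ by \eqref{id2}. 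Reassembling these factors yields the left-hand side of \eqref{monbeta}.

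Since every step is an explicit exponent count, there is no conceptual obstacle; the only thing requiring care is the index shift $i = j+1$ and the correct pairing of each collected exponent with the identity that governs it, namely \eqref{id3} for the $M_0$-count, \eqref{id1} for the $X_1$-count, and \eqref{id2} for the $X_0$-count.
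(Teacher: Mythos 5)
Your proof is correct and follows essentially the same route as the paper's: establish the three exponent identities by elementary manipulation of $\sum_j jk_j=2n$ and the definition of $\alpha$, then expand the product of slow roll functions and match the exponents of $M_0$, $X_0$, $X_1$ and the higher $X_i$. The only (immaterial) difference is that you derive the identities in the order \eqref{id3}, \eqref{id2}, \eqref{id1} while the paper proves them in the listed order.
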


\begin{proof}

\
  
\begin{enumerate}[1.]
\item We have $\sum_{j=2}^n (j-2)k_j=k_1+\sum_{j=1}^n
  (j-2)k_j=k_1+2n-2\sum_{j=1}^n k_j=k_1-2\alpha$, which implies
  \eqref{id1}.
\item Relation \eqref{id2} follows immediately by substituting
  $2\alpha$ from \eqref{id1}.
\item We have $\sum_{j=2}^n (j-1) k_j=\sum_{j=1}^n (j-1) k_j=2n
  -\sum_{j=1}^n k_j=2n-(\alpha+n)=n-\alpha$, which implies
  \eqref{id3}.
\end{enumerate}
\noindent Using \eqref{betadef}, the right hand side of \eqref{monbeta} reads:
\beqa
&&\beta_0^\alpha \prod_{j=1}^{n-1} \beta_j^{k_{j+1}}=M_0^{2\left(\alpha+\sum_{j=1}^{n-1} j k_{j+1}\right)}\frac{X_1^{2\alpha+\sum_{j=1}^{n-1} (j-1)k_{j+1}} \prod_{j=1}^{n-1} X_{j+1}^{k_{j+1}}}{X_0^{2\alpha+\sum_{j=1}^{n-1} j k_{j+1}}}=\nn\\
&&=M_0^{2\left(\alpha+\sum_{j=2}^n (j-1) k_j\right)}\frac{X_1^{2\alpha+\sum_{j=2}^n (j-2)k_j} \prod_{j=2}^n X_j^{k_j}}{X_0^{2\alpha+\sum_{j=2}^n (j-1) k_j}}~~,
\eeqa
which equals the left hand side of \eqref{monbeta} upon using relations \eqref{id1}, \eqref{id2} and \eqref{id3}.
\end{proof}

\begin{cor}
\label{cor:vsigma}
There exist unique polynomials $\sigma_n\in \Q[X_0,\ldots, X_n]$ with
$n\geq 1$ such that:
\be
M_0^{2n} v_n(X_0,\ldots, X_n)=\sigma_n(\beta_0(X_0,X_1),\beta_1(X_0,X_2),\beta_2(X_0,X_1,X_3)\ldots, \beta_{n-1}(X_0,X_1,X_{n}))~~\forall n\geq 1~~.
\ee
\end{cor}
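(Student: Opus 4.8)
The plan is to assemble the statement from the two structural results already in place: the Laurent description of $v_n$ in Proposition \ref{prop:ExpPhi} and the monomial identity of Lemma \ref{lemma:monSR}. For existence, I would begin from Proposition \ref{prop:ExpPhi}, which gives $v_n=S_n/X_0^{2n}$ with $S_n$ bihomogeneous of bidegree $(2n,2n)$ and with $X_0$ entering $S_n$ only to powers strictly below $n$. Expanding $S_n$ into monomials presents $M_0^{2n}v_n$ as a finite $\Q$-linear combination of Laurent monomials $M_0^{2n}X_0^{k_0}X_1^{k_1}\cdots X_n^{k_n}/X_0^{2n}$ whose exponents obey the bidegree conditions \eqref{kconds} together with $k_0<n$. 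As noted in the discussion opening this subsection, these conditions allow one to cancel the powers of $X_0$ and rewrite each such term as $X_1^{k_1}\cdots X_n^{k_n}/X_0^{\sum_{j=1}^n k_j}$ with $\sum_{j=1}^n jk_j=2n$ and $\sum_{j=1}^n k_j>n$.

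I would then apply Lemma \ref{lemma:monSR} termwise: each such monomial, multiplied by $M_0^{2n}$, equals $\beta_0^{\alpha}\prod_{j=1}^{n-1}\beta_j^{k_{j+1}}$ with $\alpha=-n+\sum_{j=1}^n k_j\geq 1$. Summing over the monomials of $S_n$ with their original rational coefficients exhibits $M_0^{2n}v_n$ as a $\Q$-linear combination of monomials in $\beta_0,\ldots,\beta_{n-1}$; collecting these defines the sought polynomial $\sigma_n$ (in the $n$ formal arguments corresponding to $\beta_0,\ldots,\beta_{n-1}$) and establishes existence. Note in passing that $\alpha\geq 1$ throughout, so every such monomial carries a positive power of $\beta_0$.

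For uniqueness, the crux is that $\beta_0,\ldots,\beta_{n-1}$ are algebraically independent over $\Q$. I would deduce this directly from the defining formulas \eqref{betadef}: for every $j\geq 1$ the variable $X_{j+1}$ occurs in $\beta_j$ but in none of the other $\beta_i$. Viewing $\beta_0,\ldots,\beta_{n-1}$ as rational functions of $X_1,\ldots,X_n$ with $X_0$ a nonzero parameter, each of the columns $X_2,\ldots,X_n$ of the Jacobian matrix $(\partial\beta_i/\partial X_k)$ therefore has a single nonzero entry, namely $\partial\beta_j/\partial X_{j+1}=M_0^{2j}X_1^{j-1}/X_0^{j}$; expanding the determinant along these columns leaves $\partial\beta_0/\partial X_1=2M_0^2X_1/X_0^2$. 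All factors being nonzero, the Jacobian determinant is a nonzero rational function, so the $\beta_j$ are algebraically independent, and this forces the coefficients of $\sigma_n$ to be uniquely determined by $M_0^{2n}v_n$.

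The genuine computations here are routine; the only step needing care is this uniqueness claim. If one prefers to bypass the Jacobian criterion, the same conclusion follows combinatorially using the paper's own identities: by \eqref{id1} the exponent $k_1$ is recovered from $\alpha$ and $k_2,\ldots,k_n$, so the assignment sending the monomial $X_1^{k_1}\cdots X_n^{k_n}/X_0^{\sum_j k_j}$ to the exponent vector $(\alpha,k_2,\ldots,k_n)$ of the corresponding $\beta$-monomial is injective. Distinct $\beta$-monomials thus arise from distinct, hence linearly independent, Laurent monomials in $X_0,\ldots,X_n$, and the representation of $M_0^{2n}v_n$ in terms of the $\beta_j$ is therefore unique.
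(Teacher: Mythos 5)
Your proof is correct and follows essentially the same route as the paper: the paper leaves the corollary as an immediate consequence of the monomial decomposition of $v_n$ described at the start of that subsection (via Proposition \ref{prop:ExpPhi} and Remark \ref{rem:admissibleQ}) together with the termwise application of Lemma \ref{lemma:monSR}, which is exactly your existence argument. Your uniqueness step --- algebraic independence of $\beta_0,\ldots,\beta_{n-1}$ via the Jacobian (or, equivalently, injectivity of the exponent correspondence $(k_1,\ldots,k_n)\leftrightarrow(\alpha,k_2,\ldots,k_n)$) --- is a correct addition that the paper leaves implicit.
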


\begin{definition}
The polynomials $(\sigma_n)_{n\geq 1}$ are called the {\em slow roll
  polynomials}.
\end{definition}

\begin{definition}
The {\em potential slow roll functions} of the model $(M_0,\Phi)$ are
the smooth functions $\upbeta_n\in \cC^\infty(\R)$ defined through:
\beqan
\label{upbetadef}
&& \upbeta_0(x)=\sigma_0(\Phi(x),\Phi'(x))=M_0^2 \frac{\Phi'(x)^2}{\Phi(x)^2}~~\nn\\
&& \upbeta_1(x)=\sigma_1(\Phi(x),\Phi''(x))=M_0^2 \frac{\Phi''(x)}{\Phi(x)}~~\\
&& \upbeta_j(x)=\sigma_n(\Phi(x),\Phi'(x),\Phi^{(j+1)}(x))=M_0^{2j} \frac{\Phi'(x)^{j-1} \Phi^{(j+1)}(x)}{\Phi(x)^j}~~\forall j\geq 2~~.\nn 
\eeqan
\end{definition}

\noindent Notice that reference \cite{LPB} prefers to work with the
          {\em reduced potential slow roll functions}:
\beqan
\label{upbetadefred}
&& ~^0\upbeta_0\eqdef \frac{3}{4}\upbeta_0\nn\\
&&~^j\upbeta\eqdef \frac{3}{2} \upbeta_j^{1/j}~~\forall j\geq 1~~.
\eeqan
It is customary to use the notations:
\be
\upepsilon\eqdef ~^0\upbeta_0~~,~~\upeta\eqdef ~^1\upbeta_1~~,~~~\upxi\eqdef ~~^2\upbeta_2~~,~~\upzeta\eqdef ~~^3\upbeta_3~~.
\ee

\noindent The following result shows in particular that the expansions
of the functions $\fH_\IR$ and $V_\IR$ coincide with their {\em slow
  roll expansions} in the sense of reference \cite{LPB}.

\begin{prop}
\label{prop:SR}
We have:
\beqan
\label{uSR}
&&u(x)=\fH_0(x)[1+\sum_{n\geq 1} \epsilon^{2n} \sigma_n(\upbeta_0(x),\ldots, \upbeta_{n-1}(x))]~~\\
&&V_\eff(x) =V_0(x)[1+\sum_{n\geq 1} \epsilon^{2n} \sigma_n(\upbeta_0(x),\ldots, \upbeta_{n-1}(x))]~~,
\eeqan
and:
\beqan
&&\fH_\IR(x)=\fH_0(x)[1+\sum_{n\geq 1} \sigma_n(\upbeta_0(x),\ldots, \upbeta_{n-1}(x))]~~\\
&& V_\IR(x) =V_0(x)[1+\sum_{n\geq 1} \sigma_n(\upbeta_0(x),\ldots, \upbeta_{n-1}(x))]~~,
\eeqan
where $\fH_0=\frac{\sqrt{2\Phi}}{M_0}$ and $V_0\eqdef M_0\sqrt{2\Phi}$.
\end{prop}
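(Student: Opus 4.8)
The plan is to feed the expansion of Proposition~\ref{prop:ExpPhi} into the rewriting supplied by Corollary~\ref{cor:vsigma}, after evaluating everything on the jet of $\Phi$. Proposition~\ref{prop:ExpPhi} already gives
\be
u=\fH_0\sum_{n\geq 0} (\epsilon M_0)^{2n}(v_n\circ j^n)(\Phi)~~\mathrm{and}~~V_\eff =V_0\sum_{n\geq 0} (\epsilon M_0)^{2n} (v_n\circ j^n)(\Phi)~~,
\ee
and since $v_0=1$ the $n=0$ term supplies the leading $1$ inside the brackets of the asserted formulas. It therefore suffices to identify, for each $n\geq 1$, the summand $(\epsilon M_0)^{2n}(v_n\circ j^n)(\Phi)$ with $\epsilon^{2n}\sigma_n(\upbeta_0,\ldots,\upbeta_{n-1})$.

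The crux is to absorb the factor $M_0^{2n}$ into $v_n$ and invoke Corollary~\ref{cor:vsigma}, which reads $M_0^{2n}v_n=\sigma_n(\beta_0,\ldots,\beta_{n-1})$ as an identity of rational functions in the $X_k$, with $\beta_j$ the slow roll rational functions of \eqref{betadef}. I would then evaluate both sides on the reduced jet $j^n_x(\Phi)$, i.e. substitute $X_k\mapsto \Phi^{(k)}(x)$. Comparing \eqref{betadef} with the definition \eqref{upbetadef} shows that under this substitution each argument $\beta_j$ becomes precisely the potential slow roll function $\upbeta_j(x)$, namely $\beta_0(\Phi(x),\Phi'(x))=\upbeta_0(x)$, $\beta_1(\Phi(x),\Phi''(x))=\upbeta_1(x)$, and $\beta_j(\Phi(x),\Phi'(x),\Phi^{(j+1)}(x))=\upbeta_j(x)$ for $j\geq 2$. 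Hence
\be
M_0^{2n}\,(v_n\circ j^n)(\Phi)(x)=\sigma_n(\upbeta_0(x),\ldots,\upbeta_{n-1}(x))\qquad\forall n\geq 1~~,
\ee
and inserting this into the two displays above (writing $(\epsilon M_0)^{2n}=\epsilon^{2n}M_0^{2n}$) yields the first pair of identities, for $u$ and $V_\eff$.

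The remaining two identities follow by specialization at $\epsilon=1$: by Corollary~\ref{cor:IRexpansions} we have $\fH_\IR=u\vert_{\epsilon=1}$ and $V_\IR=V_\eff\vert_{\epsilon=1}$, so setting $\epsilon=1$ in the formulas just derived simply deletes the factors $\epsilon^{2n}$ and produces the stated series for $\fH_\IR$ and $V_\IR$. I do not expect any genuine obstacle here: the whole argument is a substitution plus bookkeeping of powers of $M_0$. The one point requiring care—that the single overall factor $M_0^{2n}$ distributes correctly among the several $\beta_j$ occurring in $\sigma_n$—is exactly what Lemma~\ref{lemma:monSR} established en route to Corollary~\ref{cor:vsigma}, so it may simply be cited rather than redone.
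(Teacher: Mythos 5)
Your argument is correct and follows the same route as the paper, which proves this proposition by combining Proposition \ref{prop:ExpPhi} with Corollary \ref{cor:vsigma} and then specializing at $\epsilon=1$ via Proposition \ref{prop:subs}. The only difference is cosmetic: you cite Corollary \ref{cor:IRexpansions} for the $\epsilon=1$ substitution where the paper cites Proposition \ref{prop:subs}, but these carry the same content.
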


\begin{proof}
Follows immediately from Propositions \ref{prop:ExpPhi} and
\ref{prop:subs} by using Corollary \ref{cor:vsigma}.
\end{proof}

\noindent Using \eqref{upbetadef} in \eqref{Veff4} gives:
{\scriptsize \be
\left(\frac{V_\IR}{V_0}\right)=\frac{\upepsilon }{3}-\frac{\upepsilon
  ^2}{3}+\frac{25 \upepsilon ^3}{27}-\frac{109 \upepsilon ^4}{27}+\frac{2
  \upepsilon \upeta }{9}-\frac{26 \upepsilon ^2 \eta }{27}+\frac{460
  \upepsilon ^3 \upeta }{81}+\frac{5 \upepsilon \upeta ^2}{27}-\frac{172
  \upepsilon ^2 \upeta ^2}{81}+\frac{14 \upepsilon \upeta ^3}{81}+\frac{2
  \upepsilon \upxi ^2}{27}-\frac{44 \upepsilon ^2 \upxi ^2}{81}+\frac{2}{9}
\upepsilon \upeta \upxi ^2+\frac{2 \upepsilon \upsigma ^3}{81}+\O(M_0^{10})~~,
\ee}
\!\!which agrees with \cite[eq. (5.4)]{LPB}. Notice that
loc. cit. uses conventions in which the Planck mass (which is denoted
there by $m$) is related to $M_0$ through $m=2\sqrt{3\pi} M_0$. We
stress that the computational procedure proposed in \cite{LPB} is
rather impractical and far less efficient that the approach of the
present paper, which relies on the recursive construction of
Hamilton-Jacobi polynomials. We illustrate this in Appendix
\ref{app:list}, where we list the rational functions $v_n$ up to
$n=10$. These determine the expansions of $\fH_\IR$ and $V_\IR$ up to
that order through relations \eqref{fHIRexp} and
\eqref{VIRexp}.

\subsection{The $N$-th order approximate IR flow orbit}

\begin{definition}
Let $N\geq 0$. The {\em $N$-th order IR Hamilton-Jacobi function} and
{\em $N$-th order IR effective potential} $\fH_\IR^{(N)},V_\IR^{(N)}\in
\cC^\infty(\R)$ of the model $(M_0,\Phi)$ are the $N$-th partial sums
of the series \eqref{fHIRexp} and \eqref{VIRexp}:
\beqa
&&\fH_\IR^{(N)}\eqdef \fH_0\sum_{n= 0}^N M_0^{2n} (v_n\circ j^n)(\Phi)\nn\\
&&V_\IR^{(N)}\eqdef V_0\sum_{n= 0}^N M_0^{2n} (v_n\circ j^n)(\Phi)~~.
\eeqa
\end{definition}

\noindent Consider the smooth function $\v_\IR^{(N)}:\R\rightarrow \R$
defined through (cf. eq. \eqref{speed}):
\ben
\v_\IR^{(N)}(x)=-\frac{\dd V_\IR^{(N)}(x)}{\dd x}=-M_0^2\frac{\dd \fH_\IR^{(N)}(x)}{\dd x}~~(x\in \R)~~.
\een

\begin{definition}
The {\em $N$-th order approximate IR flow orbit} $\cO^{(N)}_\IR\subset
\R^2$ is the graph of the function $\v_\IR^{(N)}$:
\be
\cO^{(N)}_\IR\eqdef \{(x,\v_\IR^{(N)}(x))~\vert~x\in \R\}~~.
\ee
\end{definition}

\noindent The one-dimensional manifold $\cO^{(N)}$ can be viewed as an
approximate cosmological flow orbit. Then $\fH_\IR^{(N)}$ is its
approximate Hamilton-Jacobi function and $V_\IR^{(N)}$ its approximate
Hamilton-Jacobi potential (cf. Subsection \ref{subsec:HJpot}). This
one-manifold determines an approximate cosmological curve
$\varphi^{(N)}_\IR$ (up to a constant translation of the parameter
$t$) though the equation:
\ben
\label{speedN}
\dot{\varphi}_\IR^{(N)}(t)=\v_\IR^{(N)}(\varphi^{(N)}_\IR(t))=-\frac{\dd V_\IR^{(N)}}{\dd x}\bvert_{x=\varphi_\IR^{(N)}(t)}~~,
\een
which is the gradient flow equation of $V_\IR^{(N)}$. Away from the
critical points of $V_\IR^{(N)}$, the solutions of \eqref{speedN} are
obtained by inverting the function (cf. eq. \eqref{tHJ}):
\be
\t^{(N)}_\IR(x)=C-\int_{x_0}^x \frac{\dd y}{(V^{(N)}_\IR)'(y)}~~,
\ee
where $C,x_0\in \R$. 

\subsection{The formal IR Hamilton-Jacobi jet expansion family}

\noindent For all $n,k\in \N$, let $\Delta_{n,k}:\cA\rightarrow \cA$
be the differential operator defined through $\Delta_{n,0}\eqdef
\id_\cA$, $\Delta_{n,1}\eqdef D(2n-1)$ and:
\be
\Delta_{n,k}\eqdef D(2n+k-2) \ldots D(2n) D(2n-1)~~\forall k\geq 2~~.
\ee
We have:
\be
\Delta_{n,k+1}=D(2n+k-1) \Delta_{n,k}~~\forall n,k\geq 0~~.
\ee

\begin{prop}
\label{prop:unjet}
For any $n\geq 1$ and $k\geq 0$ we have:
\ben
\label{unk}
u_n^{(k)}(x)=M_0^{2n}\frac{(\Delta_{n,k} Q_n)(j^{n+k}_x(u_0))}{u_0^{2n+k-1}(x)} ~~ \forall x\in \R
\een
and:
\ben
\label{unkPhi}
u_n^{(k)}(x)=M_0^{2n-1}\sqrt{2\Phi} (\Delta_{n,k} Q_n)(g_{n+k}(j^{n+k}_x(\Phi)))~~ \forall x\in \R~~.
\een
\end{prop}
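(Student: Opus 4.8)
The plan is to derive both identities at once from the closed form for $u_n$ already established in Proposition \ref{prop:unpoly}, by applying the iterated operator identity \eqref{Dfrack}. The starting point is the first relation in \eqref{ununprime}, namely $u_n(x) = M_0^{2n} Q_n(j^n_x(u_0))/u_0(x)^{2n-1}$, which exhibits $u_n$ as the evaluation on the jet of $u_0$ of the rational function $Q_n/X_0^{2n-1}\in \Q(X_0,\ldots,X_n)$. Since $u_0=\sqrt{2\Phi}/M_0>0$ by \eqref{u0}, this evaluation is well-defined on the cone $\R_0^{n+k+1}$ and no denominator vanishes.

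First I would record the precise sense in which the formal derivation $(-)'$ on $\R(\Sigma)$ computes $\frac{\dd}{\dd x}$. Because $X_j'=X_{j+1}$, the substitution map $F\mapsto [x\mapsto F(j^\infty_x(u_0))]$ is a ring homomorphism intertwining $(-)'$ with $\frac{\dd}{\dd x}$, by the chain rule. Consequently, differentiating $k$ times gives $u_n^{(k)}(x)=M_0^{2n}\big[(Q_n/X_0^{2n-1})^{(k)}\big](j^{n+k}_x(u_0))$. Applying \eqref{Dfrack} with $\alpha=2n-1$ and recognizing the composite $D(2n+k-2)\cdots D(2n)D(2n-1)=\Delta_{n,k}$ yields $(Q_n/X_0^{2n-1})^{(k)}=\Delta_{n,k}Q_n/X_0^{2n+k-1}$, which is exactly \eqref{unk}. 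The cases $k=0,1$ match $\Delta_{n,0}=\id$ and $\Delta_{n,1}=D(2n-1)$, so they reproduce Proposition \ref{prop:unpoly} and serve as a consistency check.

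To pass from \eqref{unk} to \eqref{unkPhi}, I would use Corollary \ref{cor:jsqrt} together with $u_0=\sqrt{2\Phi}/M_0$ to rewrite the jet vector as $j^{n+k}_x(u_0)=u_0(x)\,g_{n+k}(j^{n+k}_x(\Phi))$, checked component by component (the zeroth component being consistent since $g^0_{n+k}\equiv 1$). By Proposition \ref{prop:HJpoly} the polynomial $Q_n$ is bi-homogeneous of bidegree $(2n,2n)$, and each $D(\alpha)$ raises both degrees by one, so $\Delta_{n,k}Q_n$ is homogeneous of ordinary polynomial degree $2n+k$. Pulling the scalar $u_0(x)$ out of $\Delta_{n,k}Q_n$ therefore produces a factor $u_0(x)^{2n+k}$, which cancels against $u_0(x)^{2n+k-1}$ in \eqref{unk} to leave $M_0^{2n}u_0(x)=M_0^{2n-1}\sqrt{2\Phi(x)}$; this is precisely \eqref{unkPhi}.

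The computation is essentially mechanical, so the only point requiring genuine care is the first step: establishing that formal derivation on rational functions is transported to $\frac{\dd}{\dd x}$ under evaluation on jets, so that the purely algebraic identity \eqref{Dfrack} may be legitimately applied to a genuine function of $x$. Everything else—identifying the composite $\Delta_{n,k}$, tracking the homogeneity degree $2n+k$, and performing the final cancellation—is routine bookkeeping.
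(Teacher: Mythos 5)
Your proposal is correct and follows exactly the paper's route: derive \eqref{unk} from Proposition \ref{prop:unpoly} via the iterated identity \eqref{Dfrack} with $\alpha=2n-1$, then obtain \eqref{unkPhi} from Corollary \ref{cor:jsqrt} together with the homogeneity of $\Delta_{n,k}Q_n$ in degree $2n+k$. The only difference is that you spell out the intertwining of formal derivation with $\frac{\dd}{\dd x}$ under jet evaluation, which the paper leaves implicit.
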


\begin{proof}
Relation \eqref{unk} follows immediately from Proposition
\ref{prop:unpoly} upon using \eqref{Dfrack}. Now \eqref{unk} implies
\eqref{unkPhi} upon using Corollary \ref{cor:jsqrt} and the fact that
the polynomial $\Delta_{n,k} Q_n$ is homogeneous of degree $2n+k$ with
respect to the polynomial grading.
\end{proof}

\begin{cor}
\label{cor:IRjet}
For any $n\geq 0$, we have:
\be
j^\infty_x(u_n)=M_0^{2n-1}\sqrt{2\Phi(x)}e_n(j^\infty_x(\Phi))~~\forall x\in \R~~,
\ee
where $e_n:\R_0^\N\rightarrow \R^\N$ are rational functions of bidegree
$(0,2n+k)$ with components:
\be
e_n^k=(\Delta_{n,k} Q_n)\circ g_{n+k}\circ \tau_{n+k}~~\forall k\geq 0~~.
\ee
\end{cor}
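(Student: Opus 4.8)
The plan is to obtain the statement by reading Proposition~\ref{prop:unjet} componentwise and recognizing the resulting expression as the composition that defines $e_n^k$. By the definition of the reduced infinite jet, the $k$-th component of $j^\infty_x(u_n)$ is exactly $u_n^{(k)}(x)$. For $n\geq 1$ formula \eqref{unkPhi} reads
\[
u_n^{(k)}(x)=M_0^{2n-1}\sqrt{2\Phi(x)}\,(\Delta_{n,k}Q_n)(g_{n+k}(j^{n+k}_x(\Phi)))~~.
\]
Since $j^{n+k}_x(\Phi)=\tau_{n+k}(j^\infty_x(\Phi))$ by the definition of the truncation map, the factor multiplying $M_0^{2n-1}\sqrt{2\Phi(x)}$ equals $((\Delta_{n,k}Q_n)\circ g_{n+k}\circ\tau_{n+k})(j^\infty_x(\Phi))=e_n^k(j^\infty_x(\Phi))$. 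Assembling these identities over all $k\in\N$ produces the claimed formula $j^\infty_x(u_n)=M_0^{2n-1}\sqrt{2\Phi(x)}\,e_n(j^\infty_x(\Phi))$. Thus the analytic content is entirely furnished by Proposition~\ref{prop:unjet}, and no genuine obstacle arises: the corollary is a bookkeeping repackaging.

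The one point requiring separate care is the case $n=0$, which is not literally covered by Proposition~\ref{prop:unjet} (stated for $n\geq 1$). Here I would argue directly from $u_0=\frac{\sqrt{2\Phi}}{M_0}$. Applying \eqref{Dfrack} with $T=Q_0=1$ and $\alpha=2\cdot 0-1=-1$ turns the left-hand side into $(X_0)^{(k)}=X_k$ (the formal derivation sends $X_j$ to $X_{j+1}$), so $\Delta_{0,k}Q_0=X_0^{k-1}X_k$ for $k\geq 1$ (and $\Delta_{0,0}Q_0=1$). Evaluating this monomial on $g_k$ and invoking $g_k^0=1$ collapses it to the single component $g_k^k=R_k/X_0^k$. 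Since Corollary~\ref{cor:jsqrt} (equivalently Lemma~\ref{lemma:DerSqrtPhi}) gives $(\sqrt{2\Phi})^{(k)}=\sqrt{2\Phi}\,g_k^k(j^k_x(\Phi))$ while $u_0^{(k)}=\frac{1}{M_0}(\sqrt{2\Phi})^{(k)}$, this reproduces $u_0^{(k)}(x)=M_0^{-1}\sqrt{2\Phi(x)}\,e_0^k(j^\infty_x(\Phi))$; hence \eqref{unkPhi} extends to $n=0$ and the argument of the previous paragraph applies uniformly.

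Finally I would verify the stated bidegrees. By Proposition~\ref{prop:HJpoly} the polynomial $Q_n$ is bihomogeneous of bidegree $(2n,2n)$, and since each operator $D(\alpha)$ has bidegree $(1,1)$, the polynomial $\Delta_{n,k}Q_n$ is bihomogeneous of bidegree $(2n+k,2n+k)$. Substituting for each variable $X_j$ the component $g_{n+k}^j$, which by Corollary~\ref{cor:jsqrt} is rational of bidegree $(0,j)$, sends a monomial $\prod_j X_j^{m_j}$ of $\Delta_{n,k}Q_n$ (with $\sum_j m_j=\sum_j j\,m_j=2n+k$) to a rational function of $\deg$-degree $0$ and $\wdeg$-degree $\sum_j j\,m_j=2n+k$; precomposition with $\tau_{n+k}$ merely restricts the number of arguments and leaves the degrees unchanged. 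Hence each $e_n^k$ is rational of bidegree $(0,2n+k)$, completing the proof.
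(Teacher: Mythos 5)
Your proof is correct and follows the route the paper intends: the corollary is stated without proof precisely because it is the componentwise reading of Proposition \ref{prop:unjet} via $j^{n+k}_x(\Phi)=\tau_{n+k}(j^\infty_x(\Phi))$, exactly as you present it. Your separate treatment of $n=0$ (computing $\Delta_{0,k}Q_0=X_0^{k-1}X_k$ and matching it against Lemma \ref{lemma:DerSqrtPhi}) and your bidegree check are both accurate, and in fact patch a small gap the paper leaves implicit, since Proposition \ref{prop:unjet} is only stated for $n\geq 1$ while the corollary claims $n\geq 0$.
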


\begin{definition}
Define the {\em formal jet} of $u$ at $x\in \R$ through:
\be
j_x^\infty(u)\eqdef \sum_{n\geq 0} j_x^\infty(u_n)\otimes_\R \epsilon^n\in \R^\N[[\epsilon]]~~.
\ee
\end{definition}

\begin{definition}
The {\em formal IR Hamilton-Jacobi jet expansion family} of the model
$(M_0,\Phi)$ is the map $\re:\R_0^\N\rightarrow \R^\N[[\epsilon]]$
defined through:
\ben
\label{e}
\re(w)=\sum_{n\geq 0} M_0^{2n-1} \sqrt{2w_0} e_n(w) \otimes_\R \epsilon^n~~\forall w\in \R_0^\N~~.
\een
\end{definition}

\noindent Notice that $\re$ is positively homogeneous of degree
$1/2$ (where we take $\deg(\epsilon)=0$):
\be
\re(\xi w)=\xi^{1/2}\re(w)~~\forall \xi>0~~\forall w\in \R_0^\N~~.
\ee
Corollary \ref{cor:IRjet} implies:

\begin{cor}
\label{cor:jetIR}
For any $x\in \R$, we have: 
\ben
\label{erel}
j^\infty_x(u)=\re(j^\infty_x(\Phi))~~.
\een
\end{cor}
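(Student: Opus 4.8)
The plan is to prove the identity \eqref{erel} by comparing the two sides coefficient by coefficient as formal power series in $\epsilon$ with values in $\R^\N$. Both $j^\infty_x(u)$ and $\re(j^\infty_x(\Phi))$ are elements of $\R^\N[[\epsilon]]$, so it suffices to check that their coefficients of $\epsilon^n$ agree for every $n\geq 0$ and then sum over $n$.

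First I would record a well-definedness remark: the argument $j^\infty_x(\Phi)$ indeed lies in the domain $\R_0^\N$ of the map $\re$, since its zeroth component is $\Phi(x)$, which is strictly positive by the standing assumption that $\Phi>0$ everywhere. Hence $\re(j^\infty_x(\Phi))$ and each evaluation $e_n(j^\infty_x(\Phi))$ make sense. Next, by the definition of the formal jet of $u$, the coefficient of $\epsilon^n$ in the left hand side $j^\infty_x(u)$ is exactly $j^\infty_x(u_n)$, while by the definition \eqref{e} of $\re$ the coefficient of $\epsilon^n$ in $\re(j^\infty_x(\Phi))$ is $M_0^{2n-1}\sqrt{2\Phi(x)}\,e_n(j^\infty_x(\Phi))$. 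The equality of these two coefficients for each $n\geq 0$ is precisely the content of Corollary \ref{cor:IRjet}, so the two formal power series coincide termwise and \eqref{erel} follows.

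I do not expect any genuine obstacle here: the substantive work has already been carried out in establishing Corollary \ref{cor:IRjet} (which itself rests on Proposition \ref{prop:unjet} and Corollary \ref{cor:jsqrt}), where the jet of each coefficient $u_n$ is expressed through the operators $\Delta_{n,k}$ acting on the Hamilton-Jacobi polynomials $Q_n$ and evaluated on the jet of $\Phi$. The present statement merely repackages that fiberwise family of identities as a single identity in the formal power series ring $\R^\N[[\epsilon]]$, using the definitions of $j^\infty_x(u)$ and of $\re$ to organize the data. If anything needs care, it is only the bookkeeping of the homogeneity bidegrees guaranteeing that the components $e_n^k$ are the correct rational functions, but this is already built into the definition of $e_n$ and so requires no further argument.
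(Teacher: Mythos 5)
Your proof is correct and is essentially the paper's own argument: the paper simply states that Corollary \ref{cor:IRjet} implies the result, which is exactly your termwise comparison of the $\epsilon^n$-coefficients of $j^\infty_x(u)$ and $\re(j^\infty_x(\Phi))$ using the definitions of the formal jet and of $\re$. Your added remark that $j^\infty_x(\Phi)\in\R_0^\N$ because $\Phi>0$ is a harmless (and correct) well-definedness check that the paper leaves implicit.
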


\begin{prop}
\label{prop:jeteps}
For all $x\in \R$ and $n\geq 0$ we have:
\ben
\label{jeteqneps}
\Phi^{(n)}(x)=\rP_n(j^{n+1}_x(u))~~,
\een
where $\rP_n\in \R[\epsilon][X_0,\ldots,
  X_{n+1}]=\R[\epsilon,X_0,\ldots,X_{n+1}]$ is the polynomial given
by:
\ben
\label{Pneps}
\rP_n\eqdef \frac{M_0^2}{2}\sum_{k+l=n}\frac{n!}{k!l!}\left(X_k X_l-\epsilon^2 M_0^2 X_{k+1}X_{l+1} \right)~~.
\een
Moreover, we have:
\ben
\label{jeteqeps}
j^\infty_x(\Phi) =\uprho(j^\infty_x(u))~~\forall x\in ~\R,
\een
where $\uprho:\R^\N[[\epsilon]]\rightarrow \R^\N[[\epsilon]]$ is the
quadratic map given by:
\be
\uprho(w)=(\rP_0(w_{\leq 1}), \rP_1(w_{\leq 2}),\ldots, \rP_n(w_{\leq n+1}),\ldots)~~\forall w=(w_k)_{k\geq 0}\in \R^\N[[\epsilon]]~~.
\ee
\end{prop}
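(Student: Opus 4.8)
The plan is to derive both relations directly from the polynomial (unsquared) form of the deformed Hamilton-Jacobi equation, namely \eqref{Phieps}, which the formal family $u$ satisfies as an identity in $\cC^\infty(\R)[[\epsilon]]$. By construction $u$ solves the formal equation \eqref{HJexp}, which expresses $u$ as $\frac{\sqrt{2\Phi}}{M_0}$ times the binomial square root of $1+\epsilon^2 M_0^4(u')^2/(2\Phi)$. Squaring both sides -- legitimate in the ring of formal power series since the binomial series squares back to $1+X$ in $\R[[X]]$ and the leading coefficient $u_0=\fH_0$ is positive -- returns the polynomial relation
\[
\frac{2\Phi}{M_0^2}=u^2-\epsilon^2 M_0^2 (u')^2\,,\qquad\text{equivalently}\qquad \Phi=\frac{M_0^2}{2}\left(u^2-\epsilon^2 M_0^2 (u')^2\right)\,,
\]
which is exactly \eqref{Phieps}. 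This identity is the only input I need.

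First I would establish \eqref{jeteqneps} by differentiating the displayed relation $n$ times with respect to $x$. Since $\Phi$ is independent of $\epsilon$ and we work in $\cC^\infty(\R)[[\epsilon]]$, where $\rd/\rd x$ acts coefficientwise and commutes with the formal $\epsilon$-expansion, the differentiation may be performed term by term. Applying the Leibniz rule to the two squares gives
\[
\frac{\rd^n}{\rd x^n}(u^2)=\sum_{k+l=n}\frac{n!}{k!\,l!}\, u^{(k)} u^{(l)}\,,\qquad \frac{\rd^n}{\rd x^n}\big((u')^2\big)=\sum_{k+l=n}\frac{n!}{k!\,l!}\, u^{(k+1)} u^{(l+1)}\,,
\]
so that
\[
\Phi^{(n)}=\frac{M_0^2}{2}\sum_{k+l=n}\frac{n!}{k!\,l!}\left(u^{(k)} u^{(l)}-\epsilon^2 M_0^2\, u^{(k+1)} u^{(l+1)}\right)\,.
\]
Evaluating at $x$ and recalling that $[j^{n+1}_x(u)]^k=u^{(k)}(x)$ for $0\leq k\leq n+1$, the right-hand side is precisely the polynomial $\rP_n$ of \eqref{Pneps} evaluated on the jet $j^{n+1}_x(u)$. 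Note that the highest derivative occurring is $u^{(n+1)}$, produced by the boundary terms $k=0$ or $l=0$ of the second sum, so that $j^{n+1}_x(u)$ carries exactly the data required. This proves \eqref{jeteqneps}.

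Finally I would assemble \eqref{jeteqeps} by collecting the component identities \eqref{jeteqneps} over all $n\geq 0$. Since $\rP_n$ involves only $X_0,\ldots,X_{n+1}$, one may write $\rP_n(j^{n+1}_x(u))=\rP_n\big((j^\infty_x(u))_{\leq n+1}\big)$, whence the $n$-th component of $j^\infty_x(\Phi)$ equals the $n$-th component of $\uprho(j^\infty_x(u))$; this is the asserted equality of elements of $\R^\N[[\epsilon]]$, and the map $\uprho$ is quadratic because each $\rP_n$ is homogeneous of degree two in the jet variables. I do not expect a genuine obstacle: the computation is routine once \eqref{Phieps} is in hand. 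The only points requiring a word of care are the justification that $u$ satisfies the \emph{polynomial} form \eqref{Phieps} rather than merely its square-root expansion \eqref{HJexp}, and the legitimacy of formal differentiation in $x$ within $\cC^\infty(\R)[[\epsilon]]$; both follow immediately from the construction of $u$ and the ring structure on formal jets recalled in Appendix \ref{app:jets}.
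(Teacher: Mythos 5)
Your proof is correct and follows essentially the same route as the paper: differentiate the polynomial identity \eqref{Phieps} $n$ times via the Leibniz rule and read off $\rP_n$, then assemble the components into $\uprho$. The only addition is your explicit justification that the formal solution of \eqref{HJexp} also satisfies \eqref{Phieps} by squaring the binomial series, a point the paper simply takes for granted.
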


\noindent For example, we have:
\beqa
&& \frac{\Phi}{M_0^2}=\frac{1}{2}\left[u^2-\epsilon^2 M_0^2(u')^2\right]~~\nn\\
&& \frac{\Phi'}{M_0^2}=u'u-\epsilon^2 M_0^2u'u''~~\\
&& \frac{\Phi''}{M_0^2}=u'' u +(u')^2-\epsilon^2 M_0^2 (u''' u' + (u'')^2)~~\nn
\eeqa
and hence:
\beqan
&&\rP_0[X_0,X_1]=\frac{M_0^2}{2}\left(X_0^2-\epsilon^2 M_0^2 X_1^2\right)\nn\\
&&\rP_1[X_0,X_1,X_2]=M_0^2(X_0X_1-\epsilon^2 M_0^2 X_1 X_2)\\
&&\rP_2[X_0,X_1,X_2,X_3]=M_0^2\left[X_0X_2+X_1^2-\epsilon^2 M_0^2 (X_1X_3+X_2^2)\right]~~.\nn
\eeqan

\begin{proof}
Recall that $u$ satisfies \eqref{Phieps}. Repeatedly differentiating
this equation with respect to $x$ allows us to express
$\frac{\Phi^{(n)}}{M_0^2}$ as a function of $\epsilon$ and
the derivatives $u^{(j)}$ with $j=0,\ldots, n+1$:
\ben
\label{Phineps}
\frac{\Phi^{(n)}(x)}{M_0^2}=\rP_n(u^{(0)}(x),u^{(1)}(x),\ldots, u^{(n+1)}(x))~~\forall n\geq 0~~,
\een
where $P_n\in \R[\epsilon, X_0,\ldots, X_{n+1}]$ is a polynomial. The
higher order differentiation formula for functions $f,g\in
\cC^\infty(\R)$:
\be
(fg)^{(n)}=\sum_{k+l=n} \frac{(k+l)!}{k!l!}f^{(k)} g^{(l)}~~\forall n\geq 0
\ee
gives:
\be
\frac{\dd^n}{\dd x^n}(u^2)=\sum_{k+l=n} \frac{n!}{k!l!} u^{(k)}u^{(l)}~~,~~\frac{\dd^n}{\dd x^n}[(u')^2]=\sum_{k+l=n; k,l\geq 1} \frac{n!}{k!l!} u^{(k+1)} u^{(l+1)} 
\ee
and hence:
\be
\frac{\Phi^{(n)}}{M_0^2}=\frac{1}{2}\sum_{k+l=n}\frac{n!}{k!l!}\left(u^{(k)} u^{(l)} -\epsilon^2 M_0^2u^{(k+1)} u^{(l+1)} \right)~~,
\ee
which implies \eqref{Pneps}. The remaining statements are obvious.
\end{proof}

\begin{cor}
The formal IR Hamilton-Jacobi jet expansion family satisfies:
\be
\uprho\circ \re=\iota~~,
\ee
where $\iota:\R_0^\N\hookrightarrow \R^N[[\epsilon]]$ is the injective
map given by:
\be
\iota(w)\eqdef w\in \R_0^\N\subset \R^\N[[\epsilon]]~~\forall w\in \R_0^\N~~.
\ee
\end{cor}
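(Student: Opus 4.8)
The plan is to reduce the claimed operator identity $\uprho\circ\re=\iota$ to the two jet relations already established—Corollary \ref{cor:jetIR} and Proposition \ref{prop:jeteps}—and then to promote the resulting equality from infinite jets of honest potentials to all of $\R_0^\N$ by means of Borel's lemma. Concretely, for a fixed $w\in\R_0^\N$ I would exhibit an admissible potential $\Phi\in\Pot(\R)$ whose infinite jet at a chosen base point equals $w$, evaluate $(\uprho\circ\re)(w)$ through that potential, and read off $\iota(w)=w$.

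The first step is to realize $w$ as a jet of an everywhere positive function. Since $w_0>0$, Borel's lemma (already used to establish the isomorphism $j_x^\nu:J_x^\nu\stackrel{\sim}{\rightarrow}\R^\nu$) yields a smooth $f$ with $j^\infty_0(f)=w$; because $f(0)=w_0>0$, there is $\delta>0$ with $f>\tfrac{w_0}{2}$ on $(-\delta,\delta)$. Choosing a cutoff $\eta\in\cC^\infty(\R)$ with $\eta=1$ on $(-\delta/2,\delta/2)$ and $\supp\eta\subset(-\delta,\delta)$, I set $\Phi\eqdef\eta f+(1-\eta)\tfrac{w_0}{2}$. Outside $(-\delta,\delta)$ this equals $\tfrac{w_0}{2}>0$, while on $(-\delta,\delta)$ it is a convex combination of the two positive quantities $f$ and $\tfrac{w_0}{2}$, so $\Phi\geq\tfrac{w_0}{2}>0$ everywhere; thus $\Phi\in\Pot(\R)$ defines a genuine model $(M_0,\Phi)$. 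Since $\eta\equiv1$ near $0$, one has $\Phi=f$ to infinite order at $0$, whence $j^\infty_0(\Phi)=w$.

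The second step is to chain the two jet identities at $x=0$. By Corollary \ref{cor:jetIR}, $\re(j^\infty_0(\Phi))=j^\infty_0(u)$, where $u$ is the formal IR Hamilton-Jacobi family of $(M_0,\Phi)$; by Proposition \ref{prop:jeteps}, $\uprho(j^\infty_0(u))=j^\infty_0(\Phi)$. Composing gives
\be
(\uprho\circ\re)(w)=\uprho\big(\re(j^\infty_0(\Phi))\big)=\uprho\big(j^\infty_0(u)\big)=j^\infty_0(\Phi)=w=\iota(w)~~.
\ee
As $w\in\R_0^\N$ was arbitrary, this establishes $\uprho\circ\re=\iota$.

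The only content beyond bookkeeping is the surjectivity input: every $w\in\R_0^\N$ must arise as $j^\infty_0(\Phi)$ for some everywhere positive $\Phi$, which is precisely the Borel-lemma construction above. I anticipate no genuine obstacle here—the constant-term condition $w_0>0$ forces positivity near the base point, and the cutoff spreads this to all of $\R$ without perturbing the jet at $0$. An alternative route, bypassing analysis entirely, would verify $\uprho\circ\re=\iota$ algebraically by substituting the rational component formulas for $\re$ from Corollary \ref{cor:IRjet} into the quadratic polynomials $\rP_n$ of \eqref{Pneps} and matching coefficients of each power of $\epsilon$; but the jet-theoretic argument is shorter and conceptually cleaner, since it merely inverts the two transformations $\re$ and $\uprho$ against one another on the common pivot $j^\infty_0(u)$.
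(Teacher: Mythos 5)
Your argument is correct and is essentially the paper's own proof: the paper also chains Corollary \ref{cor:jetIR} with Proposition \ref{prop:jeteps} and invokes the arbitrariness of the positive smooth potential $\Phi$, which is exactly the surjectivity of $w\mapsto j^\infty_0(\Phi)$ onto $\R_0^\N$ that you establish via Borel's lemma and the cutoff construction. Your version merely spells out that realization step, which the paper leaves implicit.
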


\begin{proof}
Follows immediately from Proposition \ref{prop:jeteps} and Corollary
\ref{cor:jetIR} using the fact that $\Phi:\R\rightarrow \R_{>0}$
is an arbitrary positive smooth function.
\end{proof}

\section{Conclusions and further directions}
\label{sec:conclusions}

Following the ideas of \cite{ren,grad}, we constructed the infrared
scale expansion of single field cosmological models using the
Hamilton-Jacobi formalism of \cite{SB}, finding that
it corresponds to seeking a solution of the Hamilton-Jacobi equation
of such models which admits a Laurent expansion in powers of the
Planck mass. We showed that this recovers the celebrated slow roll
expansion of Liddle, Parsons and Barrow \cite{LPB}, which it
identifies with the small Planck mass expansion of such models. We
also described the single field realization of the universal
similarity group action of \cite{ren, grad} both at the level of
cosmological curves and in the Hamilton-Jacobi formalism.

Unlike the approach of \cite{LPB}, the method of the present paper
leads to an explicit recursive construction of the coefficients of the
slow roll expansion. We showed that the latter are obtained by
evaluating certain model-independent rational functions $v_n$ on the
scalar potential of the model and its successive derivatives. In turn,
these rational functions are obtained by explicit universal formulas
from the {\em Hamilton-Jacobi polynomials} $Q_n$. The latter are
model-independent multivariate polynomials with rational coefficients
defined by a recursion relation which allows for their efficient
computation. As an application, we listed the coefficients of the
Hamilton-Jacobi slow roll/small Planck mass expansion up to order 10
inclusively (see Appendix \ref{app:list}). The identification of the
small Planck mass and slow roll expansions follows from the fact that
the polynomials $Q_n$ are bihomogeneous with respect to a certain
bigrading of the algebra of polynomials in a countable number of
variables.

The present paper opens up a few avenues for further research.  First,
one could perform a systematic study of Hamilton-Jacobi polynomials
and their deeper properties, on which we only touched upon in the
present paper. This would make it possible to extract asymptotic
bounds for the small Planck mass/slow roll expansion which would
enable one to analyze its summability and resummation. In this regard,
we notice that the deformed Hamilton-Jacobi equation considered in the
present paper can be approached with the well-established methods of
singular perturbation theory for ODEs with a small
parameter. Second, one can apply similar methods to the UV expansion
of \cite{ren}, which in the single field case should be related to the
psi series expansion discussed in \cite{MA,HLT}. Finally, one can
study the IR and UV expansions proposed in \cite{ren} for multifield
cosmological models, which should provide a generalization of the slow
roll and psi series expansions to the multifield case. We hope to
report on these and related questions in future work.

\acknowledgments{\noindent This work was supported by grant PN
  19060101/2019-2022. The author thanks the Erwin Schrodinger
  Institute for hospitality. }

\appendix

\section{Jets of univariate real-valued functions}
\label{app:jets}

\noindent Let $\cC_x^\infty$ be the commutative
$\R$-algebra of germs of smooth univariate real-valued functions at a
point $x\in \R$. Let $\id_\R,1_\R\in \cC^\infty(\R)$ be the identity
and unit functions of $\R$ and $e_x\in \cC^\infty(\R)$ be the germ at
$x$ of the function $\id_\R-x 1_\R\in \cC^\infty(\R)$. Then
$\cC_x^\infty$ is a (non-Noetherian) commutative local ring with maximal ideal:
\be
m_x=(e_x)=\{f\in \cC_x^\infty~\vert~f(x)=0\}~~
\ee
and residue field $\R$ and for all $n\in \N$ we have:
\be
m_x^{n+1}=\{f\in \cC_x^\infty~\vert~f^{(k)}(x)=0~~\forall k=0,\ldots, n\}~~.
\ee
Define:
\be
m_x^\infty\eqdef \cap_{n\geq 0} {m_x^{n+1}}=\{f\in \cC_x^\infty~\vert~f^{(k)}(x)=0~~\forall k\geq 0\}~~.
\ee
For all $\nu\in \N\cup\{\infty\}$, let $J_x^\nu$ be the commutative
$\R$-algebra of jets of order $\nu$ of univariate real-valued functions
at $x$ and $\epsilon_x \in J_x^\infty$ be the infinite order jet of
the identity function $\id_\R$. Then $J_x^\infty$ is a Noetherian
local ring with maximal ideal:
\be
\mu_x=(\epsilon_x)
\ee
which identifies with the $\R$-algebra $\R[[\epsilon_x]]$ of formal
power series in the variable $\epsilon_x$. Accordingly, $J_x^n$ identifies
with the space $\R[[\epsilon_x]]//\mu_x^{n+1}$ of polynomials of degree $n$ in the
variable $\epsilon_x$. The infinite jet
$\rj_x^\infty(f)$ of a germ $f\in \cC_x^\infty$ identifies with the
formal Taylor series of $f$ at $x$, while $\rj^n_x(f)$ identifies with
its $n$-th Taylor polynomial:
\be
\rj_x^\infty(f)\equiv \sum_{k=0}^\infty \frac{f^{(k)}(x)}{k!}\epsilon_x^k~~,~~\rj_x^n(f)\equiv \sum_{k=0}^n \frac{f^{(k)}(x)}{k!}\epsilon_x^k~~.
\ee
The higher order differentiation formula for functions $f,g\in \cC^\infty(\R)$:
\be
(fg)^{(n)}=\sum_{k+l=n} \frac{(k+l)!}{k!l!}f^{(k)} g^{(l)}~~\forall n\geq 0
\ee
implies that infinite jet prolongation gives a morphism of local rings:
\be
\rj_x^\infty:\cC^\infty_x\rightarrow J_x^\infty~~,~~
\ee
which satisfies $\rj_x^\infty(e_x)=\epsilon_x$. This morphism is surjective by Borel's lemma
and induces isomorphisms of rings:
\be
\rj_x^\nu:\cC^\infty_x/m_x^{\nu+1}\stackrel{\sim}{\rightarrow} J_x^\nu~~\forall \nu\in \N\cup\{\infty\}~~.
\ee

\section{The polynomials $Q_n$ and rational functions $v_n$ for $n\leq 10$}
\label{app:list}

Recall from Section \ref{sec:deformed} that the slow roll expansions
of $\fH_\IR$ and $V_\IR$ are obtained from the Planck mass expansions
of Corollary \ref{cor:IRexpansions} by changing variables to the slow
roll functions \eqref{betadef} (or \eqref{upbetadefred}) using Lemma
\ref{lemma:monSR}. Below, we list the Hamilton-Jacobi polynomials
$Q_n$ and the rational functions $v_n$ which control the expansions
\eqref{fHIRexp} and \eqref{VIRexp} for all $n\leq 10$. The
corresponding coefficients of the expansions are obtained by
substituting $X_j\rightarrow \Phi^{(j)}$ into $v_n$.

\subsection{The Hamilton-Jacobi polynomials $Q_n$ for $n\leq 10$}

\noindent {\footnotesize
\begin{itemize}
\item $Q_0=1$    
\item $ Q_1=\frac{X_1^2}{2} $
\item $ Q_2=-\frac{5 X_1^4}{8}+X_0 X_1^2 X_2 $
\item $ Q_3=\frac{37 X_1^6}{16}-\frac{11}{2} X_0 X_1^4 X_2+\frac{5}{2}
  X_0^2 X_1^2 X_2^2+X_0^2 X_1^3 X_3 $
\item $ Q_4=-\frac{1773 X_1^8}{128}+\frac{347}{8} X_0 X_1^6
  X_2-\frac{147}{4} X_0^2 X_1^4 X_2^2+7 X_0^3 X_1^2 X_2^3-\frac{19}{2}
  X_0^2 X_1^5 X_3+9 X_0^3 X_1^3 X_2 X_3+X_0^3 X_1^4 X_4 $
\item $ Q_5=\frac{28895 X_1^{10}}{256}-\frac{6983}{16} X_0 X_1^8
  X_2+\frac{8543}{16} X_0^2 X_1^6 X_2^2-222 X_0^3 X_1^4 X_2^3+21 X_0^4
  X_1^2 X_2^4+\frac{819}{8} X_0^2 X_1^7 X_3-\frac{359}{2} X_0^3 X_1^5
  X_2 X_3+58 X_0^4 X_1^3 X_2^2 X_3+\frac{19}{2} X_0^4 X_1^4
  X_3^2-\frac{29}{2} X_0^3 X_1^6 X_4+14 X_0^4 X_1^4 X_2 X_4+X_0^4
  X_1^5 X_5 $
\item $ Q_6=-\frac{1181457 X_1^{12}}{1024}+\frac{677619}{128} X_0
  X_1^{10} X_2-\frac{269319}{32} X_0^2 X_1^8 X_2^2+\frac{43575}{8}
  X_0^3 X_1^6 X_2^3-\frac{10217}{8} X_0^4 X_1^4 X_2^4+66 X_0^5 X_1^2
  X_2^5-\frac{20567}{16} X_0^2 X_1^9 X_3+\frac{25843}{8} X_0^3 X_1^7
  X_2 X_3-\frac{4315}{2} X_0^4 X_1^5 X_2^2 X_3+327 X_0^5 X_1^3 X_2^3
  X_3-\frac{981}{4} X_0^4 X_1^6 X_3^2+171 X_0^5 X_1^4 X_2
  X_3^2+\frac{1647}{8} X_0^3 X_1^8 X_4-369 X_0^4 X_1^6 X_2 X_4+129
  X_0^5 X_1^4 X_2^2 X_4+34 X_0^5 X_1^5 X_3 X_4-\frac{41}{2} X_0^4
  X_1^7 X_5+20 X_0^5 X_1^5 X_2 X_5+X_0^5 X_1^6 X_6 $
\item $ Q_7=\frac{28933037 X_1^{14}}{2048}-\frac{19165717}{256} X_0
  X_1^{12} X_2+\frac{37359855}{256} X_0^2 X_1^{10}
  X_2^2-\frac{1026471}{8} X_0^3 X_1^8 X_2^3+\frac{799577}{16} X_0^4
  X_1^6 X_2^4-\frac{14301}{2} X_0^5 X_1^4 X_2^5+\frac{429}{2} X_0^6
  X_1^2 X_2^6+\frac{2374083}{128} X_0^2 X_1^{11} X_3-\frac{963939}{16}
  X_0^3 X_1^9 X_2 X_3+\frac{122417}{2} X_0^4 X_1^7 X_2^2 X_3-21075
  X_0^5 X_1^5 X_2^3 X_3+1719 X_0^6 X_1^3 X_2^4 X_3+\frac{85481}{16}
  X_0^4 X_1^8 X_3^2-7486 X_0^5 X_1^6 X_2 X_3^2+1917 X_0^6 X_1^4 X_2^2
  X_3^2+180 X_0^6 X_1^5 X_3^3-\frac{50853}{16} X_0^3 X_1^{10}
  X_4+\frac{32609}{4} X_0^4 X_1^8 X_2 X_4-5735 X_0^5 X_1^6 X_2^2
  X_4+988 X_0^6 X_1^4 X_2^3 X_4-1120 X_0^5 X_1^7 X_3 X_4+824 X_0^6
  X_1^5 X_2 X_3 X_4+\frac{69}{2} X_0^6 X_1^6 X_4^2+\frac{2975}{8}
  X_0^4 X_1^9 X_5-677 X_0^5 X_1^7 X_2 X_5+250 X_0^6 X_1^5 X_2^2 X_5+55
  X_0^6 X_1^6 X_3 X_5-\frac{55}{2} X_0^5 X_1^8 X_6+27 X_0^6 X_1^6 X_2
  X_6+X_0^6 X_1^7 X_7 $
\item $ Q_8=-\frac{6591947773 X_1^{16}}{32768}+\frac{1236367727 X_0
  X_1^{14} X_2}{1024}-\frac{1424076061}{512} X_0^2 X_1^{12}
  X_2^2+\frac{395323017}{128} X_0^3 X_1^{10}
  X_2^3-\frac{108856779}{64} X_0^4 X_1^8 X_2^4+\frac{3429947}{8} X_0^5
  X_1^6 X_2^5-\frac{157549}{4} X_0^6 X_1^4 X_2^6+715 X_0^7 X_1^2
  X_2^7-\frac{77550429}{256} X_0^2 X_1^{13} X_3+\frac{153843435}{128}
  X_0^3 X_1^{11} X_2 X_3-\frac{26217965}{16} X_0^4 X_1^9 X_2^2
  X_3+\frac{7241243}{8} X_0^5 X_1^7 X_2^3 X_3-\frac{365341}{2} X_0^6
  X_1^5 X_2^4 X_3+8665 X_0^7 X_1^3 X_2^5 X_3-\frac{3720065}{32} X_0^4
  X_1^{10} X_3^2+\frac{1954939}{8} X_0^5 X_1^8 X_2
  X_3^2-\frac{541547}{4} X_0^6 X_1^6 X_2^2 X_3^2+17204 X_0^7 X_1^4
  X_2^3 X_3^2-\frac{18469}{2} X_0^6 X_1^7 X_3^3+5131 X_0^7 X_1^5 X_2
  X_3^3+\frac{6933787}{128} X_0^3 X_1^{12} X_4-\frac{1433753}{8} X_0^4
  X_1^{10} X_2 X_4+\frac{1513777}{8} X_0^5 X_1^8 X_2^2 X_4-70445 X_0^6
  X_1^6 X_2^3 X_4+6821 X_0^7 X_1^4 X_2^4 X_4+\frac{117565}{4} X_0^5
  X_1^9 X_3 X_4-42828 X_0^6 X_1^7 X_2 X_3 X_4+12020 X_0^7 X_1^5 X_2^2
  X_3 X_4+1406 X_0^7 X_1^6 X_3^2 X_4-\frac{5639}{4} X_0^6 X_1^8
  X_4^2+1078 X_0^7 X_1^6 X_2 X_4^2-\frac{111185}{16} X_0^4 X_1^{11}
  X_5+18117 X_0^5 X_1^9 X_2 X_5-\frac{26545}{2} X_0^6 X_1^7 X_2^2
  X_5+2510 X_0^7 X_1^5 X_2^3 X_5-\frac{4531}{2} X_0^6 X_1^8 X_3
  X_5+1735 X_0^7 X_1^6 X_2 X_3 X_5+125 X_0^7 X_1^7 X_4
  X_5+\frac{4971}{8} X_0^5 X_1^{10} X_6-\frac{2289}{2} X_0^6 X_1^8 X_2
  X_6+440 X_0^7 X_1^6 X_2^2 X_6+83 X_0^7 X_1^7 X_3 X_6-\frac{71}{2}
  X_0^6 X_1^9 X_7+35 X_0^7 X_1^7 X_2 X_7+X_0^7 X_1^8 X_8 $
\item $ Q_9=\frac{213861782587 X_1^{18}}{65536}-\frac{44761605983 X_0
  X_1^{16} X_2}{2048}+\frac{118797886803 X_0^2 X_1^{14}
  X_2^2}{2048}-\frac{622034129}{8} X_0^3 X_1^{12}
  X_2^3+\frac{3557208131}{64} X_0^4 X_1^{10} X_2^4-\frac{82240313}{4}
  X_0^5 X_1^8 X_2^5+\frac{14060677}{4} X_0^6 X_1^6 X_2^6-214728 X_0^7
  X_1^4 X_2^7+2431 X_0^8 X_1^2 X_2^8+\frac{5662417463 X_0^2 X_1^{15}
    X_3}{1024}-\frac{6619146041}{256} X_0^3 X_1^{13} X_2
  X_3+\frac{2828141365}{64} X_0^4 X_1^{11} X_2^2
  X_3-\frac{135618571}{4} X_0^5 X_1^9 X_2^3 X_3+\frac{23060981}{2}
  X_0^6 X_1^7 X_2^4 X_3-1465806 X_0^7 X_1^5 X_2^5 X_3+42496 X_0^8
  X_1^3 X_2^6 X_3+\frac{676563673}{256} X_0^4 X_1^{12}
  X_3^2-\frac{59158947}{8} X_0^5 X_1^{10} X_2 X_3^2+\frac{25706377}{4}
  X_0^6 X_1^8 X_2^2 X_3^2-1892114 X_0^7 X_1^6 X_2^3 X_3^2+135438 X_0^8
  X_1^4 X_2^4 X_3^2+\frac{1379725}{4} X_0^6 X_1^9 X_3^3-402698 X_0^7
  X_1^7 X_2 X_3^3+85672 X_0^8 X_1^5 X_2^2 X_3^3+\frac{10685}{2} X_0^8
  X_1^6 X_3^4-\frac{260611419}{256} X_0^3 X_1^{14}
  X_4+\frac{262801405}{64} X_0^4 X_1^{12} X_2 X_4-\frac{23086241}{4}
  X_0^5 X_1^{10} X_2^2 X_4+3367779 X_0^6 X_1^8 X_2^3 X_4-752572 X_0^7
  X_1^6 X_2^4 X_4+44100 X_0^8 X_1^4 X_2^5 X_4-\frac{2979009}{4} X_0^5
  X_1^{11} X_3 X_4+1614434 X_0^6 X_1^9 X_2 X_3 X_4-950720 X_0^7 X_1^7
  X_2^2 X_3 X_4+137052 X_0^8 X_1^5 X_2^3 X_3 X_4-\frac{169801}{2}
  X_0^7 X_1^8 X_3^2 X_4+50538 X_0^8 X_1^6 X_2 X_3^2
  X_4+\frac{701519}{16} X_0^6 X_1^{10} X_4^2-\frac{131719}{2} X_0^7
  X_1^8 X_2 X_4^2+\frac{39571}{2} X_0^8 X_1^6 X_2^2 X_4^2+3957 X_0^8
  X_1^7 X_3 X_4^2+\frac{17743131}{128} X_0^4 X_1^{13}
  X_5-\frac{3727327}{8} X_0^5 X_1^{11} X_2 X_5+\frac{1016195}{2} X_0^6
  X_1^9 X_2^2 X_5-201466 X_0^7 X_1^7 X_2^3 X_5+22176 X_0^8 X_1^5 X_2^4
  X_5+\frac{567145}{8} X_0^6 X_1^{10} X_3 X_5-\frac{213375}{2} X_0^7
  X_1^8 X_2 X_3 X_5+32151 X_0^8 X_1^6 X_2^2 X_3 X_5+3204 X_0^8 X_1^7
  X_3^2 X_5-\frac{12487}{2} X_0^7 X_1^9 X_4 X_5+4922 X_0^8 X_1^7 X_2
  X_4 X_5+\frac{251}{2} X_0^8 X_1^8 X_5^2-\frac{221603}{16} X_0^5
  X_1^{12} X_6+\frac{292737}{8} X_0^6 X_1^{10} X_2 X_6-\frac{55379}{2}
  X_0^7 X_1^8 X_2^2 X_6+5619 X_0^8 X_1^6 X_2^3 X_6-\frac{8385}{2}
  X_0^7 X_1^9 X_3 X_6+3312 X_0^8 X_1^7 X_2 X_3 X_6+209 X_0^8 X_1^8 X_4
  X_6+\frac{7827}{8} X_0^6 X_1^{11} X_7-\frac{3637}{2} X_0^7 X_1^9 X_2
  X_7+721 X_0^8 X_1^7 X_2^2 X_7+119 X_0^8 X_1^8 X_3 X_7-\frac{89}{2}
  X_0^7 X_1^{10} X_8+44 X_0^8 X_1^8 X_2 X_8+X_0^8 X_1^9 X_9 $
\item $ Q_{10}=-\frac{15567784510059
  X_1^{20}}{262144}+\frac{14377260617795 X_0 X_1^{18}
  X_2}{32768}-\frac{5393824549071 X_0^2 X_1^{16}
  X_2^2}{4096}+\frac{2116195323375 X_0^3 X_1^{14}
  X_2^3}{1024}-\frac{1862997228891 X_0^4 X_1^{12}
  X_2^4}{1024}+\frac{28683710727}{32} X_0^5 X_1^{10}
  X_2^5-\frac{1863626829}{8} X_0^6 X_1^8 X_2^6+\frac{55821327}{2}
  X_0^7 X_1^6 X_2^7-\frac{4650921}{4} X_0^8 X_1^4 X_2^8+8398 X_0^9
  X_1^2 X_2^9-\frac{228637979167 X_0^2 X_1^{17}
    X_3}{2048}+\frac{614576237127 X_0^3 X_1^{15} X_2
    X_3}{1024}-\frac{315433939225}{256} X_0^4 X_1^{13} X_2^2
  X_3+\frac{155422675633}{128} X_0^5 X_1^{11} X_2^3
  X_3-\frac{4739625715}{8} X_0^6 X_1^9 X_2^4 X_3+132974554 X_0^7 X_1^7
  X_2^5 X_3-11151514 X_0^8 X_1^5 X_2^6 X_3+204492 X_0^9 X_1^3 X_2^7
  X_3-\frac{32623871227}{512} X_0^4 X_1^{14}
  X_3^2+\frac{28460434021}{128} X_0^5 X_1^{12} X_2
  X_3^2-\frac{8482334789}{32} X_0^6 X_1^{10} X_2^2
  X_3^2+\frac{510649451}{4} X_0^7 X_1^8 X_2^3 X_3^2-22602272 X_0^8
  X_1^6 X_2^4 X_3^2+978132 X_0^9 X_1^4 X_2^5
  X_3^2-\frac{187838583}{16} X_0^6 X_1^{11} X_3^3+\frac{169442771}{8}
  X_0^7 X_1^9 X_2 X_3^3-10019604 X_0^8 X_1^7 X_2^2 X_3^3+1095280 X_0^9
  X_1^5 X_2^3 X_3^3-\frac{3747879}{8} X_0^8 X_1^8 X_3^4+216018 X_0^9
  X_1^6 X_2 X_3^4+\frac{21490971643 X_0^3 X_1^{16}
    X_4}{1024}-\frac{12750162103}{128} X_0^4 X_1^{14} X_2
  X_4+\frac{22347970303}{128} X_0^5 X_1^{12} X_2^2
  X_4-\frac{1117333047}{8} X_0^6 X_1^{10} X_2^3 X_4+50945645 X_0^7
  X_1^8 X_2^4 X_4-7333184 X_0^8 X_1^6 X_2^5 X_4+272676 X_0^9 X_1^4
  X_2^6 X_4+\frac{1233663527}{64} X_0^5 X_1^{13} X_3 X_4-55321742
  X_0^6 X_1^{11} X_2 X_3 X_4+50294115 X_0^7 X_1^9 X_2^2 X_3
  X_4-16061372 X_0^8 X_1^7 X_2^3 X_3 X_4+1347128 X_0^9 X_1^5 X_2^4 X_3
  X_4+3636387 X_0^7 X_1^{10} X_3^2 X_4-4458696 X_0^8 X_1^8 X_2 X_3^2
  X_4+1045352 X_0^9 X_1^6 X_2^2 X_3^2 X_4+73740 X_0^9 X_1^7 X_3^3
  X_4-\frac{40950891}{32} X_0^6 X_1^{12} X_4^2+2844442 X_0^7 X_1^{10}
  X_2 X_4^2-1758348 X_0^8 X_1^8 X_2^2 X_4^2+278810 X_0^9 X_1^6 X_2^3
  X_4^2-278425 X_0^8 X_1^9 X_3 X_4^2+174738 X_0^9 X_1^7 X_2 X_3
  X_4^2+3991 X_0^9 X_1^8 X_4^3-\frac{762737935}{256} X_0^4 X_1^{15}
  X_5+\frac{390380533}{32} X_0^5 X_1^{13} X_2 X_5-\frac{281870691}{16}
  X_0^6 X_1^{11} X_2^2 X_5+\frac{43106169}{4} X_0^7 X_1^9 X_2^3
  X_5-2618948 X_0^8 X_1^7 X_2^4 X_5+180280 X_0^9 X_1^5 X_2^5
  X_5-\frac{33267771}{16} X_0^6 X_1^{12} X_3 X_5+\frac{37046737}{8}
  X_0^7 X_1^{10} X_2 X_3 X_5-2871173 X_0^8 X_1^8 X_2^2 X_3 X_5+457440
  X_0^9 X_1^6 X_2^3 X_3 X_5-\frac{453409}{2} X_0^8 X_1^9 X_3^2
  X_5+142820 X_0^9 X_1^7 X_2 X_3^2 X_5+\frac{1826639}{8} X_0^7
  X_1^{11} X_4 X_5-352021 X_0^8 X_1^9 X_2 X_4 X_5+111780 X_0^9 X_1^7
  X_2^2 X_4 X_5+19455 X_0^9 X_1^8 X_3 X_4 X_5-\frac{30073}{4} X_0^8
  X_1^{10} X_5^2+6070 X_0^9 X_1^8 X_2 X_5^2+\frac{40999155}{128} X_0^5
  X_1^{14} X_6-\frac{17461149}{16} X_0^6 X_1^{12} X_2
  X_6+\frac{4892157}{4} X_0^7 X_1^{10} X_2^2 X_6-510501 X_0^8 X_1^8
  X_2^3 X_6+62008 X_0^9 X_1^6 X_2^4 X_6+\frac{1237553}{8} X_0^7
  X_1^{11} X_3 X_6-239099 X_0^8 X_1^9 X_2 X_3 X_6+76220 X_0^9 X_1^7
  X_2^2 X_3 X_6+6607 X_0^9 X_1^8 X_3^2 X_6-\frac{25133}{2} X_0^8
  X_1^{10} X_4 X_6+10153 X_0^9 X_1^8 X_2 X_4 X_6+461 X_0^9 X_1^9 X_5
  X_6-\frac{410763}{16} X_0^6 X_1^{13} X_7+\frac{548633}{8} X_0^7
  X_1^{11} X_2 X_7-53281 X_0^8 X_1^9 X_2^2 X_7+11424 X_0^9 X_1^7 X_2^3
  X_7-\frac{14489}{2} X_0^8 X_1^{10} X_3 X_7+5866 X_0^9 X_1^8 X_2 X_3
  X_7+329 X_0^9 X_1^9 X_4 X_7+\frac{11759}{8} X_0^7 X_1^{12} X_8-2752
  X_0^8 X_1^{10} X_2 X_8+1118 X_0^9 X_1^8 X_2^2 X_8+164 X_0^9 X_1^9
  X_3 X_8-\frac{109}{2} X_0^8 X_1^{11} X_9+54 X_0^9 X_1^9 X_2
  X_9+X_0^9 X_1^{10} X_{10} $
\end{itemize}}

\subsection{The rational functions $v_n$ for $n\leq 10$}

\noindent {\footnotesize \begin{itemize}
\item $ v_0=1 $

\item $ v_1=\frac{X_1^2}{8 X_0^2} $

\item $ v_2=-\frac{13 X_1^4}{128 X_0^4}+\frac{X_1^2 X_2}{8 X_0^3} $

\item $ v_3=\frac{213 X_1^6}{1024 X_0^6}-\frac{27 X_1^4 X_2}{64
  X_0^5}+\frac{5 X_1^2 X_2^2}{32 X_0^4}+\frac{X_1^3 X_3}{16 X_0^4} $

\item $ v_4=-\frac{21949 X_1^8}{32768 X_0^8}+\frac{1907 X_1^6
  X_2}{1024 X_0^7}-\frac{351 X_1^4 X_2^2}{256 X_0^6}+\frac{7 X_1^2
  X_2^3}{32 X_0^5}-\frac{45 X_1^5 X_3}{128 X_0^6}+\frac{9 X_1^3 X_2
  X_3}{32 X_0^5}+\frac{X_1^4 X_4}{32 X_0^5} $

\item $ v_5=\frac{758751 X_1^{10}}{262144 X_0^{10}}-\frac{83271 X_1^8
  X_2}{8192 X_0^9}+\frac{45671 X_1^6 X_2^2}{4096 X_0^8}-\frac{261
  X_1^4 X_2^3}{64 X_0^7}+\frac{21 X_1^2 X_2^4}{64 X_0^6}+\frac{4327
  X_1^7 X_3}{2048 X_0^8}-\frac{837 X_1^5 X_2 X_3}{256 X_0^7}+\frac{29
  X_1^3 X_2^2 X_3}{32 X_0^6}+\frac{19 X_1^4 X_3^2}{128 X_0^6}-\frac{67
  X_1^6 X_4}{256 X_0^7}+\frac{7 X_1^4 X_2 X_4}{32 X_0^6}+\frac{X_1^5
  X_5}{64 X_0^6} $

\item $ v_6=-\frac{65215177 X_1^{12}}{4194304 X_0^{12}}+\frac{17242595
  X_1^{10} X_2}{262144 X_0^{11}}-\frac{3131859 X_1^8 X_2^2}{32768
  X_0^{10}}+\frac{228839 X_1^6 X_2^3}{4096 X_0^9}-\frac{23801 X_1^4
  X_2^4}{2048 X_0^8}+\frac{33 X_1^2 X_2^5}{64 X_0^7}-\frac{235941
  X_1^9 X_3}{16384 X_0^{10}}+\frac{134103 X_1^7 X_2 X_3}{4096
  X_0^9}-\frac{9965 X_1^5 X_2^2 X_3}{512 X_0^8}+\frac{327 X_1^3 X_2^3
  X_3}{128 X_0^7}-\frac{2249 X_1^6 X_3^2}{1024 X_0^8}+\frac{171 X_1^4
  X_2 X_3^2}{128 X_0^7}+\frac{8427 X_1^8 X_4}{4096 X_0^9}-\frac{211
  X_1^6 X_2 X_4}{64 X_0^8}+\frac{129 X_1^4 X_2^2 X_4}{128
  X_0^7}+\frac{17 X_1^5 X_3 X_4}{64 X_0^7}-\frac{93 X_1^7 X_5}{512
  X_0^8}+\frac{5 X_1^5 X_2 X_5}{32 X_0^7}+\frac{X_1^6 X_6}{128 X_0^7}
  $

\item $ v_7=\frac{3334526813 X_1^{14}}{33554432
  X_0^{14}}-\frac{1029318213 X_1^{12} X_2}{2097152
  X_0^{13}}+\frac{929166495 X_1^{10} X_2^2}{1048576
  X_0^{12}}-\frac{11726215 X_1^8 X_2^3}{16384 X_0^{11}}+\frac{4148465
  X_1^6 X_2^4}{16384 X_0^{10}}-\frac{33117 X_1^4 X_2^5}{1024
  X_0^9}+\frac{429 X_1^2 X_2^6}{512 X_0^8}+\frac{58200075 X_1^{11}
  X_3}{524288 X_0^{12}}-\frac{10863477 X_1^9 X_2 X_3}{32768
  X_0^{11}}+\frac{627551 X_1^7 X_2^2 X_3}{2048 X_0^{10}}-\frac{48377
  X_1^5 X_2^3 X_3}{512 X_0^9}+\frac{1719 X_1^3 X_2^4 X_3}{256
  X_0^8}+\frac{433577 X_1^8 X_3^2}{16384 X_0^{10}}-\frac{533 X_1^6 X_2
  X_3^2}{16 X_0^9}+\frac{1917 X_1^4 X_2^2 X_3^2}{256 X_0^8}+\frac{45
  X_1^5 X_3^3}{64 X_0^8}-\frac{563479 X_1^{10} X_4}{32768
  X_0^{11}}+\frac{164759 X_1^8 X_2 X_4}{4096 X_0^{10}}-\frac{13033
  X_1^6 X_2^2 X_4}{512 X_0^9}+\frac{247 X_1^4 X_2^3 X_4}{64
  X_0^8}-\frac{1265 X_1^7 X_3 X_4}{256 X_0^9}+\frac{103 X_1^5 X_2 X_3
  X_4}{32 X_0^8}+\frac{69 X_1^6 X_4^2}{512 X_0^8}+\frac{14831 X_1^9
  X_5}{8192 X_0^{10}}-\frac{1525 X_1^7 X_2 X_5}{512 X_0^9}+\frac{125
  X_1^5 X_2^2 X_5}{128 X_0^8}+\frac{55 X_1^6 X_3 X_5}{256
  X_0^8}-\frac{123 X_1^8 X_6}{1024 X_0^9}+\frac{27 X_1^6 X_2 X_6}{256
  X_0^8}+\frac{X_1^7 X_7}{256 X_0^8} $

\item $ v_8=-\frac{1577687661981 X_1^{16}}{2147483648
  X_0^{16}}+\frac{139033974807 X_1^{14} X_2}{33554432
  X_0^{15}}-\frac{74888368881 X_1^{12} X_2^2}{8388608
  X_0^{14}}+\frac{9664024681 X_1^{10} X_2^3}{1048576
  X_0^{13}}-\frac{1227426451 X_1^8 X_2^4}{262144
  X_0^{12}}+\frac{17643971 X_1^6 X_2^5}{16384 X_0^{11}}-\frac{363401
  X_1^4 X_2^6}{4096 X_0^{10}}+\frac{715 X_1^2 X_2^7}{512
  X_0^9}-\frac{4018672971 X_1^{13} X_3}{4194304
  X_0^{14}}+\frac{3707614707 X_1^{11} X_2 X_3}{1048576
  X_0^{13}}-\frac{291667191 X_1^9 X_2^2 X_3}{65536
  X_0^{12}}+\frac{36805607 X_1^7 X_2^3 X_3}{16384
  X_0^{11}}-\frac{835327 X_1^5 X_2^4 X_3}{2048 X_0^{10}}+\frac{8665
  X_1^3 X_2^5 X_3}{512 X_0^9}-\frac{40882029 X_1^{10} X_3^2}{131072
  X_0^{12}}+\frac{9831307 X_1^8 X_2 X_3^2}{16384
  X_0^{11}}-\frac{1228903 X_1^6 X_2^2 X_3^2}{4096 X_0^{10}}+\frac{4301
  X_1^4 X_2^3 X_3^2}{128 X_0^9}-\frac{41639 X_1^7 X_3^3}{2048
  X_0^{10}}+\frac{5131 X_1^5 X_2 X_3^3}{512 X_0^9}+\frac{164027379
  X_1^{12} X_4}{1048576 X_0^{13}}-\frac{7839267 X_1^{10} X_2
  X_4}{16384 X_0^{12}}+\frac{7581045 X_1^8 X_2^2 X_4}{16384
  X_0^{11}}-\frac{159407 X_1^6 X_2^3 X_4}{1024 X_0^{10}}+\frac{6821
  X_1^4 X_2^4 X_4}{512 X_0^9}+\frac{583489 X_1^9 X_3 X_4}{8192
  X_0^{11}}-\frac{96323 X_1^7 X_2 X_3 X_4}{1024 X_0^{10}}+\frac{3005
  X_1^5 X_2^2 X_3 X_4}{128 X_0^9}+\frac{703 X_1^6 X_3^2 X_4}{256
  X_0^9}-\frac{12591 X_1^8 X_4^2}{4096 X_0^{10}}+\frac{539 X_1^6 X_2
  X_4^2}{256 X_0^9}-\frac{1195773 X_1^{11} X_5}{65536
  X_0^{12}}+\frac{89479 X_1^9 X_2 X_5}{2048 X_0^{11}}-\frac{59515
  X_1^7 X_2^2 X_5}{2048 X_0^{10}}+\frac{1255 X_1^5 X_2^3 X_5}{256
  X_0^9}-\frac{10109 X_1^8 X_3 X_5}{2048 X_0^{10}}+\frac{1735 X_1^6
  X_2 X_3 X_5}{512 X_0^9}+\frac{125 X_1^7 X_4 X_5}{512
  X_0^9}+\frac{24259 X_1^{10} X_6}{16384 X_0^{11}}-\frac{5093 X_1^8
  X_2 X_6}{2048 X_0^{10}}+\frac{55 X_1^6 X_2^2 X_6}{64 X_0^9}+\frac{83
  X_1^7 X_3 X_6}{512 X_0^9}-\frac{157 X_1^9 X_7}{2048
  X_0^{10}}+\frac{35 X_1^7 X_2 X_7}{512 X_0^9}+\frac{X_1^8 X_8}{512
  X_0^9} $

\item $ v_9=\frac{105827327217339 X_1^{18}}{17179869184
  X_0^{18}}-\frac{10475370519711 X_1^{16} X_2}{268435456
  X_0^{17}}+\frac{13099696441515 X_1^{14} X_2^2}{134217728
  X_0^{16}}-\frac{128691280227 X_1^{12} X_2^3}{1048576
  X_0^{15}}+\frac{85807052791 X_1^{10} X_2^4}{1048576
  X_0^{14}}-\frac{918217321 X_1^8 X_2^5}{32768
  X_0^{13}}+\frac{71881751 X_1^6 X_2^6}{16384 X_0^{12}}-\frac{123491
  X_1^4 X_2^7}{512 X_0^{11}}+\frac{2431 X_1^2 X_2^8}{1024
  X_0^{10}}+\frac{615304657155 X_1^{15} X_3}{67108864
  X_0^{16}}-\frac{337447157091 X_1^{13} X_2 X_3}{8388608
  X_0^{15}}+\frac{67268613445 X_1^{11} X_2^2 X_3}{1048576
  X_0^{14}}-\frac{1494135783 X_1^9 X_2^3 X_3}{32768
  X_0^{13}}+\frac{116498187 X_1^7 X_2^4 X_3}{8192
  X_0^{12}}-\frac{1671363 X_1^5 X_2^5 X_3}{1024 X_0^{11}}+\frac{83
  X_1^3 X_2^6 X_3}{2 X_0^{10}}+\frac{15885206489 X_1^{12}
  X_3^2}{4194304 X_0^{14}}-\frac{643914029 X_1^{10} X_2 X_3^2}{65536
  X_0^{13}}+\frac{128491287 X_1^8 X_2^2 X_3^2}{16384
  X_0^{12}}-\frac{1070609 X_1^6 X_2^3 X_3^2}{512 X_0^{11}}+\frac{67719
  X_1^4 X_2^4 X_3^2}{512 X_0^{10}}+\frac{6831329 X_1^9 X_3^3}{16384
  X_0^{12}}-\frac{113183 X_1^7 X_2 X_3^3}{256 X_0^{11}}+\frac{10709
  X_1^5 X_2^2 X_3^3}{128 X_0^{10}}+\frac{10685 X_1^6 X_3^4}{2048
  X_0^{10}}-\frac{13029980589 X_1^{14} X_4}{8388608
  X_0^{15}}+\frac{6135385481 X_1^{12} X_2 X_4}{1048576
  X_0^{14}}-\frac{249975727 X_1^{10} X_2^2 X_4}{32768
  X_0^{13}}+\frac{33517905 X_1^8 X_2^3 X_4}{8192
  X_0^{12}}-\frac{849053 X_1^6 X_2^4 X_4}{1024 X_0^{11}}+\frac{11025
  X_1^4 X_2^5 X_4}{256 X_0^{10}}-\frac{498703 X_1^{11} X_3 X_4}{512
  X_0^{13}}+\frac{15923461 X_1^9 X_2 X_3 X_4}{8192
  X_0^{12}}-\frac{533005 X_1^7 X_2^2 X_3 X_4}{512
  X_0^{11}}+\frac{34263 X_1^5 X_2^3 X_3 X_4}{256
  X_0^{10}}-\frac{378739 X_1^8 X_3^2 X_4}{4096 X_0^{11}}+\frac{25269
  X_1^6 X_2 X_3^2 X_4}{512 X_0^{10}}+\frac{3420375 X_1^{10}
  X_4^2}{65536 X_0^{12}}-\frac{293173 X_1^8 X_2 X_4^2}{4096
  X_0^{11}}+\frac{39571 X_1^6 X_2^2 X_4^2}{2048 X_0^{10}}+\frac{3957
  X_1^7 X_3 X_4^2}{1024 X_0^{10}}+\frac{406556795 X_1^{13}
  X_5}{2097152 X_0^{14}}-\frac{39679391 X_1^{11} X_2 X_5}{65536
  X_0^{13}}+\frac{1246277 X_1^9 X_2^2 X_5}{2048 X_0^{12}}-\frac{112561
  X_1^7 X_2^3 X_5}{512 X_0^{11}}+\frac{693 X_1^5 X_2^4 X_5}{32
  X_0^{10}}+\frac{2761245 X_1^{10} X_3 X_5}{32768
  X_0^{12}}-\frac{474493 X_1^8 X_2 X_3 X_5}{4096 X_0^{11}}+\frac{32151
  X_1^6 X_2^2 X_3 X_5}{1024 X_0^{10}}+\frac{801 X_1^7 X_3^2 X_5}{256
  X_0^{10}}-\frac{27601 X_1^9 X_4 X_5}{4096 X_0^{11}}+\frac{2461 X_1^7
  X_2 X_4 X_5}{512 X_0^{10}}+\frac{251 X_1^8 X_5^2}{2048
  X_0^{10}}-\frac{2323479 X_1^{12} X_6}{131072 X_0^{13}}+\frac{1418337
  X_1^{10} X_2 X_6}{32768 X_0^{12}}-\frac{122777 X_1^8 X_2^2 X_6}{4096
  X_0^{11}}+\frac{5619 X_1^6 X_2^3 X_6}{1024 X_0^{10}}-\frac{18515
  X_1^9 X_3 X_6}{4096 X_0^{11}}+\frac{207 X_1^7 X_2 X_3 X_6}{64
  X_0^{10}}+\frac{209 X_1^8 X_4 X_6}{1024 X_0^{10}}+\frac{37527
  X_1^{11} X_7}{32768 X_0^{12}}-\frac{8011 X_1^9 X_2 X_7}{4096
  X_0^{11}}+\frac{721 X_1^7 X_2^2 X_7}{1024 X_0^{10}}+\frac{119 X_1^8
  X_3 X_7}{1024 X_0^{10}}-\frac{195 X_1^{10} X_8}{4096
  X_0^{11}}+\frac{11 X_1^8 X_2 X_8}{256 X_0^{10}}+\frac{X_1^9
  X_9}{1024 X_0^{10}} $

\item $ v_{10}=-\frac{15869874201854147 X_1^{20}}{274877906944
  X_0^{20}}+\frac{6968235530124323 X_1^{18} X_2}{17179869184
  X_0^{19}}-\frac{1239226092495387 X_1^{16} X_2^2}{1073741824
  X_0^{18}}+\frac{229652647583487 X_1^{14} X_2^3}{134217728
  X_0^{17}}-\frac{95081095828539 X_1^{12} X_2^4}{67108864
  X_0^{16}}+\frac{684656180415 X_1^{10} X_2^5}{1048576
  X_0^{15}}-\frac{20650143993 X_1^8 X_2^6}{131072
  X_0^{14}}+\frac{284065323 X_1^6 X_2^7}{16384
  X_0^{13}}-\frac{10679265 X_1^4 X_2^8}{16384 X_0^{12}}+\frac{4199
  X_1^2 X_2^9}{1024 X_0^{11}}-\frac{51777526098357 X_1^{17}
  X_3}{536870912 X_0^{18}}+\frac{65745451959171 X_1^{15} X_2
  X_3}{134217728 X_0^{17}}-\frac{15872302764663 X_1^{13} X_2^2
  X_3}{16777216 X_0^{16}}+\frac{3658886479929 X_1^{11} X_2^3
  X_3}{4194304 X_0^{15}}-\frac{51831619715 X_1^9 X_2^4 X_3}{131072
  X_0^{14}}+\frac{167190241 X_1^7 X_2^5 X_3}{2048
  X_0^{13}}-\frac{12693319 X_1^5 X_2^6 X_3}{2048 X_0^{12}}+\frac{51123
  X_1^3 X_2^7 X_3}{512 X_0^{11}}-\frac{1620053231895 X_1^{14}
  X_3^2}{33554432 X_0^{16}}+\frac{661505567877 X_1^{12} X_2
  X_3^2}{4194304 X_0^{15}}-\frac{91656289433 X_1^{10} X_2^2
  X_3^2}{524288 X_0^{14}}+\frac{2541271583 X_1^8 X_2^3 X_3^2}{32768
  X_0^{13}}-\frac{12767375 X_1^6 X_2^4 X_3^2}{1024
  X_0^{12}}+\frac{244533 X_1^4 X_2^5 X_3^2}{512
  X_0^{11}}-\frac{2007604153 X_1^{11} X_3^3}{262144
  X_0^{14}}+\frac{835306623 X_1^9 X_2 X_3^3}{65536
  X_0^{13}}-\frac{351417 X_1^7 X_2^2 X_3^3}{64 X_0^{12}}+\frac{68455
  X_1^5 X_2^3 X_3^3}{128 X_0^{11}}-\frac{8364263 X_1^8 X_3^4}{32768
  X_0^{12}}+\frac{108009 X_1^6 X_2 X_3^4}{1024
  X_0^{11}}+\frac{2253822123327 X_1^{16} X_4}{134217728
  X_0^{17}}-\frac{19664047737 X_1^{14} X_2 X_4}{262144
  X_0^{16}}+\frac{516379192119 X_1^{12} X_2^2 X_4}{4194304
  X_0^{15}}-\frac{12007772903 X_1^{10} X_2^3 X_4}{131072
  X_0^{14}}+\frac{252344843 X_1^8 X_2^4 X_4}{8192
  X_0^{13}}-\frac{4128751 X_1^6 X_2^5 X_4}{1024 X_0^{12}}+\frac{68169
  X_1^4 X_2^6 X_4}{512 X_0^{11}}+\frac{28183020879 X_1^{13} X_3
  X_4}{2097152 X_0^{15}}-\frac{4706661189 X_1^{11} X_2 X_3 X_4}{131072
  X_0^{14}}+\frac{123445225 X_1^9 X_2^2 X_3 X_4}{4096
  X_0^{13}}-\frac{8987261 X_1^7 X_2^3 X_3 X_4}{1024
  X_0^{12}}+\frac{168391 X_1^5 X_2^4 X_3 X_4}{256
  X_0^{11}}+\frac{17706977 X_1^{10} X_3^2 X_4}{8192
  X_0^{13}}-\frac{9925559 X_1^8 X_2 X_3^2 X_4}{4096
  X_0^{12}}+\frac{130669 X_1^6 X_2^2 X_3^2 X_4}{256
  X_0^{11}}+\frac{18435 X_1^7 X_3^3 X_4}{512 X_0^{11}}-\frac{429493939
  X_1^{12} X_4^2}{524288 X_0^{14}}+\frac{27603415 X_1^{10} X_2
  X_4^2}{16384 X_0^{13}}-\frac{976309 X_1^8 X_2^2 X_4^2}{1024
  X_0^{12}}+\frac{139405 X_1^6 X_2^3 X_4^2}{1024
  X_0^{11}}-\frac{615585 X_1^9 X_3 X_4^2}{4096 X_0^{12}}+\frac{87369
  X_1^7 X_2 X_3 X_4^2}{1024 X_0^{11}}+\frac{3991 X_1^8 X_4^3}{2048
  X_0^{11}}-\frac{36897440619 X_1^{15} X_5}{16777216
  X_0^{16}}+\frac{8851281549 X_1^{13} X_2 X_5}{1048576
  X_0^{15}}-\frac{2977346665 X_1^{11} X_2^2 X_5}{262144
  X_0^{14}}+\frac{210413865 X_1^9 X_2^3 X_5}{32768
  X_0^{13}}-\frac{729995 X_1^7 X_2^4 X_5}{512 X_0^{12}}+\frac{22535
  X_1^5 X_2^5 X_5}{256 X_0^{11}}-\frac{348231785 X_1^{12} X_3
  X_5}{262144 X_0^{14}}+\frac{179473365 X_1^{10} X_2 X_3 X_5}{65536
  X_0^{13}}-\frac{1592603 X_1^8 X_2^2 X_3 X_5}{1024
  X_0^{12}}+\frac{14295 X_1^6 X_2^3 X_3 X_5}{64
  X_0^{11}}-\frac{1001561 X_1^9 X_3^2 X_5}{8192 X_0^{12}}+\frac{35705
  X_1^7 X_2 X_3^2 X_5}{512 X_0^{11}}+\frac{8763587 X_1^{11} X_4
  X_5}{65536 X_0^{13}}-\frac{379 X_1^9 X_2 X_4 X_5}{2
  X_0^{12}}+\frac{27945 X_1^7 X_2^2 X_4 X_5}{512 X_0^{11}}+\frac{19455
  X_1^8 X_3 X_4 X_5}{2048 X_0^{11}}-\frac{65921 X_1^{10} X_5^2}{16384
  X_0^{12}}+\frac{3035 X_1^8 X_2 X_5^2}{1024 X_0^{11}}+\frac{913545795
  X_1^{14} X_6}{4194304 X_0^{15}}-\frac{181589513 X_1^{12} X_2
  X_6}{262144 X_0^{14}}+\frac{23576079 X_1^{10} X_2^2 X_6}{32768
  X_0^{13}}-\frac{282231 X_1^8 X_2^3 X_6}{1024 X_0^{12}}+\frac{7751
  X_1^6 X_2^4 X_6}{256 X_0^{11}}+\frac{5926069 X_1^{11} X_3 X_6}{65536
  X_0^{13}}-\frac{526593 X_1^9 X_2 X_3 X_6}{4096 X_0^{12}}+\frac{19055
  X_1^7 X_2^2 X_3 X_6}{512 X_0^{11}}+\frac{6607 X_1^8 X_3^2 X_6}{2048
  X_0^{11}}-\frac{55075 X_1^{10} X_4 X_6}{8192 X_0^{12}}+\frac{10153
  X_1^8 X_2 X_4 X_6}{2048 X_0^{11}}+\frac{461 X_1^9 X_5 X_6}{2048
  X_0^{11}}-\frac{4214341 X_1^{13} X_7}{262144 X_0^{14}}+\frac{2615381
  X_1^{11} X_2 X_7}{65536 X_0^{13}}-\frac{58513 X_1^9 X_2^2 X_7}{2048
  X_0^{12}}+\frac{357 X_1^7 X_2^3 X_7}{64 X_0^{11}}-\frac{31717
  X_1^{10} X_3 X_7}{8192 X_0^{12}}+\frac{2933 X_1^8 X_2 X_3 X_7}{1024
  X_0^{11}}+\frac{329 X_1^9 X_4 X_7}{2048 X_0^{11}}+\frac{55547
  X_1^{12} X_8}{65536 X_0^{13}}-\frac{6011 X_1^{10} X_2 X_8}{4096
  X_0^{12}}+\frac{559 X_1^8 X_2^2 X_8}{1024 X_0^{11}}+\frac{41 X_1^9
  X_3 X_8}{512 X_0^{11}}-\frac{237 X_1^{11} X_9}{8192
  X_0^{12}}+\frac{27 X_1^9 X_2 X_9}{1024 X_0^{11}}+\frac{X_1^{10}
  X_{10}}{2048 X_0^{11}} $
\end{itemize}}

\end{document}